\numberwithin{equation}{section}
\theoremstyle{plain}
\declaretheorem[name=Theorem,numberwithin=section]{Th}
\declaretheorem[name=Lemma,numberwithin=section]{Lem}
\theoremstyle{definition}
\newtheorem{Def}{Definition}[section]
\newtheorem{Not}{Notation}[section]
\newtheorem{Ass}{Assumption}[section]
\theoremstyle{remark}
 \newtheorem{Rem}{Remark}[section]
\DeclareMathOperator{\cov}{Cov}                   
\DeclareMathOperator{\var}{Var}                   
\DeclareMathOperator{\E}{E}                       
\renewcommand{\P}{\mathit{P}}
\renewcommand{\i}{{\,\text{\it \@i}\,}}
\newcommand{\rX}{{\color{red}{\X}}}
\newcommand{\rY}{{\color{red}{\Y}}}
\newcommand{\bX}{{\color{blue}{\X}}}
\newcommand{\bY}{{\color{blue}{\Y}}}
\newcommand{\Tr}{{{\rm Trace}}}
\newcommand{\Normal}{\mathbf{N}}
\newcommand{\I}{{\rm I}}
\newcommand{\X}{{\mathbf{X}}}
\newcommand{\Y}{{\mathbf{Y}}}
\newcommand{\1}{{\mathbf{1}}}
\renewcommand{\c}{{\textbf{\textit{c}}}}
\newcommand{\x}{{\textbf{\textit{x}}}}
\newcommand{\y}{{\textbf{\textit{y}}}}
\begin{document}

\begin{frontmatter}
\title{Statistical applications of random matrix theory:\\  comparison of two populations III\thanksref{T1}}
\runtitle{Comparison of two populations}
\thankstext{T1}{This paper is based on the PhD thesis of R\'emy Mari\'etan and is the final part III.}

\begin{aug}
\author{\fnms{R\'emy} \snm{Mari\'etan}\thanksref{t1}\ead[label=e1]{remy.marietan@epfl.alumni.ch}}
\and
\author{\fnms{Stephan} \snm{Morgenthaler}\thanksref{t2}\ead[label=e2]{stephan.morgenthaler@epfl.ch}}

\thankstext{t1}{PhD Student at EPFL, Member of Staff at the Swiss Federal Statistics Office}
\thankstext{t2}{Professor at EPFL in the Mathematics department}
\runauthor{R\'emy Mari\'etan and Stephan Morgenthaler}

\affiliation{\'Ecole Polytechnique F\'ed\'eral de Lausanne, EPFL}

\address{Department of Mathematics\\
\'Ecole Polytechnique F\'ed\'eral de Lausanne\\1015 Lausanne\\
\phantom{E-mail:\ }\printead{e1},\printead*{e2}}
\end{aug}

\begin{abstract}
This paper investigates a statistical procedure for testing the equality of two independently estimated covariance matrices when the number of potentially dependent data vectors is large and proportional to the size of the vectors, that is, the number of variables.
Inspired by the spike models used in random matrix theory, we concentrate on the largest eigenvalues of the matrices in order to determine significant differences. To avoid false rejections we must guard against residual spikes and need a sufficiently precise description of the properties of the largest eigenvalues under the null hypothesis. 

In this paper, we extend \cite{mainarticle} for perturbation of order $1$ and \cite{mainarticle2} studying simpler statistic. The residual spike introduce in the first paper is investigated and leads to a statistic that results in a good test of equality of two populations. 

Simulations show that this new test does not rely on some hypotheses that were necessary for the proofs and in the second paper.
\end{abstract}


\begin{keyword}
\kwd{High dimension}
\kwd{equality test of two covariance matrices}
\kwd{Random matrix theory}
\kwd{residual spike}
\kwd{spike model}
\kwd{dependent data}
\kwd{eigenvector}
\kwd{eigenvalue}
\end{keyword}

\end{frontmatter}

\section{Introduction} 

In the last two decades, random matrix theory (RMT) has produced numerous results that offer a better understanding of large random matrices. These advances have enabled interesting applications in communication theory and even though it can potentially contribute to many other data-rich domains such as brain imaging or genetic research, it has rarely been applied.
The main barrier to the adoption of RMT may be the lack of concrete statistical results from the probability side. The straightforward adaptation of classical multivariate theory to high dimensions can sometimes be achieved, but such  procedures are only valid under strict assumptions about the data such as normality or independence. Even minor differences between the model assumptions and the actual data lead to catastrophic results and such procedures also often do not have enough power.

This paper proposes a statistical procedure for testing the equality of two covariance matrices when the number of potentially dependent data vectors $n$ and the number of variables $m$ are large. RMT denotes the investigation of estimates of covariance matrices $\hat{\Sigma}$ or more precisely their eigenvalues and eigenvectors when both $n$ and $m$ tend to infinity with $\lim \frac{m}{n}=c>0$.
When $m$ is finite and $n$ tends to infinity the behaviour of the random matrix is well known and presented in the books of \cite{multi3}, \cite{multi} and \cite{multi2} (or its original version \cite{multi22}).
In the RMT case, the behaviour is more complex, but many results of interest are known. \cite{Alice}, \cite{Tao} and more recently \cite{bookrecent} contain comprehensive introductions to RMT and \cite{Appliedbook} covers the case of empirical (estimated) covariance matrices.

Although the existing theory builds a good intuition of the behaviour of these matrices, it does not provide enough of a basis to construct a test with good power, which is robust with respect to the assumptions.
Inspired by the spike models, we extend the residual spikes introduced in \cite{mainarticle} and provide a description of the behaviour of this statistic under a null hypothesis when the perturbation is of order $k$. These results enable the user to test the equality of two populations as well as other null hypotheses such as the independence of two sets of variables.
This paper can be seen as a complex particular case of \cite{mainarticle2}. However simulations show that equality between eigenvalues of the perturbation are not necessary for this complex statistic and moreover they show good robustness against perturbations of distributions.\\

The remainder of the paper is organized as follows. In the next section, we develop the test statistic and discuss the problems associated with high dimensions. Then we present the main theorem \ref{TH=Main}. The proof itself is technical and presented in Appendix \ref{appendixproof}. The last section contains an example of an application. 

\section{Test statistic}\label{section:teststat}

We compare the spectral properties of two covariance estimators $\hat{\Sigma}_X$ and $\hat{\Sigma}_Y$ of dimension $m\times m$ which can be represented as
\begin{eqnarray*}
\hat{\Sigma}_X=P_X^{1/2} W_X P_X^{1/2} \text{ and } \hat{\Sigma}_Y=P_Y^{1/2} W_Y P_Y^{1/2}.
\end{eqnarray*}
In this equation, $W_X$ and $W_Y$ are of the form
\begin{eqnarray*}
W_X=O_X \Lambda_X O_X \text{ and } W_Y=O_Y \Lambda_Y O_Y,
\end{eqnarray*}
with $O_X$ and $O_Y$ being independent unit orthonormal random matrices whose distributions are invariant under rotations, while  $\Lambda_X$ and $\Lambda_Y$ are independent positive random diagonal matrices, independent of  $O_X, O_Y$ with trace equal to m and a bound on the diagonal elements. Note that the usual RMT assumption, $\frac{m}{n}=c$ is replaced by this bound! The (multiplicative) spike model of order $k$ determines the form of 
$P_X= \I_m + \sum_{s=1}^k (\theta_{X,s}-1) u_{X,s} u_{X,s}^t$ and $P_Y= \I_m+ \sum_{s=1}^k(\theta_{Y,s}-1) u_{Y,s} u_{Y,s}^t$ where $\left\langle u_{X,s} ,u_{X,r}  \right\rangle=\left\langle u_{Y,s} ,u_{Y,r}  \right\rangle=\delta_{s,r}$. 

Our results will apply to any two centered data matrices $\X \in \mathbb{R}^{m\times n_X}$ and $\Y \in \mathbb{R}^{m\times n_Y}$ which are such that 
\begin{eqnarray*}
\hat{\Sigma}_X= \frac{1}{n_X}\X \X^t \text{ and } \hat{\Sigma}_Y= \frac{1}{n_Y}\Y \Y^t
\end{eqnarray*}
and can be decomposed in the manner indicated. This is the basic assumption concerning the covariance matrices. We will assume throughout that $n_X\geq n_Y$. Because $O_X$ and $O_Y$ are independent and invariant by rotation we can assume without loss of generality that for $s=1,2,...,k$, $u_{X,s}=e_s$ as in \cite{deformedRMT}. Under the null hypothesis we have $P_X=P_Y$  and we use the simplified notation $P_k$ for both matrices where for $s=1,2,...,k$, $\theta_{X,s}=\theta_{Y,s}=\theta_s$ and $u_{X,s}=u_{Y,s}(=e_s)$. 

To test $H_0:P_k=P_X=P_Y$ against $H_1:P_X\neq P_Y$ it is natural to consider the extreme eigenvalues of 
\begin{eqnarray}
\hat{\Sigma}_X^{-1/2} \hat{\Sigma}_Y \hat{\Sigma}_X^{-1/2}\,.\label{eq:base1}
\end{eqnarray}
We could also swap the subscripts, but it turns our to be preferable to use the inversion on the matrix with larger sample size. 

The distributional approximations we will refer to are based on RMT, that is, they are derived by embedding a given data problem into a sequence of random matrices for which both $n$ and $m$ tend to infinity such that $m/n$ tends to a positive constant $c$. The most celebrated results of RMT describe the almost sure weak convergence of the empirical distribution of the eigenvalues (spectral distribution) to a non-random compactly supported limit law. An extension of this theory to the "Spike Model" suggests that we should modify $\hat{\Sigma}$ because estimates of isolated eigenvalues derived from the usual estimates are asymptotically biased. The following corrections will be used. 
\begin{Def} \label{Def=unbiased} 
Suppose $\hat{\Sigma}$ is of the form described at the start of the section. 
The \textbf{unbiased estimator of }$\theta_s$ for $s=1,...,k$ is defined as 
\begin{equation} \hat{\hat{\theta}}_s=1+\frac{1}{\frac{1}{m-k} \sum_{i=k+1}^{m} \frac{\hat{\lambda}_{\hat{\Sigma},i}}{\hat{\theta}_s-\hat{\lambda}_{\hat{\Sigma},i}}}\,,\label{eq:corrlambda}
\end{equation}
where $\hat{\lambda}_{\hat{\Sigma},i}$ is the $i^{\text{th}}$ largest eigenvalue of $\hat{\Sigma}$. When $\hat{\Sigma}=P_k^{1/2}WP_k^{1/2}$ as above, it is asymptotically equivalent to replace $\frac{1}{m-k}\sum_{i=k+1}^{m} \frac{\hat{\lambda}_{\hat{\Sigma},i}}{\hat{\theta}-\hat{\lambda}_{\hat{\Sigma},i}}$ in the denominator by $\frac{1}{m}\sum_{i=1}^{m} \frac{\hat{\lambda}_{W,i}}{\hat{\theta}-\hat{\lambda}_W,i}$.\\
Suppose that $\hat{u}_s$ is the eigenvector corresponding to $\hat{\theta}_s$, then the \textbf{filtered estimated covariance matrix } is defined as 
\begin{equation}
\hat{\hat{\Sigma}}= \I_m+\sum_{s=1}^k (\hat{\hat{\theta}}_s-1) \hat{u}_s \hat{u}_s^t\,.\label{eq:corrSigma}
\end{equation}
\end{Def}

The matrix (\ref{eq:base1}) which serves as the basis for the test then becomes either
\begin{equation}
\hat{\hat{\Sigma}}_X^{-1/2} \hat{\hat{\Sigma}}_Y \hat{\hat{\Sigma}}_X^{-1/2} 
\text{ or }\hat{\hat{\Sigma}}_X^{-1/2} \hat{\Sigma}_Y \hat{\hat{\Sigma}}_X^{-1/2} \,.\label{eq:base2}
\end{equation}

In the particular case where $\X$ and $\Y$ have independent jointly normal columns vector with constant variance $P_k=P_X=P_Y$, the distribution of the spectrum of the second of the above matrices is approximately Marcenko-Pastur distributed (see \cite{deformed}). This follows because $\hat{\hat{\Sigma}}_X$ is a finite perturbation. However, because of the non-consistency of the eigenvectors presented in \cite{deformedRMT}, we may observe \textbf{residual spikes} in the spectra, as shown in Figure \ref{fig=spike}. Thus, even if the two random matrices are based on the same perturbation, we see some spikes outside the bulk. This observation is worse in the last plot because four spikes fall outside the bulk even if there is actually no difference! 
This poses a fundamental problem for our test, because we must be able to distinguish the spikes indicative of a true difference from the residual spikes.  These remarks lead to the following definition.
\begin{Def}
The \textbf{residual spikes} are the isolated eigenvalues of 
$$\hat{\hat{\Sigma}}_X^{-1/2} \hat{\hat{\Sigma}}_Y \hat{\hat{\Sigma}}_X^{-1/2}\text{ or of }
\hat{\hat{\Sigma}}_X^{-1/2} \hat{\Sigma}_Y \hat{\hat{\Sigma}}_X^{-1/2}$$
when $P_X=P_Y$ (under the null hypothesis).
The \textbf{residual zone}  is the interval where a residual spike can fall asymptotically. 
\end{Def}

\begin{figure}[htbp] 
\centering
\begin{tabular}{cc}
 \includegraphics[width=0.26\paperwidth]{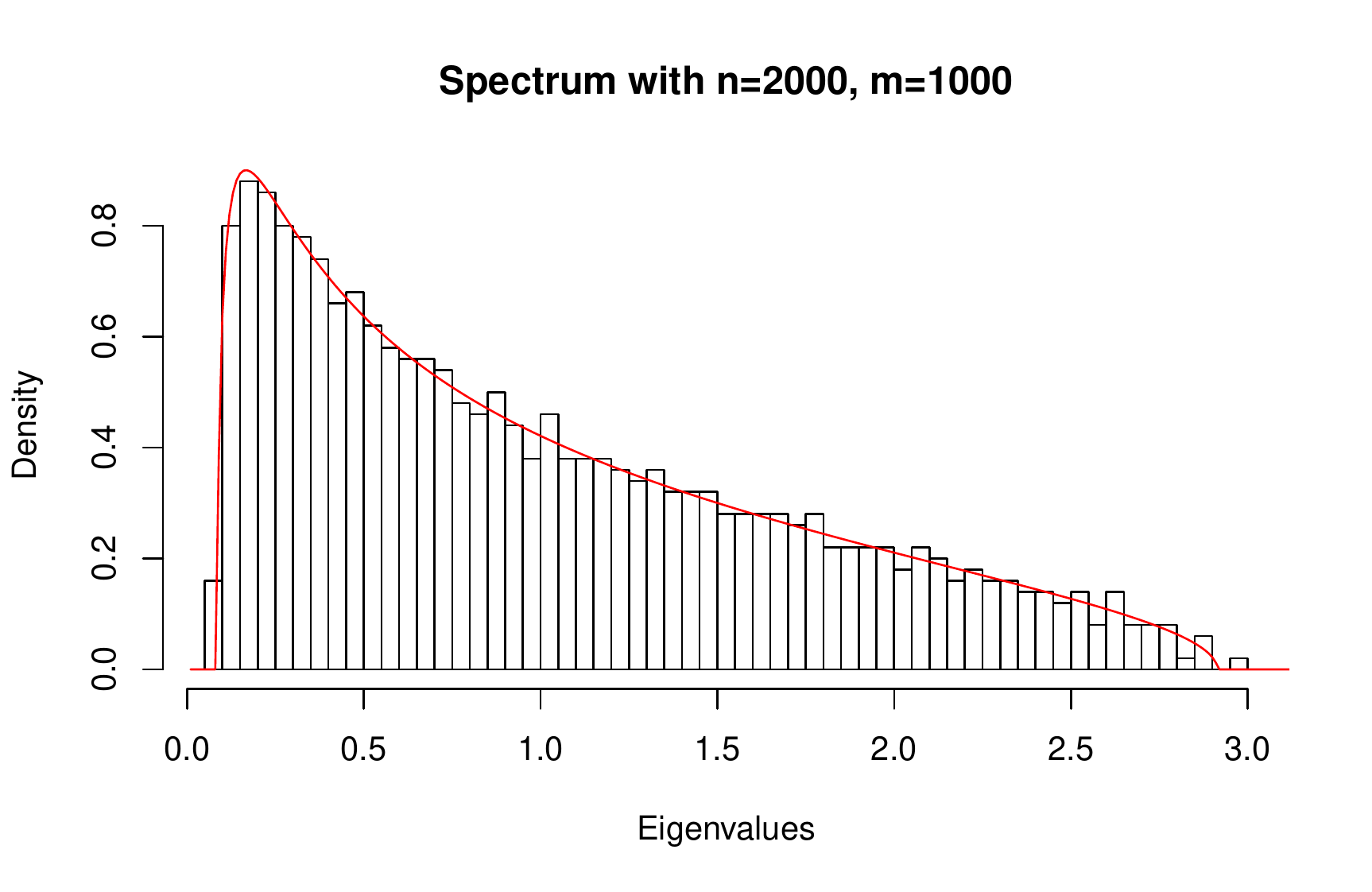} & 
 \includegraphics[width=0.26\paperwidth]{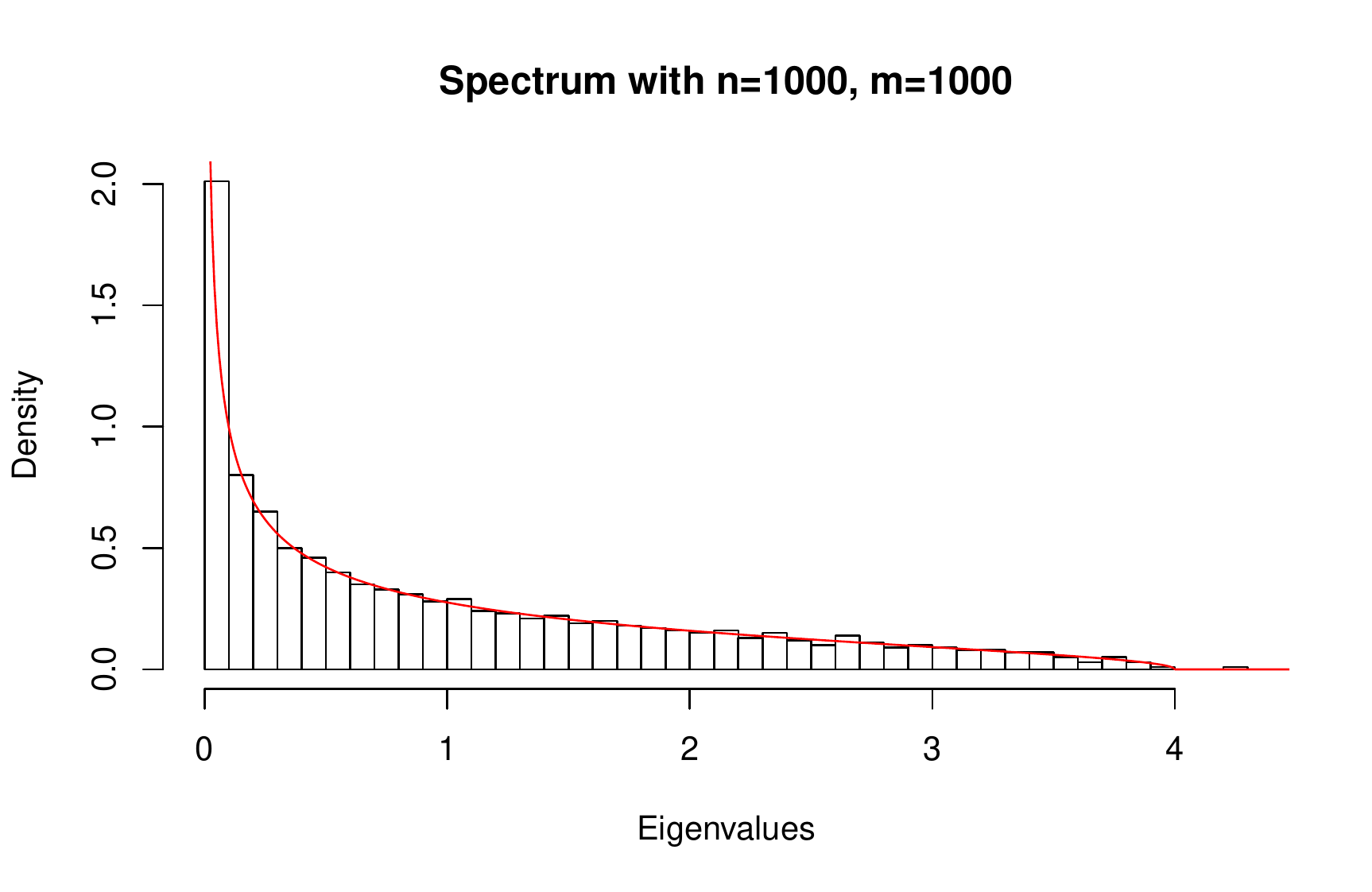} \\
 \includegraphics[width=0.26\paperwidth]{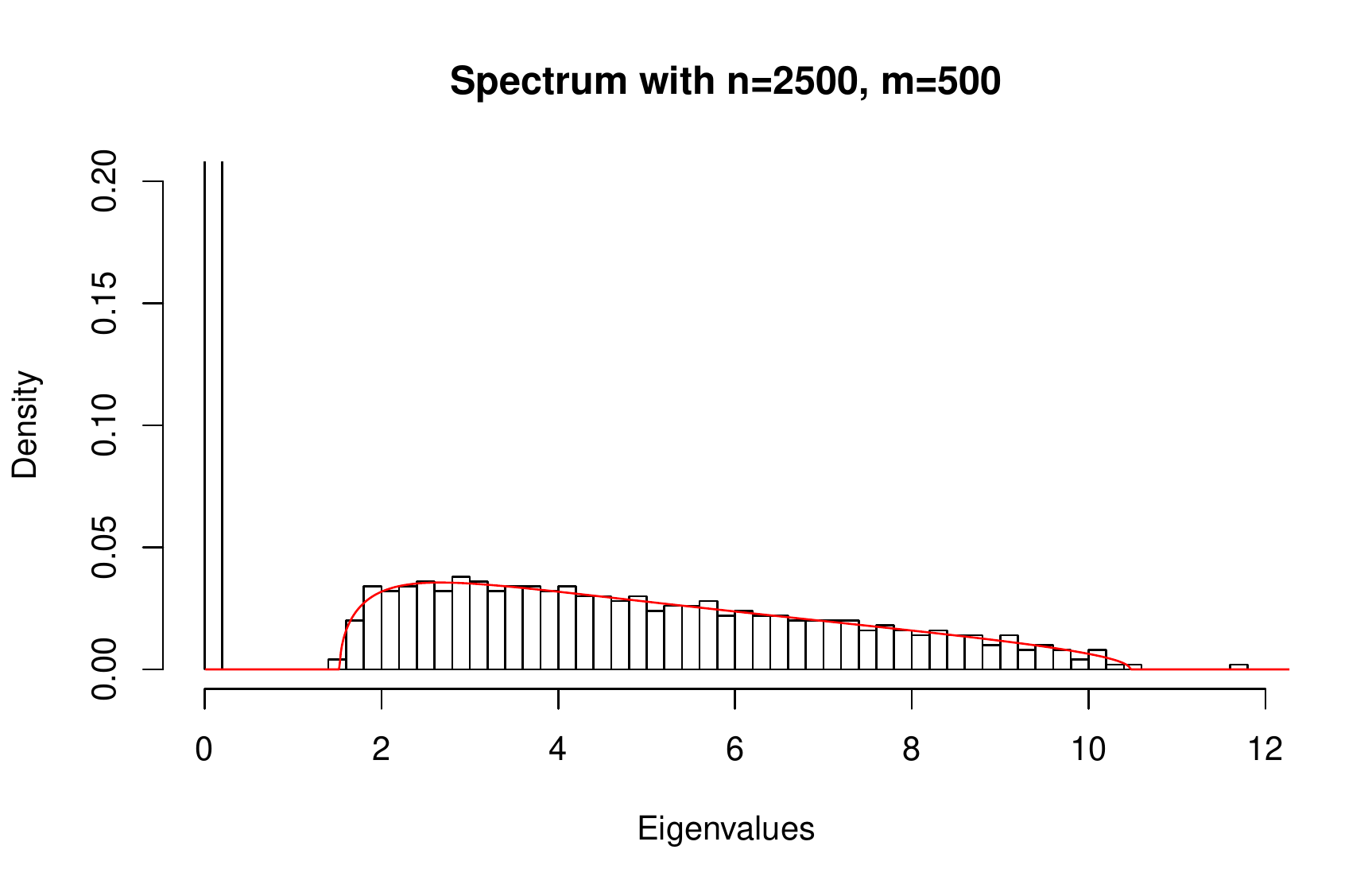}&
  \includegraphics[width=0.26\paperwidth]{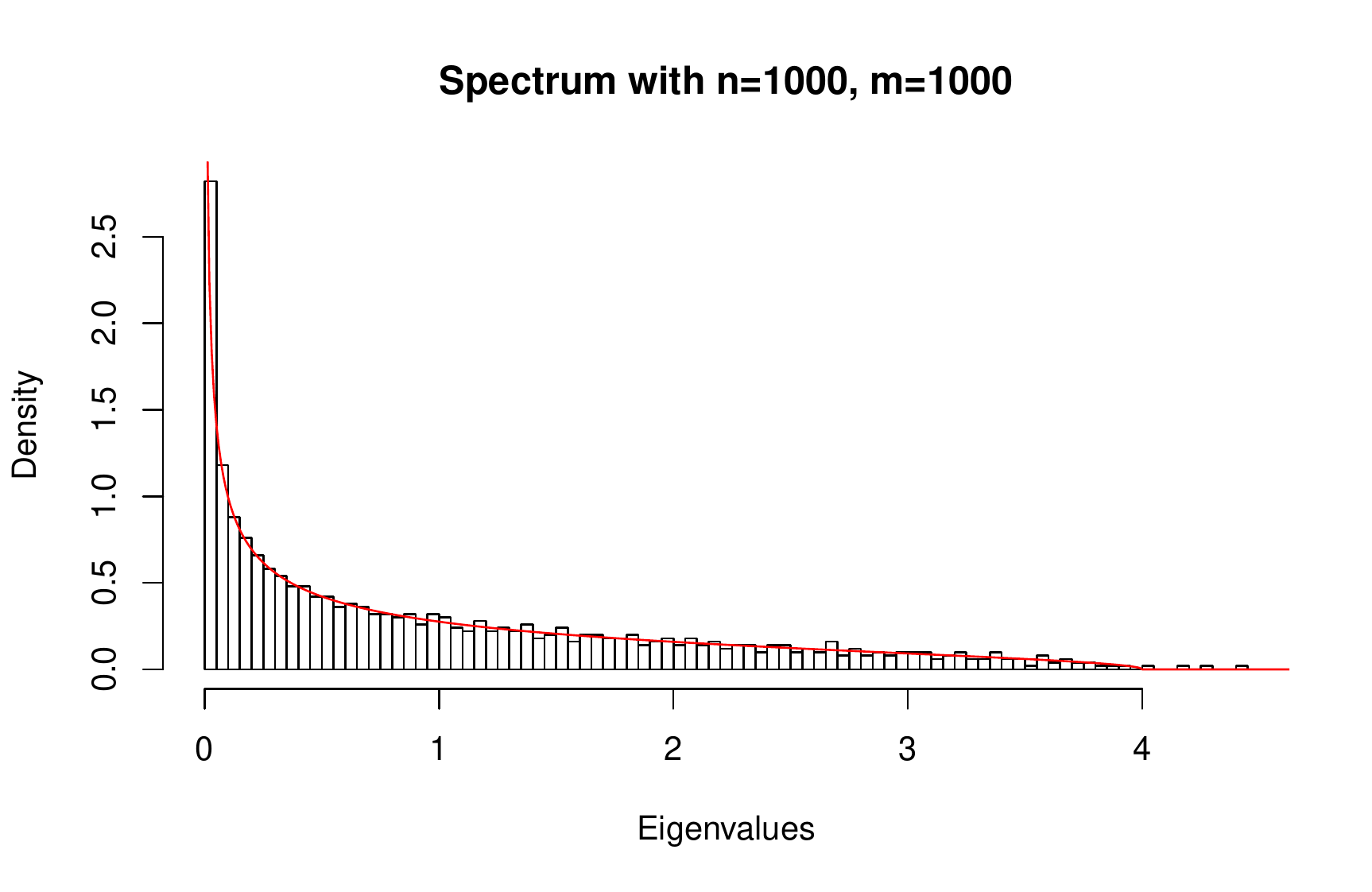}   
 \end{tabular}
\caption{Example of residual spikes of $\hat{\hat{\Sigma}}_X^{-1/2} \hat{\Sigma}_Y \hat{\hat{\Sigma}}_X^{-1/2}$ when $\theta=10$ for the first three figures and $\theta_{1,2,3,4}=10,15,20,25$ for the last figure.}\label{fig=spike}
\end{figure}

This paper studies these residual spikes by deriving the distribution of the extreme residual spikes under the null hypothesis. The philosophy is explained in Figure \ref{fig=residualzone0} with illustrations inspired by the i.i.d. normal case. All the eigenvalues inside the residual zone are potentially not indicative of real differences. However, when an eigenvalue is larger, we declare that this spike expresses a true difference. 

Most of our plots feature the seemingly more natural matrix
\begin{eqnarray*}
\hat{\hat{\Sigma}}_X^{-1/2} \hat{\Sigma}_Y \hat{\hat{\Sigma}}_X^{-1/2}\,.
\end{eqnarray*}
But, although this choice simplifies the study in terms of convergence in probability when the perturbation is of order $1$, this is no longer the case in more complex situations. In addition, the eigenvectors associated with the residual spikes are more accessible for the matrix in which all estimates are filtered.

\begin{figure}[htbp] 
\centering
\scalebox{0.7}{
\begin{tabular}{c}
\definecolor{wwqqzz}{rgb}{0.4,0.,0.6}
\definecolor{uuuuuu}{rgb}{0.26666666666666666,0.26666666666666666,0.26666666666666666}
\definecolor{ttqqqq}{rgb}{0.2,0.,0.}
\definecolor{ffqqqq}{rgb}{1.,0.,0.}
\begin{tikzpicture}[line cap=round,line join=round,>=triangle 45,x=3.0cm,y=4.0cm]
\draw[->,color=black] (-0.1260631217517215,0.) -- (5.068082604362673,0.);
\foreach \x in {,0.2,0.4,0.6,0.8,1.,1.2,1.4,1.6,1.8,2.,2.2,2.4,2.6,2.8,3.,3.2,3.4,3.6,3.8,4.,4.2,4.4,4.6,4.8,5.}
\draw[shift={(\x,0)},color=black] (0pt,-2pt);
\draw[->,color=black] (0.,-0.3711350758839508) -- (0.,0.9716909818012385);
\foreach \y in {-0.3,-0.2,-0.1,0.2,0.3,0.4,0.5,0.6,0.7,0.8,0.9}
\draw[shift={(0,\y)},color=black] (0pt,-2pt) -- (0pt,-2pt);
\clip(-0.1260631217517215,-0.3711350758839508) rectangle (5.068082604362673,0.9716909818012385);
\draw[line width=1.2pt,color=ffqqqq,smooth,samples=100,domain=0.0857904999793061:2.914212816177216] plot(\x,{sqrt(((1.0+sqrt(0.5))^(2.0)-(\x))*((\x)-(1.0-sqrt(0.5))^(2.0)))/2.0/3.1415926535/0.5/(\x)});
\draw [color=ffqqqq](0.42792951474866847,0.866057465940425) node[anchor=north west] {Marcenko-Pastur};
\draw (1.484249715766261,-0.04943300485329163) node[anchor=north west] {$1+\sqrt{c}$};
\draw (3,0.5) node[anchor=north west] {$\hat{\hat{\Sigma}}_X^{-1/2} \hat{\Sigma}_Y \hat{\hat{\Sigma}}_X^{-1/2} $};
\draw (2.644287988169762,-0.015822340715760077) node[anchor=north west] {$\left(1+\sqrt{c}\right)^2$};
\draw (0.8159438668999814,-0.06063655956580215) node[anchor=north west] {$1$};
\draw (3.5176468364387627,-0.017422848531833006) node[anchor=north west] {$T^{-1}\left(\frac{1}{\lambda-1}\right)$};
\draw (3.52508297921058,-0.16306905979446976) node[anchor=north west] {$\lambda=\frac{1}{2}\left( 2+c+\sqrt{c^2+4c} \right)$};
\draw [line width=5.2pt,color=wwqqzz] (2.914213562373095,0.)-- (3.722220771172429,0.0026511965675838013);
\draw [color=wwqqzz](3.02686141349882,0.19224367537372097) node[anchor=north west] {Residual zone};
\begin{scriptsize}
\draw [color=ttqqqq] (2.914213562373095,0.)-- ++(-4.0pt,0 pt) -- ++(8.0pt,0 pt) ++(-4.0pt,-4.0pt) -- ++(0 pt,8.0pt);
\draw [color=black] (1.69813,0.)-- ++(-4.0pt,0 pt) -- ++(8.0pt,0 pt) ++(-4.0pt,-4.0pt) -- ++(0 pt,8.0pt);
\draw [color=uuuuuu] (0.89908,0.)-- ++(-4.0pt,0 pt) -- ++(8.0pt,0 pt) ++(-4.0pt,-4.0pt) -- ++(0 pt,8.0pt);
\draw [color=black] (3.722220771172429,0.0026511965675838013)-- ++(-4.0pt,0 pt) -- ++(8.0pt,0 pt) ++(-4.0pt,-4.0pt) -- ++(0 pt,8.0pt);
\end{scriptsize}
\end{tikzpicture} \\ \  \definecolor{wwqqzz}{rgb}{0.4,0.,0.6}
\definecolor{uuuuuu}{rgb}{0.26666666666666666,0.26666666666666666,0.26666666666666666}
\begin{tikzpicture}[line cap=round,line join=round,>=triangle 45,x=2.98cm,y=8.0cm]
\draw[->,color=black] (-0.1260631217517215,0.) -- (5.068082604362673,0.);
\foreach \x in {,0.2,0.4,0.6,0.8,1.,1.2,1.4,1.6,1.8,2.,2.2,2.4,2.6,2.8,3.,3.2,3.4,3.6,3.8,4.,4.2,4.4,4.6,4.8,5.}
\draw[shift={(\x,0)},color=black] (0pt,-2pt);
\draw[->,color=black] (0.,-0.1710715988748344) -- (0.,0.16690661773675);
\foreach \y in {}
\draw[shift={(0,\y)},color=black] (0pt,-2pt) -- (0pt,-2pt);
\clip(-0.1260631217517215,-0.1710715988748344) rectangle (5.068082604362673,0.16690661773675);
\draw (1.584249715766261,-0.04943300485329163) node[anchor=north west] {$1+\sqrt{c}$};
\draw (0.8629438668999814,-0.06063655956580215) node[anchor=north west] {$1$};
\draw [color=wwqqzz](2.052726710390752,0.13302488617902253) node[anchor=north west] {Residual zone};
\draw (2.5881289899615982,-0.04623198922114577) node[anchor=north west] {$\lambda=1+c+\sqrt{c^2+2c} $};
\draw (3,0.15) node[anchor=north west] {$\hat{\hat{\Sigma}}_X^{-1/2} \hat{\hat{\Sigma}}_Y \hat{\hat{\Sigma}}_X^{-1/2} $};
\draw [line width=5.2pt,color=wwqqzz] (1.69813,0.)-- (2.9859626282538247,1.827374449692361E-4);
\begin{scriptsize}
\draw [color=black] (1.69813,0.)-- ++(-4.0pt,0 pt) -- ++(8.0pt,0 pt) ++(-4.0pt,-4.0pt) -- ++(0 pt,8.0pt);
\draw [color=uuuuuu] (0.89908,0.)-- ++(-4.0pt,0 pt) -- ++(8.0pt,0 pt) ++(-4.0pt,-4.0pt) -- ++(0 pt,8.0pt);
\draw [color=black] (2.9859626282538247,1.827374449692361E-4)-- ++(-4.0pt,0 pt) -- ++(8.0pt,0 pt) ++(-4.0pt,-4.0pt) -- ++(0 pt,8.0pt);
\end{scriptsize}
\end{tikzpicture}  
\end{tabular} 
}
\caption{Residual zone of $\hat{\hat{\Sigma}}_X^{-1/2} \hat{\Sigma}_Y \hat{\hat{\Sigma}}_X^{-1/2}$ and $\hat{\hat{\Sigma}}_X^{-1/2} \hat{\hat{\Sigma}}_Y \hat{\hat{\Sigma}}_X^{-1/2}$.}\label{fig=residualzone0}
\end{figure}
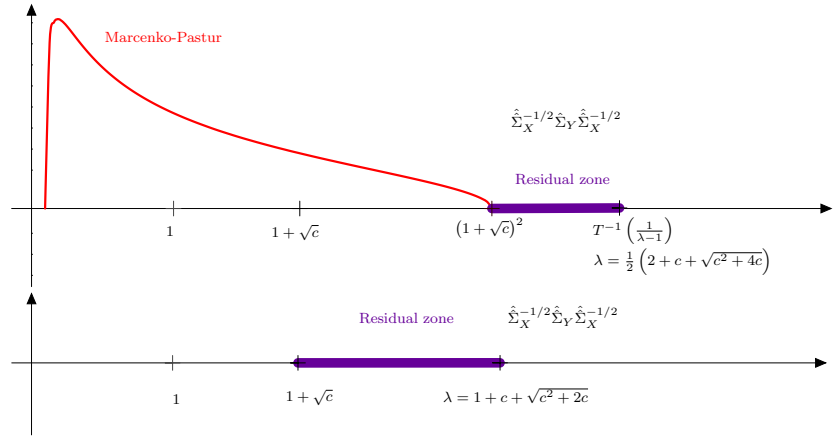


Let $\hat\theta_{X,s}$ and $\hat\theta_{Y,s}$ be isolated eigenvalues and construct the asymptotic unbiased estimators as in Equation (\ref{eq:corrlambda})

\begin{eqnarray*}
\hat{\hat{\theta}}_{X,s}=1+\frac{1}{\frac{1}{m-k} \sum_{i=k+1}^{m} \frac{\hat{\lambda}_{\hat{\Sigma}_{X},i}}{\hat{\theta}_{X,s}-\hat{\lambda}_{\hat{\Sigma}_{X},i}}} \text{ and } \hat{\hat{\theta}}_{Y,s}=1+\frac{1}{\frac{1}{m-k} \sum_{i=k+}^m \frac{\hat{\lambda}_{\hat{\Sigma}_{Y},i}}{\hat{\theta}_{Y,s}-\hat{\lambda}_{\hat{\Sigma}_{Y},i}}},
\end{eqnarray*}
where $\hat{\lambda}_{\hat{\Sigma}_{X},i}$ and $\hat{\lambda}_{\hat{\Sigma}_{Y},i}$ are the i$^{th}$ ordered eigenvalue of $\hat{\Sigma}_{X}$ and $\hat{\Sigma}_{Y}$, respectively. The test statistic is then 
\begin{eqnarray*}
 \lambda_{\min}\left(\hat{\hat{\Sigma}}_X^{-1/2} \hat{\hat{\Sigma}}_Y \hat{\hat{\Sigma}}_X^{-1/2} \right) \text{ and } \lambda_{\max}\left(\hat{\hat{\Sigma}}_X^{-1/2} \hat{\hat{\Sigma}}_Y \hat{\hat{\Sigma}}_X^{-1/2} \right)\,,
\end{eqnarray*}
where the filtered matrices are constructed as in (\ref{eq:corrSigma}).
These two statistics provide a basis for a powerful and robust test for the equality of (detectable) perturbations $P_X$ and $P_Y$.

\subsection{Null distribution} \label{sec=test}
Under $H_0$, $\lambda_{\max}\left(\hat{\hat{\Sigma}}_X^{-1/2} \hat{\hat{\Sigma}}_Y \hat{\hat{\Sigma}}_X^{-1/2} \right) $ is obviously a function of $\theta_s=\theta_{X,s}=\theta_{Y,s}$ for $s=1,2,...,k$. The suspected \textbf{worst case} occurs in the limit as $\theta_s\to\infty$ for all $s$ and it is this limit which will determine the critical values of the test. A criterion proposed in \cite{mainarticle} allows to check if this scenario is really the worst case. Let 
\begin{eqnarray*}
&&\lambda_{\max}\left(\hat{\hat{\Sigma}}_X^{-1/2} \hat{\hat{\Sigma}}_Y \hat{\hat{\Sigma}}_X^{-1/2} \right) \leqslant \lim_{\theta \rightarrow \infty} \lambda_{\max}\left(\hat{\hat{\Sigma}}_X^{-1/2} \hat{\hat{\Sigma}}_Y \hat{\hat{\Sigma}}_X^{-1/2} \right) =V_{\max} ,\\
&&\lambda_{\min}\left(\hat{\hat{\Sigma}}_X^{-1/2} \hat{\hat{\Sigma}}_Y \hat{\hat{\Sigma}}_X^{-1/2} \right) \geqslant \lim_{\theta \rightarrow \infty} \lambda_{\min}\left(\hat{\hat{\Sigma}}_X^{-1/2} \hat{\hat{\Sigma}}_Y \hat{\hat{\Sigma}}_X^{-1/2} \right) =V_{\min}.
\end{eqnarray*}
Because of our focus on the worst case scenario under $H_0$, we will investigate the asymptotic as $\theta_i=\theta p_i$ for fixed $p_i>0$ and $\frac{\theta}{\sqrt{m}} \rightarrow \infty$. We recall that \cite{mainarticle2} also allows some finite $\theta_s$ but it seems intuitive that this scenario will not create a worst case in most situations. This intuition is highlighted by \cite{mainarticle} showing by simulation that the residual spike increase as a function of $\theta$ assuming $W_X=\frac{1}{n_X}\X\X^t$.
Our test rejects the null hypothesis of equal populations if 
either $\P\left( V_{\max} > \hat{\lambda}_{\max}   \right)$ or $\P\left( V_{\min} < \hat{\lambda}_{\min}   \right)$ is small, where $\hat{\lambda}_{\max}$ and $\hat{\lambda}_{\min}$ are the observed extreme residual spikes.

The following result describes the asymptotic behavior of the extreme eigenvalues and thus of $V_{\max}$ and $V_{\min}$. 
\begin{Th}
 \label{TH=Main} 
Suppose $W_X,W_Y \in \mathbb{R}^{m\times m}$ are as described at the start of Section 2 and 
\begin{enumerate}
\item \cite{mainarticle} have already investigated  the case 
$P_1= \I_m+(\theta-1)e_1 e_1^t \in  \mathbb{R}^{m\times m}$ with $\frac{\sqrt{m}}{\theta}=o(1)$ with regard to large $m$. Let  
\begin{eqnarray*}
\hat{\Sigma}_{X,P_1}=P_1^{1/2} W_X P_1^{1/2} \text{ and } \hat{\Sigma}_{Y,P_1}=P_1^{1/2} W_Y P_1^{1/2}.
\end{eqnarray*}
and $\hat{\hat{\Sigma}}_{X,P_1}$, $\hat{\hat{\Sigma}}_{Y,P_1}$ as described above (see, \ref{Def=unbiased}). 

Then, conditional on the spectra $S_{W_X}=\left\lbrace \hat{\lambda}_{W_X,1},\hat{\lambda}_{W_X,2},...,\hat{\lambda}_{W_X,m} \right\rbrace$ and $S_{W_Y}=\left\lbrace \hat{\lambda}_{W_Y,1},\hat{\lambda}_{W_Y,2},...,\hat{\lambda}_{W_Y,m} \right\rbrace$ of $W_X$ and $W_Y$,
\begin{eqnarray*}
\left. \sqrt{m} \frac{\left(\lambda_{\max}\left(\hat{\hat{\Sigma}}_{X,P_1}^{-1/2} \hat{\hat{\Sigma}}_{Y,P_1} \hat{\hat{\Sigma}}_{X,P_1}^{-1/2} \right) -\lambda^+ \right)}{\sigma^+} \right| S_{W_X}, S_{W_Y} \sim \Normal(0,1)+o_{p}(1), 
\end{eqnarray*}
where
\begin{eqnarray*}
&&\lambda^+= \sqrt{M_2^2-1}+M_2, \hspace{30cm}
\end{eqnarray*}
\scalebox{0.77}{
\begin{minipage}{1\textwidth}
\begin{eqnarray*}
&&{\sigma^+}^2=\frac{1}{\left(M_{2,X}+M_{2,Y}-2\right) \left(M_{2,X}+M_{2,Y}+2\right)} \hspace{30cm} \\
&&\hspace{2cm} \Bigg( 9 M_{2,X}^4 M_{2,Y}+4 M_{2,X}^3 M_{2,Y}^2+4 M_{2,X}^3 M_{2,Y}+2 M_{2,X}^3 M_{3,Y}-2 M_{2,X}^2 M_{2,Y}^3\\
&& \hspace{2cm}+4 M_{2,X}^2 M_{2,Y}^2-11 M_{2,X}^2 M_{2,Y}-8 M_{3,X} M_{2,X}^2 M_{2,Y}+2 M_{2,X}^2 M_{2,Y} M_{3,Y}\\
&& \hspace{2cm}-2 M_{2,X}^2 M_{3,Y}+M_{2,X}^2 M_{4,Y}+4 M_{2,X} M_{2,Y}^3+M_{2,X} M_{2,Y}^2+4 M_{2,X} M_{2,Y}\\
&& \hspace{2cm}-4 M_{3,X} M_{2,X} M_{2,Y}^2-4 M_{3,X} M_{2,X} M_{2,Y}-2 M_{2,X} M_{2,Y}^2 M_{3,Y}-4 M_{2,X} M_{2,Y} M_{3,Y}\\
&& \hspace{2cm}-6 M_{2,X} M_{3,Y}+2 M_{4,X} M_{2,X} M_{2,Y}+2 M_{2,X} M_{2,Y} M_{4,Y}-2 M_{3,X} M_{2,Y}^2\\
&& \hspace{2cm}+2 M_{3,X} M_{2,Y}+M_{4,X} M_{2,Y}^2+4 M_{2,X}^5+2 M_{2,X}^4-4 M_{3,X} M_{2,X}^3-13 M_{2,X}^3\\
&& \hspace{2cm}-2 M_{3,X} M_{2,X}^2+M_{4,X} M_{2,X}^2-2 M_{2,X}^2+10 M_{3,X} M_{2,X}+4 M_{2,X}+4 M_{3,X}\\
&& \hspace{2cm}-2 M_{4,X}+M_{2,Y}^5+2 M_{2,Y}^4-M_{2,Y}^3-2 M_{2,Y}^2+4 M_{2,Y}-2 M_{2,Y}^3 M_{3,Y}\\
&& \hspace{2cm}-2 M_{2,Y}^2 M_{3,Y}+2 M_{2,Y} M_{3,Y}+4 M_{3,Y}+M_{2,Y}^2 M_{4,Y}-2 M_{4,Y}-4 \Bigg)\\
&&\hspace{1.25cm} + \frac{1}{\sqrt{\left(M_{2,X}+M_{2,Y}-2\right) \left(M_{2,X}+M_{2,Y}+2\right)}}\\
&& \hspace{2cm} \Bigg( 5 M_{2,X}^3 M_{2,Y}-M_{2,X}^2 M_{2,Y}^2+2 M_{2,X}^2 M_{2,Y}+2 M_{2,X}^2 M_{3,Y}-M_{2,X} M_{2,Y}^3\\
&& \hspace{2cm}+2 M_{2,X} M_{2,Y}^2-4 M_{2,X} M_{2,Y}-4 M_{3,X} M_{2,X} M_{2,Y}-2 M_{2,X} M_{3,Y}+M_{2,X} M_{4,Y}\\
&& \hspace{2cm}-2 M_{3,X} M_{2,Y}+M_{4,X} M_{2,Y}+4 M_{2,X}^4+2 M_{2,X}^3-4 M_{3,X} M_{2,X}^2-5 M_{2,X}^2\\
&& \hspace{2cm}-2 M_{3,X} M_{2,X}+M_{4,X} M_{2,X}+2 M_{2,X}+2 M_{3,X}+M_{2,Y}^4+2 M_{2,Y}^3+M_{2,Y}^2\\
&& \hspace{2cm}+2 M_{2,Y}-2 M_{2,Y}^2 M_{3,Y}-2 M_{2,Y} M_{3,Y}-2 M_{3,Y}+M_{2,Y} M_{4,Y} \Bigg),
\end{eqnarray*}
\end{minipage}}
\begin{eqnarray*}
&&M_{s,X}= \frac{1}{m} \sum_{i=1}^m \hat{\lambda}_{W_X,i}^s,\hspace{30cm}\\
&&M_{s,Y}= \frac{1}{m} \sum_{i=1}^m \hat{\lambda}_{W_Y,i}^s,\\
&&M_s=\frac{M_{s,X}+M_{s,Y}}{2}.
\end{eqnarray*}
Moreover, 
\begin{eqnarray*}
\left.\sqrt{m} \frac{\left(\lambda_{\min}\left(\hat{\hat{\Sigma}}_{X,P_1}^{-1/2} \hat{\hat{\Sigma}}_{Y,P_1} \hat{\hat{\Sigma}}_{X,P_1}^{-1/2} \right) -\lambda^- \right)}{\sigma^-}\right| S_{W_X}, S_{W_Y} \sim \Normal(0,1)+o_{m}(1), 
\end{eqnarray*}
where 
\begin{eqnarray*}
&&\lambda^-= -\sqrt{M_2^2-1}+M_2, \hspace{30cm}\\
&&{\sigma^-}^2=\left(\lambda^-\right)^4 {\sigma^+}^2.
\end{eqnarray*}
The error $o_p(1)$ in the approximation is with regard to large values of $m$.

\item Suppose that $P_k= \I_m+\sum_{s=1}^k(\theta_s-1)e_s e_s^t \in  \mathbb{R}^{m\times m}$ with $\theta_s=p_s \theta$, $p_s>0$ and $\frac{\sqrt{m}}{\theta}=o(1)$ with regard to large $m$.
\begin{eqnarray*}
\hat{\Sigma}_{X,P_k}=P_k^{1/2} W_X P_k^{1/2} \text{ and } \hat{\Sigma}_{Y,P_k}=P_k^{1/2} W_Y P_k^{1/2},
\end{eqnarray*}
and $\hat{\hat{\Sigma}}_{X,P_k}$, $\hat{\hat{\Sigma}}_{Y,P_k}$ as described above (see, \ref{Def=unbiased}). 
\noindent Then, conditioning on the spectra $S_{W_X}$ and $S_{W_Y}$,\\
\scalebox{0.82}{
\begin{minipage}{1\textwidth}
\begin{eqnarray*}
\left. \lambda_{\max}\left(\hat{\hat{\Sigma}}_{X,P_k}^{-1/2} \hat{\hat{\Sigma}}_{Y,P_k} \hat{\hat{\Sigma}}_{X,P_k}^{-1/2}\right) \right| S_{W_X},S_{W_Y}=\lambda_{\max}\left( H^+ \right)+1+O_p\left(\frac{1}{m}\right),\\
\left.\lambda_{\min}\left(\hat{\hat{\Sigma}}_{X,P_k}^{-1/2} \hat{\hat{\Sigma}}_{Y,P_k} \hat{\hat{\Sigma}}_{X,P_k}^{-1/2}\right)\right| S_{W_X},S_{W_Y}=\lambda_{\max}\left( H^- \right)+1+O_p\left(\frac{1}{m}\right),
\end{eqnarray*}
\end{minipage}}

where
\begin{eqnarray*}
H^\pm= \zeta_{\infty}^\pm \begin{pmatrix}
\hat{\zeta}_1^\pm / \zeta_{\infty}^\pm &  w_{1,2}^\pm & w_{1,3}^\pm & \cdots &  w_{1,k}^\pm \\ 
 w_{2,1}^\pm  & \hat{\zeta}_2^\pm / \zeta_{\infty}^\pm &  w_{2,3}^\pm & \cdots  & w_{2,k}^\pm \\ 
 w_{3,1}^\pm & w_{3,2}^\pm & \hat{\zeta}_3^\pm /\zeta_{\infty}^\pm & \cdots  &  w_{3,k}^\pm \\ 
\vdots & \vdots & \ddots & \ddots &  \vdots \\ 
  w_{k,1}^\pm &  w_{k,2}^\pm & w_{k,3}^\pm & \cdots  & \hat{\zeta}_k^\pm /\zeta_{\infty}^\pm  \\ 
\end{pmatrix} ,
\end{eqnarray*}
and \\
\scalebox{0.73}{
\begin{minipage}{1\textwidth}
\begin{eqnarray*}
&&\hat{\zeta}_i^+= \left.\lambda_{\max}\left(\hat{\hat{\Sigma}}_{X,\tilde{P}_i}^{1/2} \hat{\hat{\Sigma}}_{Y,\tilde{P}_i} \hat{\hat{\Sigma}}_{X,\tilde{P}_i}^{1/2} \right)-1 \right| S_{W_X},S_{W_Y},\hspace{30cm}\\
&&\hat{\zeta}_i^-= \left.\lambda_{\min}\left(\hat{\hat{\Sigma}}_{X,\tilde{P}_i}^{1/2} \hat{\hat{\Sigma}}_{Y,\tilde{P}_i} \hat{\hat{\Sigma}}_{X,\tilde{P}_i}^{1/2} \right)-1 \right| S_{W_X},S_{W_Y},\\
&&\zeta_{\infty}^\pm =\underset{m \rightarrow \infty}{\lim} \hat{\zeta}_i^\pm = \lambda^\pm -1,\\
&&w_{i,j}^\pm \sim { \color{red} \Normal}\left(0,
\frac{1}{m}\frac{2 (M_{2,X}-1) (M_{2,Y}-1)+B_{X}^\pm +B_{Y}^\pm}{\left((\zeta_{\infty}^\pm-2 M_2+1)^2+2 (M_2-1)\right)^2}
  \right)+o_p\left( \frac{1}{\sqrt{m}} \right),
\end{eqnarray*}
\end{minipage}}\\
\scalebox{0.73}{
\begin{minipage}{1\textwidth}
\begin{eqnarray*}
&&B_{X}^+=\left(1-M_2+2 M_{2,X}+\sqrt{M_2^2-1}\right)^2 (M_{2,X}-1)\hspace{20cm}\\
&&\hspace{2cm} +2\left(-1+M_2-2M_{2,x}-\sqrt{M_2^2-1}\right)(M_{3,X}-M_{2,X})+(M_{4,X}-M_{2,X}^2),\\
&&B_{Y}^+=\left(1+M_2+M_{2,Y}-M_{2,X}-\sqrt{M_2^2-1}\right)^2 (M_{2,Y}-1)\\
&&\hspace{2cm} +2\left(-1-M_2-M_{2,Y}-M_{2,X}-\sqrt{M_2^2-1}\right)(M_{3,Y}-M_{2,Y})+(M_{4,Y}-M_{2,Y}^2),\\
&&B_{X}^-=\left(1-M_2+2 M_{2,X}-\sqrt{M_2^2-1}\right)^2 (M_{2,X}-1)\hspace{20cm}\\
&&\hspace{2cm} +2\left(-1+M_2-2M_{2,x}+\sqrt{M_2^2-1}\right)(M_{3,X}-M_{2,X})+(M_{4,X}-M_{2,X}^2),\\
&&B_{Y}^-=\left(1+M_2+M_{2,Y}-M_{2,X}+\sqrt{M_2^2-1}\right)^2 (M_{2,Y}-1)\\
&&\hspace{2cm} +2\left(-1-M_2-M_{2,Y}+M_{2,X}-\sqrt{M_2^2-1}\right)(M_{3,Y}-M_{2,Y})+(M_{4,Y}-M_{2,Y}^2).
\end{eqnarray*}
\end{minipage}}\\
\noindent The matrices $H^+$ and $H^-$ are strongly correlated. However, within a matrix, all the entries are uncorrelated.
\end{enumerate}
\end{Th}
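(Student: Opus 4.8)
The plan is to reduce both assertions to covariance computations for bilinear forms in rotationally invariant matrices, exploiting the conditioning on the spectra $S_{W_X},S_{W_Y}$. Once we condition on the spectra, the diagonal matrices $\Lambda_X,\Lambda_Y$, and hence all the moments $M_{s,X},M_{s,Y}$, are frozen, and the only remaining randomness lives in the Haar-distributed orthogonal factors $O_X,O_Y$ of $W_X=O_X\Lambda_X O_X^t$ and $W_Y=O_Y\Lambda_Y O_Y^t$. The first step is therefore to extract, from the construction underlying part~(2), the explicit asymptotic representation of each entry of $H^\pm$ as a (bi)linear form $e_i^t G_X^\pm e_j$, $e_i^t G_Y^\pm e_j$ read off in the fixed spike basis $e_1,\dots,e_k$, where $G_X^\pm,G_Y^\pm$ are deterministic-coefficient combinations of resolvent-type functions of $W_X,W_Y$ evaluated at the deterministic arguments $\lambda^\pm$. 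The diagonal entries $\hat{\zeta}_i^\pm$ are the order-one rank-one residual spikes already controlled by part~(1), while the off-diagonal entries $w_{i,j}^\pm$ are the genuinely new coupling terms.

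The second step is the key covariance estimate. For a rotationally invariant matrix $R=O\Lambda^a O^t$ one has, in the fixed standard basis and to leading order in $1/m$,
\begin{equation*}
\cov\big(R_{ij},R_{kl}\big)=\frac{c_a}{m}\big(\delta_{ik}\delta_{jl}+\delta_{il}\delta_{jk}\big)+O(m^{-2}),
\end{equation*}
with an analogous two-matrix formula for $\cov(R^{(a)}_{ij},R^{(b)}_{kl})$; these follow from the standard moment calculus for the orthogonal group (Weingarten calculus), equivalently from the known fluctuation structure of rotationally invariant ensembles. Because the $k$ spike directions $e_1,\dots,e_k$ are mutually orthogonal, applying this formula to two distinct entries of $H^+$ makes every surviving Kronecker symbol vanish: for two off-diagonal entries $w_{i,j}^+,w_{k,l}^+$ with $\{i,j\}\neq\{k,l\}$ none of $\delta_{ik}\delta_{jl}$, $\delta_{il}\delta_{jk}$ is active, and the diagonal--off-diagonal and distinct diagonal--diagonal pairings contribute only at order $m^{-2}$. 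This yields the vanishing of all within-matrix covariances at the relevant order $m^{-1}$, hence the asserted uncorrelatedness of the entries of $H^+$ (and identically of $H^-$).

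For the strong correlation between $H^+$ and $H^-$, the point is that the two matrices are built from one and the same realization of $(O_X,O_Y)$; the superscripts $\pm$ only change the deterministic argument at which the combinations $G_X^\pm,G_Y^\pm$ are evaluated, namely $\lambda^+$ versus $\lambda^-$. Consequently each pair $(w_{i,j}^+,w_{i,j}^-)$ is, to leading order, a pair of deterministic multiples of the very same bilinear form $e_i^t(\cdot)e_j$. Computing $\cov(w_{i,j}^+,w_{i,j}^-)$ from the same two-matrix moment formula and comparing it with $\sqrt{\var(w_{i,j}^+)\var(w_{i,j}^-)}$ shows that the correlation coefficient converges to a nonzero constant (indeed to $\pm1$ when the two forms are proportional), which is the precise meaning of ``strongly correlated.''

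The main obstacle I expect lies in the first step: pinning down the exact bilinear-form representation of $w_{i,j}^\pm$ from the main proof, with the correct deterministic prefactors and resolvent arguments, so that the generic moment formula of the second step applies verbatim. In particular, the diagonal entries $\hat{\zeta}_i^\pm$ are not linear forms but order-one quantities arising from the rank-one CLT, so establishing that their $1/\sqrt{m}$ fluctuations decorrelate across distinct directions $e_i$, and decorrelate from the off-diagonal coupling terms, requires verifying that the shared dependence has been entirely absorbed into the conditioning on the spectra, leaving only direction-specific eigenvector randomness. Once that reduction is in place, the orthogonality of the spike directions does the rest.
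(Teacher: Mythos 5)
Your second step (a Weingarten-type covariance formula for bilinear forms, with orthogonality of the spike directions killing the Kronecker deltas) is a plausible mechanism for the within-matrix uncorrelatedness, and your account of the $H^+$/$H^-$ correlation --- same realization of $(O_X,O_Y)$, only the deterministic evaluation point $\lambda^+$ versus $\lambda^-$ changes --- is consistent with what the paper's formulas show. But there is a genuine gap: everything you defer to your ``first step'' is in fact the theorem. The assertion of part (2) is precisely that the extreme eigenvalues of $\hat{\hat{\Sigma}}_{X,P_k}^{-1/2}\hat{\hat{\Sigma}}_{Y,P_k}\hat{\hat{\Sigma}}_{X,P_k}^{-1/2}$ equal $\lambda_{\max}(H^{\pm})+1+O_p(1/m)$ with the stated entries, and the paper's proof consists almost entirely of establishing that representation: a decomposition of the matrix into $k$ rank-two summands after a rotation $\hat{U}_{P_k,X}$ aligning $\hat{u}_{P_k,X,i}$ with $e_i$; an explicit rank-two eigenvalue/eigenvector computation (Lemma \ref{ALemmalinearalg}); a pseudo-invariance argument, using Lipschitz continuity of the eigenvalue function $g^{\pm}$ together with the invariance theorems of \cite{mainarticle2}, which identifies each $\hat{\zeta}_i^{\pm}$ with the order-$1$ residual spike for $\tilde{P}_i$ and thereby licenses importing part (1); a dimension-reduction lemma (Lemma \ref{ALemreducedim}) producing a $2k\times 2k$ matrix $H$; and Lemma \ref{ALemneglectvalue}, which gives $\lambda_{\max}(H)=\lambda_{\max}(H^+)+O_p(1/m)$ and is exactly how the paper \emph{avoids} ever controlling the $H^+$--$H^-$ cross-correlation you propose to estimate. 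You name this reduction as ``the main obstacle'' but do not resolve it, so the central claim remains unproved.

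A second concrete problem is that your covariance step does not apply verbatim to the actual entries. The $w_{i,j}^{\pm}$ are scalar products $\left\langle w_i^{\pm},w_j^{\pm}\right\rangle$ of pseudo-eigenvectors built from eigenvectors of the \emph{perturbed} matrices $\hat{\Sigma}_{X,P_k}$, $\hat{\Sigma}_{Y,P_k}$, whose laws are not rotationally invariant: the perturbation breaks invariance precisely in the spike directions, and the leading sample eigenvectors concentrate near $e_1,\dots,e_k$, so these quantities are not of the form $e_i^t R e_j$ with $R$ an invariant ensemble. The paper circumvents this via \cite[Theorem 4.2]{mainarticle2}, writing $\tilde{u}_{P_k,s,t}=\hat{u}_{P_k,X,t,s}+\hat{u}_{P_k,Y,s,t}+O_p(1/m)+O_p(1/(\theta m^{1/2}))$, splitting $\left\langle w_s^{\pm},w_t^{\pm}\right\rangle$ into three parts, and applying the moment theorems of \cite{mainarticle2} to each; it is this detailed representation that produces the explicit constants $B_X^{\pm}$, $B_Y^{\pm}$ and the mixed term $2(M_{2,X}-1)(M_{2,Y}-1)$ appearing in the stated variance, as well as $\sigma^{+2}$ in part (1). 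A generic formula $\cov(R_{ij},R_{kl})=\frac{c_a}{m}(\delta_{ik}\delta_{jl}+\delta_{il}\delta_{jk})+O(m^{-2})$ can explain \emph{why} distinct entries decorrelate, but it cannot by itself deliver these constants; since the theorem asserts them explicitly, your argument as written proves neither the stated variances nor the eigenvalue representation, and hence not the theorem.
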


\begin{Rem} \label{Rem=thMain} \
The entries of the matrices $H^+$ and $H^-$ are asymptotically uncorrelated Normal or a sum of two Normals. 
\end{Rem}

\paragraph*{Special case}
If the spectra are Marcenko-Pastur distributed, we define $c_X=m/n_X$ and $c_Y=m/n_Y$. Then,

\scalebox{0.79}{
\begin{minipage}{1\textwidth}
\begin{eqnarray*}
&&c=\frac{c_X+c_Y}{2},\hspace{30cm}\\
&&\lambda^+=  c+\sqrt{c (c+2)}+1,\\
&&{\sigma^+}^2= c_X^3+c_X^2 c_Y+3 c_X^2+4 c_X c_Y-c_X+c_Y^2+c_Y\\
&& \hspace{2cm}+\frac{ (8 c_X+2 c_X^2+\left(c_X^3+5 c_X^2+c_X^2 c_Y+4 c_X c_Y+5 c_X +3 c_Y+c_Y^2\right)\sqrt{c(c+2)}}{c+2},\\
&&w_{i,j}^+ \sim {\color{red} \Normal}\left(0, \frac{\sigma_w^2}{m} \right),\\
&& \sigma_w^2=\frac{2c_X \left(\sqrt{c(c+2)}+2\right)+2 \c_Y \left(-\sqrt{c(c+2)}+2\right)+c_X^2+c_Y^2}
{4c \left(-\sqrt{c \left(c+2\right)}+c+2\right)^2}.
\end{eqnarray*}
\end{minipage}}\\
(Proof Appendix \ref{appendixproof}. The {\color{red} red} character is not proven and is a sum of two asymptotic uncorrelated marginally normal random variables that are certainly independent.)

\subsection{Discussion and simulation}\label{sec:smallsimulationmainth}

The above theorem gives the limiting distribution of $V_{\max}$ and $V_{\min}$. In this subsection, we first check the quality of the approximations in Theorem \ref{TH=Main}. Then we investigate the worst case with regard to $\theta$. Finally we relax some assumption on $\theta_s$ and on the distribution. 

\subsubsection{Some simulations}

Assume $\mathbf{X} \in \mathbb{R}^{m\times n_X}$ and $\mathbf{Y} \in \mathbb{R}^{m\times n_Y}$ with $\mathbf{X}=\left(X_1,X_2,...,X_{n_X}\right)$ and $\mathbf{Y}=\left(Y_1,Y_2,...,Y_{n_Y}\right)$.

\scalebox{0.65}{
\begin{minipage}{1\textwidth}
\begin{eqnarray*}
&&X_i \sim \Normal_m\left(\vec{0},\sigma^2 \I_m \right) \text{ with } X_1 = \epsilon_{X,1}\text{ and } 
X_{i+1}=\rho X_i+\sqrt{1-\rho^2} \ \epsilon_{X,i+1}, \text{ where }  \epsilon_{X,i} \overset{i.i.d}{\sim} \Normal_m\left(\vec{0},\sigma^2 \I_m\right),\\
&&Y_i \sim \Normal_m\left(\vec{0},\sigma^2 \I_m \right) \text{ with } Y_1 = \epsilon_{Y,1}\text{ and } Y_{i+1}=\rho Y_i+\sqrt{1-\rho^2} \ \epsilon_{Y,i+1}, \text{ where }  \epsilon_{Y,i} \overset{i.i.d}{\sim} \Normal_m\left(\vec{0},\sigma^2 \I_m\right)
\end{eqnarray*}
\end{minipage}}

Let $P_X= \I_m + (\theta_{X}-1) u_{X} u_{X}^t$ and $P_Y= \I_m+  (\theta_{Y}-1) u_{Y} u_{Y}^t$ be two perturbations in  $\mathbb{R}^{m\times m}$. Then,
\begin{eqnarray*}
\mathbf{X}_P=P_X^{1/2} \mathbf{X} \text{ and  } \mathbf{Y}_P=P_Y^{1/2} \mathbf{Y},\\
\hat{\Sigma}_X=\frac{\mathbf{X}_P^t \mathbf{X}_P}{n_X} \text{ and } \hat{\Sigma}_Y=\frac{\mathbf{Y}_P^t \mathbf{Y}_P}{n_Y}.
\end{eqnarray*}
We assume a common and large value for $\theta$ and $P_X=P_Y$.

\begin{table}
 \begin{tabular}{ c c }
 \begin{minipage}{.45\textwidth} \centering
      \includegraphics[width=1\textwidth]{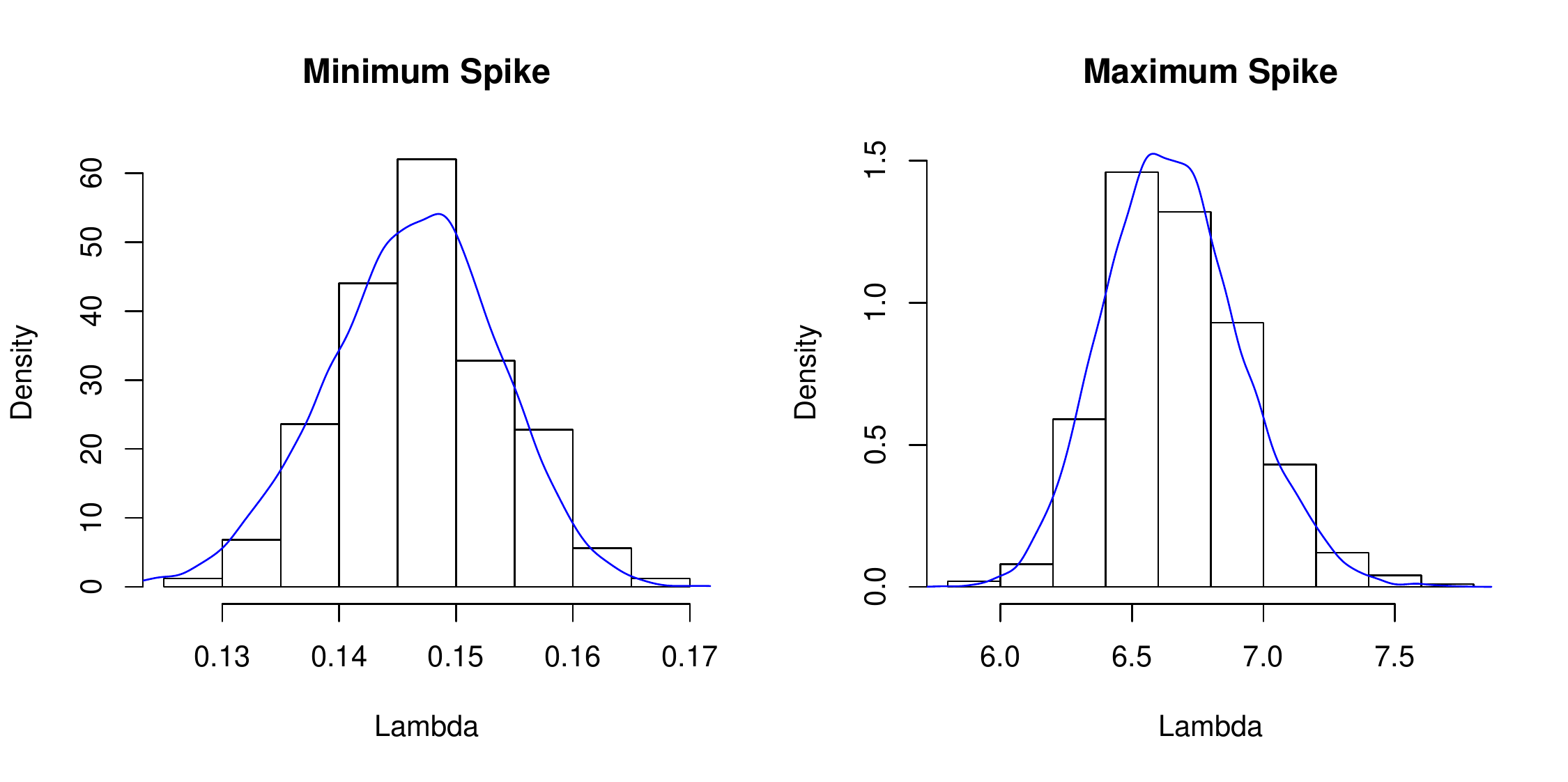}   
    \end{minipage} 
    &
    \begin{minipage}{.45\textwidth} \centering
      \includegraphics[width=1\textwidth]{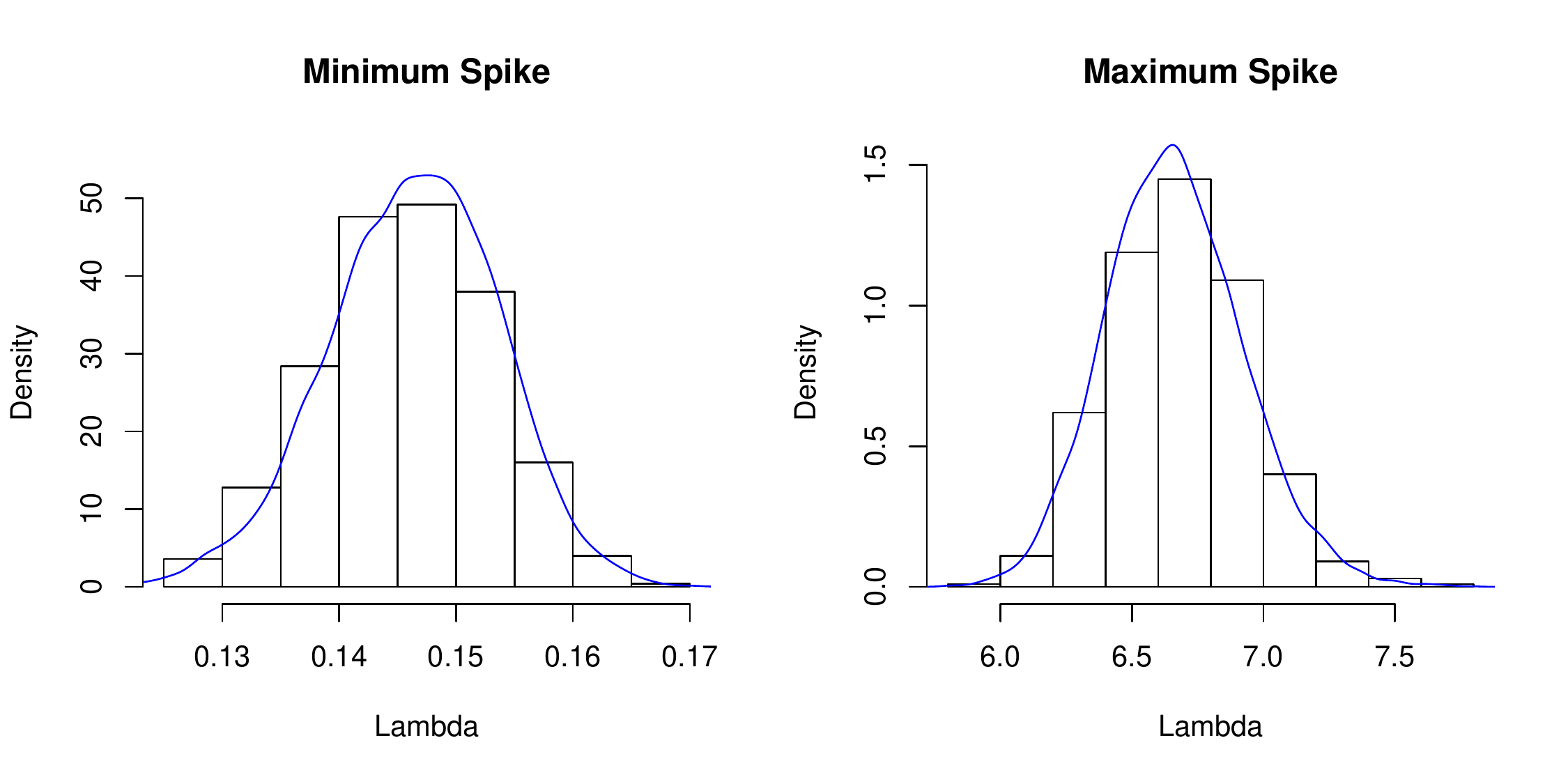}   
    \end{minipage}\\
\begin{minipage}{.45\textwidth} \centering
\scalebox{0.8}{
\begin{tabular}{c}
Scenario 1 \\
\begin{tabular}{c|c|c}
$\rho=0.5$ & $c_X=0.5$ & $c_Y=2$ \\ 
\hline 
$m=1000$ & $n_X=2000$ & $n_Y=500$ \\ 
\end{tabular} \\
 k=4, $\vec{\theta}=\left(15'000,5000,2000,500\right)$.
\end{tabular}}
\end{minipage}
 & 
\begin{minipage}{.45\textwidth} \centering
\scalebox{0.8}{
\begin{tabular}{c}
Scenario 2 \\
\begin{tabular}{c|c|c}
$\rho=0.5$ & $c_X=0.5$ & $c_Y=2$  \\ 
\hline 
$m=1000$ & $n_X=2000$ & $n_Y=500$ \\ 
\end{tabular} \\
 k=4, $\vec{\theta}=\left(5'000,5000,5000,5000\right)$.
\end{tabular}}
\end{minipage} \\
\begin{minipage}{.45\textwidth} \centering
      \includegraphics[width=1\textwidth]{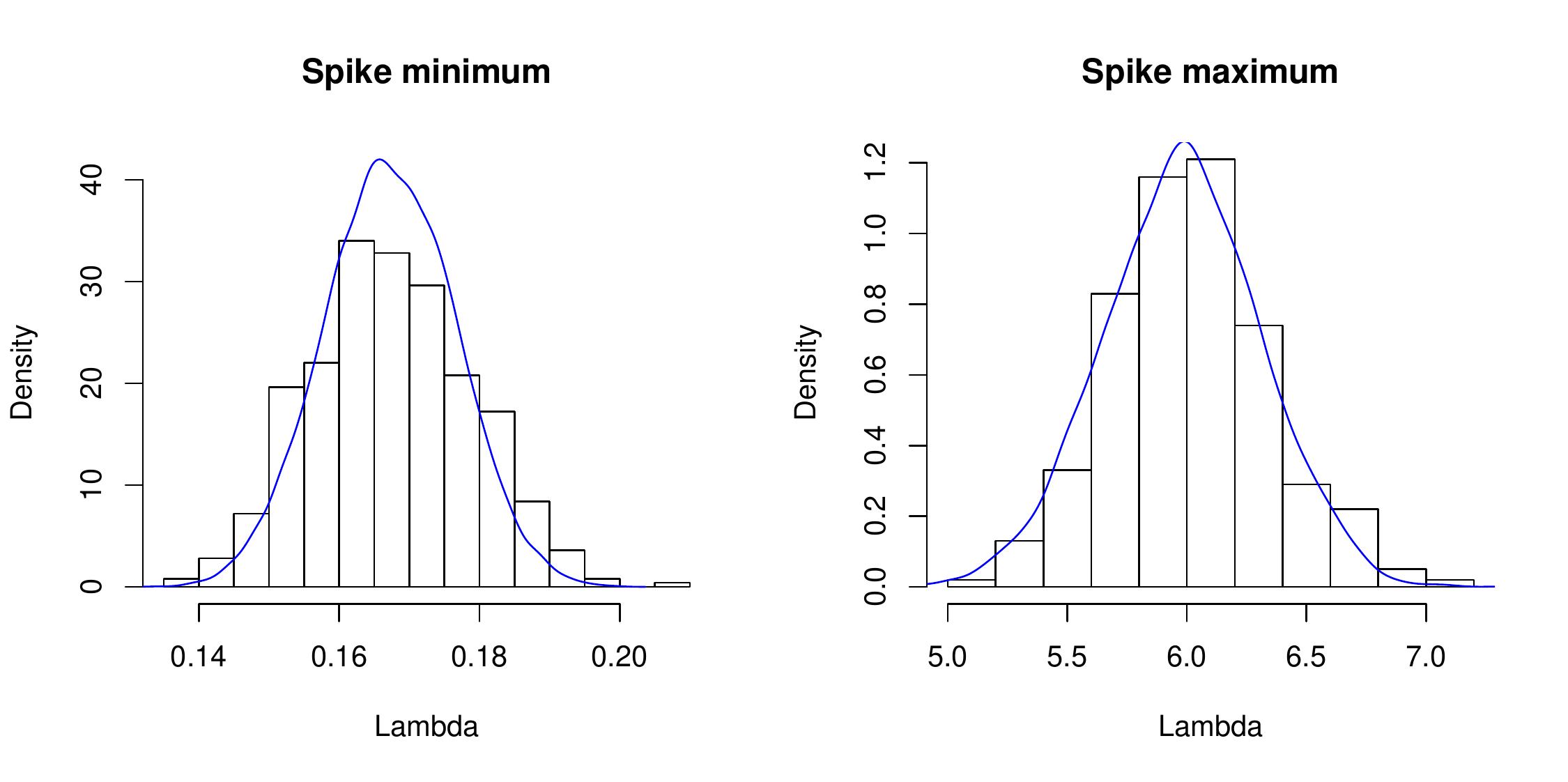}   
    \end{minipage} 
    &
    \begin{minipage}{.45\textwidth} \centering
      \includegraphics[width=1\textwidth]{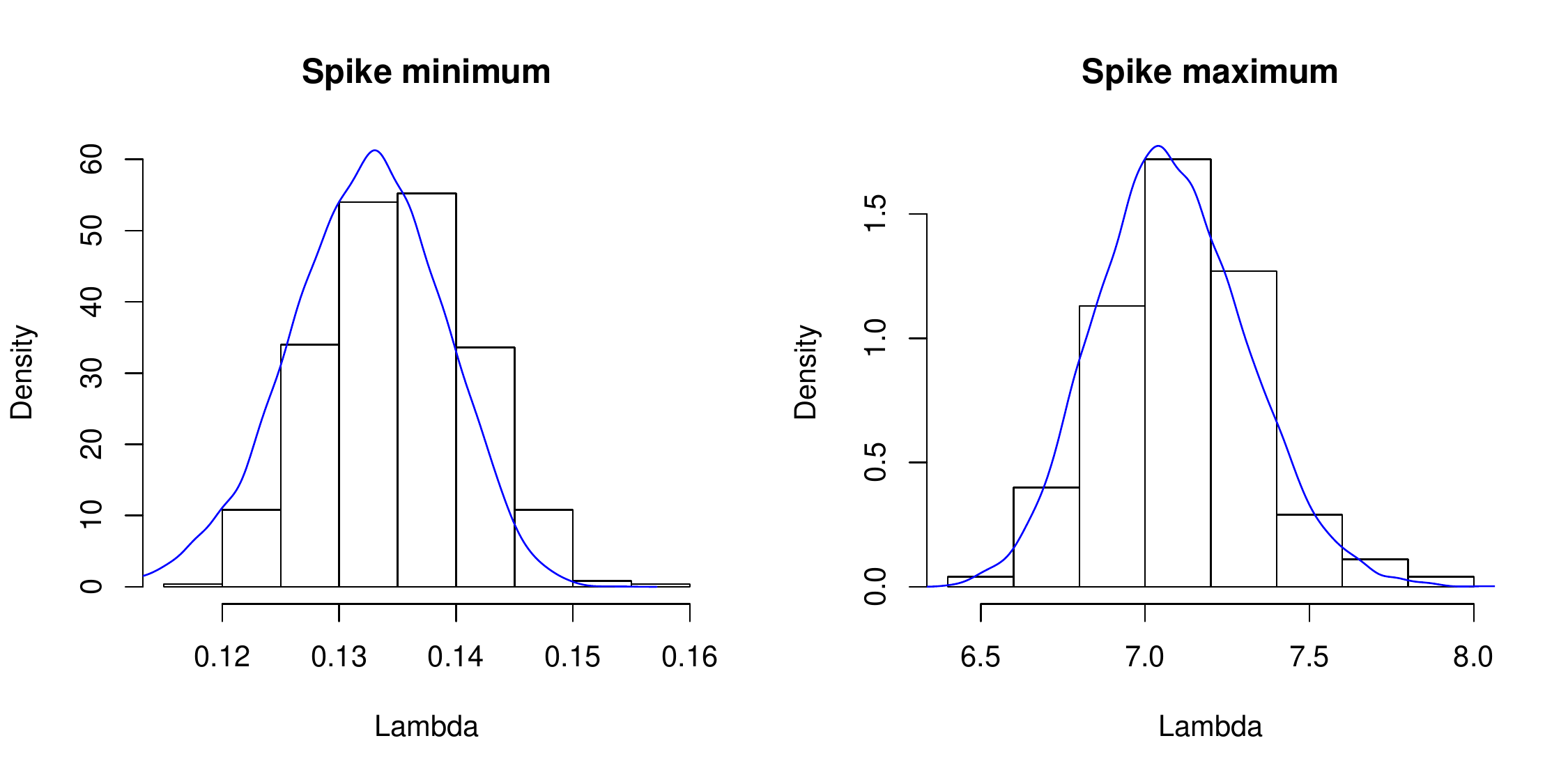}   
    \end{minipage}\\
\begin{minipage}{.45\textwidth} \centering
\scalebox{0.8}{
\begin{tabular}{c}
Scenario 3 \\
\begin{tabular}{c|c|c}
$\rho=0.5$ & $c_X=0.5$ & $c_Y=2$  \\ 
\hline 
$m=1000$ & $n_X=2000$ & $n_Y=500$ \\ 
\end{tabular} \\
 k=1, $\vec{\theta}=5'000$.
\end{tabular}}
\end{minipage}
 & 
\begin{minipage}{.45\textwidth} \centering
\scalebox{0.8}{
\begin{tabular}{c}
Scenario 4 \\
\begin{tabular}{c|c|c}
$\rho=0.5$ & $c_X=0.5$ & $c_Y=2$  \\ 
\hline 
$m=1000$ & $n_X=2000$ & $n_Y=500$ \\ 
\end{tabular} \\
 k=8, $\vec{\theta}=\left(5'000,5'000,...,5'000\right)$.
\end{tabular}}
\end{minipage}\\
\begin{minipage}{.45\textwidth} \centering
      \includegraphics[width=1\textwidth]{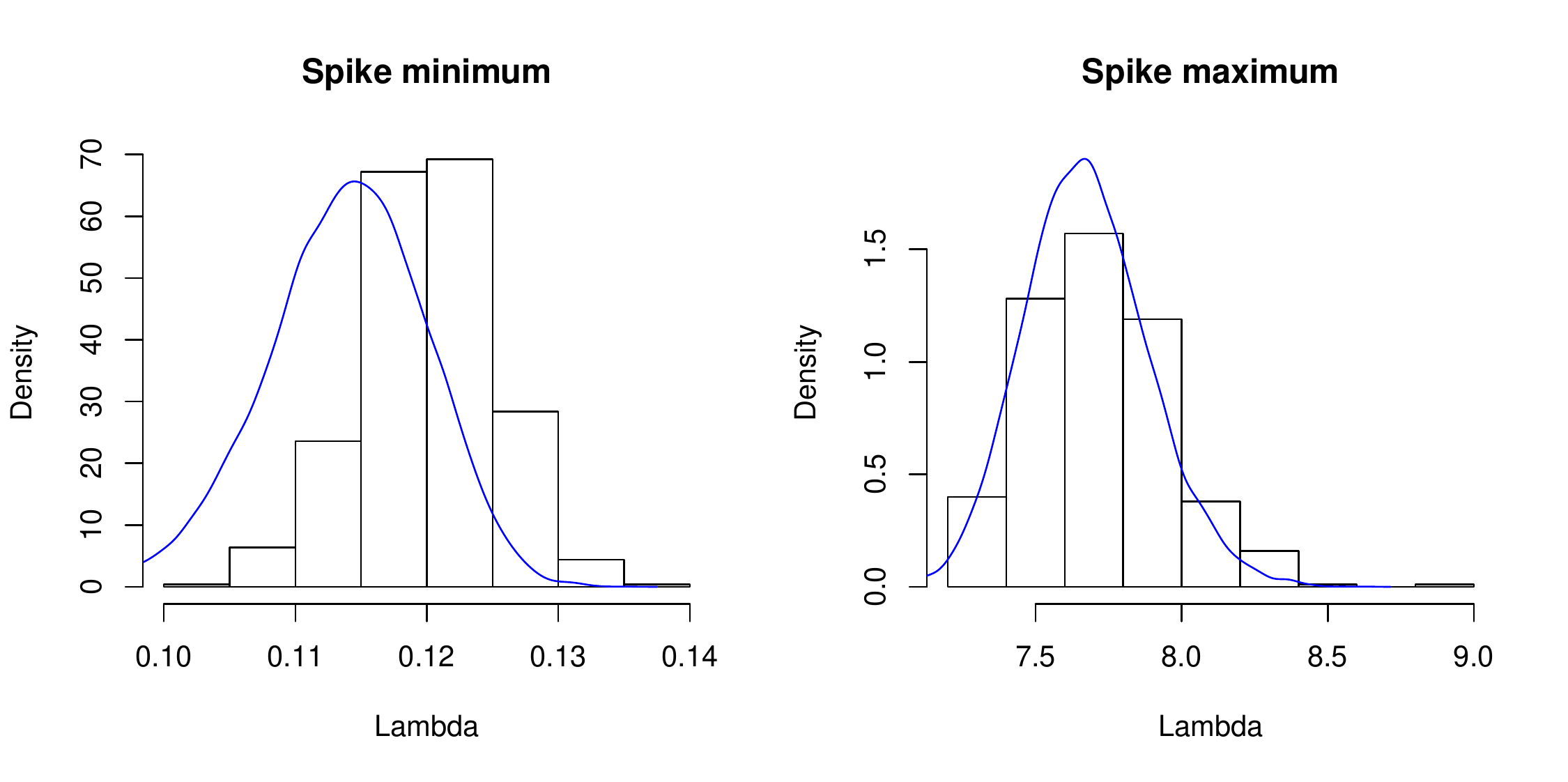}   
    \end{minipage} 
    &
    \begin{minipage}{.45\textwidth} \centering
      \includegraphics[width=1\textwidth]{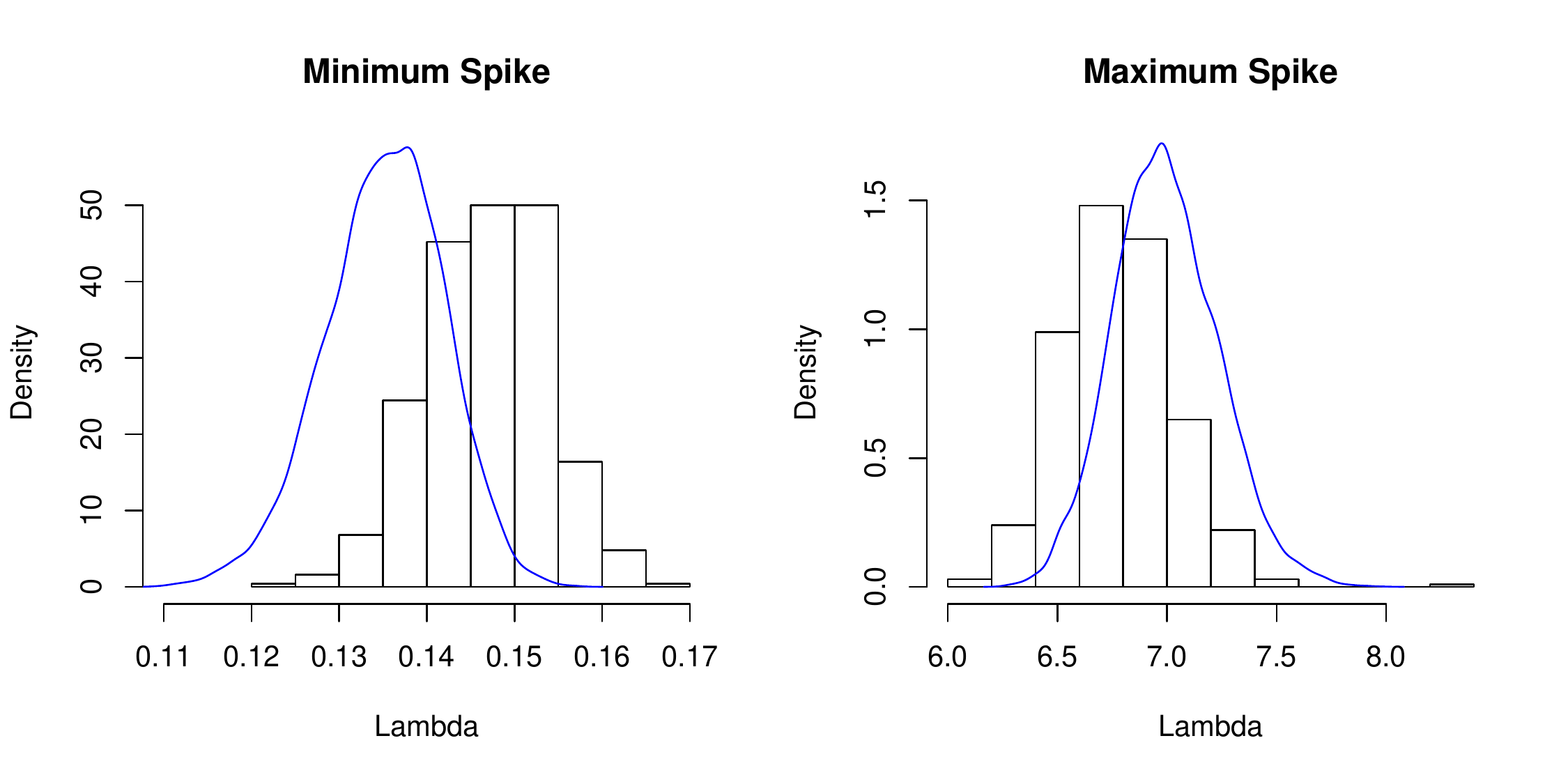}   
    \end{minipage}\\
\begin{minipage}{.45\textwidth} \centering
\scalebox{0.8}{
\begin{tabular}{c}
Scenario 5 \\
\begin{tabular}{c|c|c}
$\rho=0.5$ & $c_X=0.5$ & $c_Y=2$  \\ 
\hline 
$m=1000$ & $n_X=2000$ & $n_Y=500$ \\ 
\end{tabular} \\
 k=15, $\vec{\theta}=\left(5'000,5'000,...,5'000\right)$.
\end{tabular}}
\end{minipage}
 & 
\begin{minipage}{.45\textwidth} \centering
\scalebox{0.8}{
\begin{tabular}{c}
Scenario 6 \\
\begin{tabular}{c|c|c}
$\rho=0.5$ & $c_X=0.5$ & $c_Y=2$  \\ 
\hline 
$m=1000$ & $n_X=2000$ & $n_Y=500$ \\ 
\end{tabular} \\
 k=4, $\vec{\theta}=\left(15'000,5'000,2'000,500\right)$\\
 $k_{est}=7$.
\end{tabular}}
\end{minipage}\\
\begin{minipage}{.45\textwidth} \centering
      \includegraphics[width=1\textwidth]{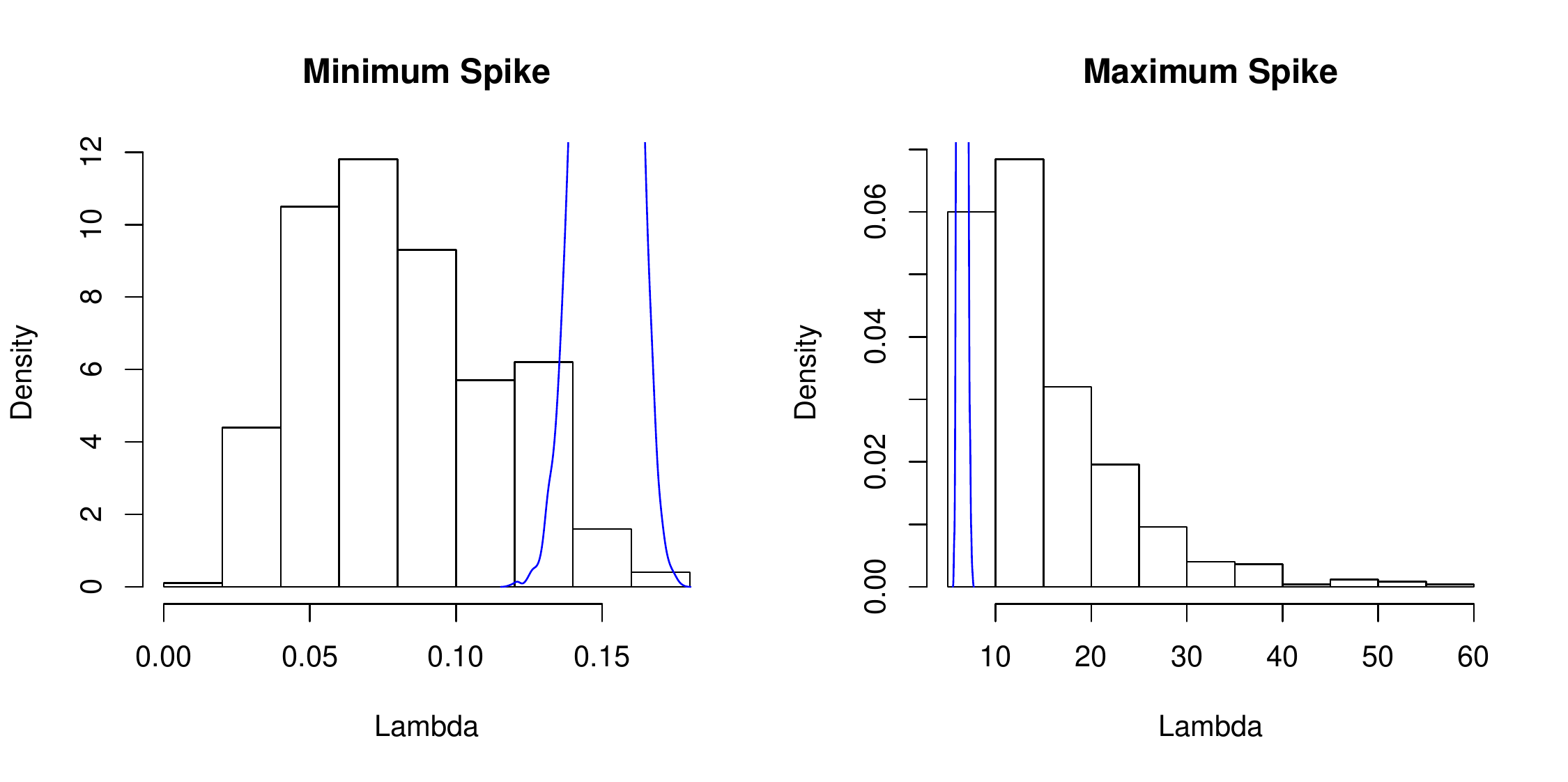}   
    \end{minipage} 
    &
    \begin{minipage}{.45\textwidth} \centering
      \includegraphics[width=1\textwidth]{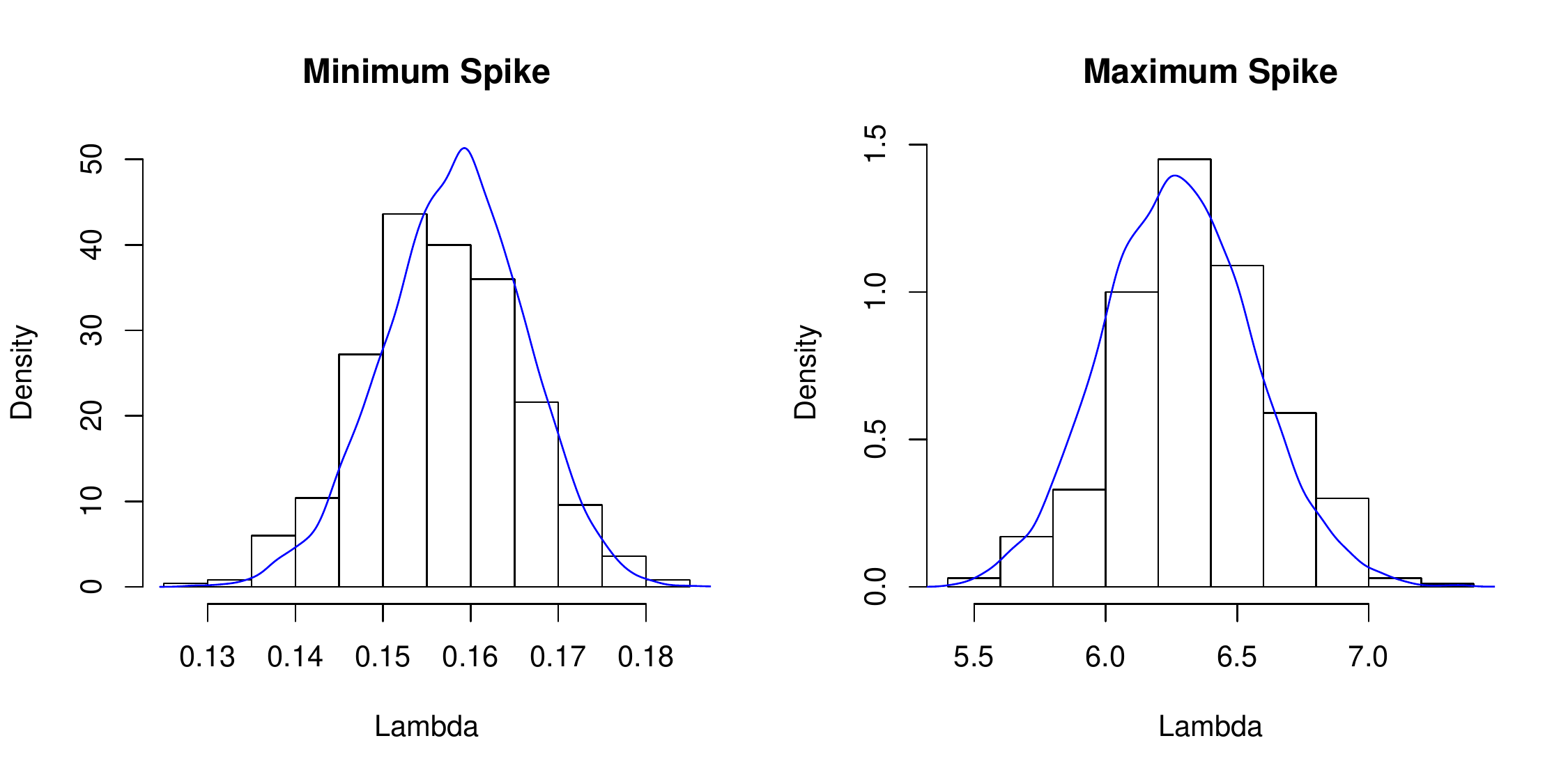}   
    \end{minipage}\\
\begin{minipage}{.45\textwidth} \centering
\scalebox{0.8}{
\begin{tabular}{c}
Scenario 7 \\
\begin{tabular}{c|c|c}
$\rho=0.5$ & $c_X=0.5$ & $c_Y=2$  \\ 
\hline 
$m=1000$ & $n_X=2000$ & $n_Y=500$ \\ 
\end{tabular} \\
 k=4, $\vec{\theta}=\left(15'000,5'000,2'000,500\right)$\\
 $k_{est}=3$.
\end{tabular}}
\end{minipage}
 & 
\begin{minipage}{.45\textwidth} \centering
\scalebox{0.8}{
\begin{tabular}{c}
Scenario 8 \\
\begin{tabular}{c|c|c}
$\rho=0.5$ & $c_X=0.5$ & $c_Y=2$  \\ 
\hline 
$m=1000$ & $n_X=2000$ & $n_Y=500$ \\ 
\end{tabular} \\
k=4, $\vec{\theta}=\left(15'000,3'000,8,6\right)$\\
 $k_{est}=2$.
\end{tabular}}
\end{minipage}

\end{tabular}
\caption{Empirical distributions of the residual spikes together with the Gaussian densities from the theorem \ref{TH=Main} (in blue).  
} \label{tab:residual}
\end{table}

\paragraph*{Multiple eigenvalues}  \ \\
Despite the lack of a proof, the maximum residual distribution when the eigenvalues of the perturbations are multiple is well approximated by our Theorem. This can be seen in Table \ref{tab:residual} but also in Appendix \ref{appendix:Tablecomplete}. 
\paragraph*{Different values of $k$}  \ \\
Scenario 3, 4 and 5 of Table \ref{tab:residual} shows that the result holds for different values of $k$. The less accurate result of scenario 5 is due to the relatively large $k$, whereas the theorem is based on an approximation which considers $k$ to be small compared to $m$. The precision of the asymptotic approximation would be better for $k=15$ when $m=10'000$ with the same $c_X$ and $c_Y$, for example.
\paragraph*{Wrong estimation of $k$}  \ \\
Scenario 6, 7 and 8 of Table \ref{tab:residual} shows the impact of using wrong values of $k$. We see in Scenario 6 that a small overestimation of $k$ leads to a small overestimation of the maximum and small underestimation the minimum. This will lead to conservative tests. Scenario 7 shows that underestimation of $k$ can lead to a bad approximation but in this scenario we neglect perturbations of size $500$! Scenario 8 shows that neglecting two small perturbations of size $6$ and $8$, as we could easily do by mistake, still leads to very accurate approximations.\\
The simulation of Table \ref{tab:residual} are done to convince the reader of the usefulness of Theorem \ref{TH=Main}. In practice, we must estimate the parameters needed in the approximation. An arguments based on the Cauchy-interlacing theorem can convinced the reader that we can estimate 
\begin{eqnarray*}
M_{X,s}=\frac{1}{m}\sum_{i=1}^m \lambda_{W_X}^s
\end{eqnarray*}
by 
\begin{eqnarray*}
\hat{M}_{X,s}=\frac{1}{m-k}\sum_{i=k+1}^m \lambda_{\hat{\Sigma}_X}^s.
\end{eqnarray*}
The impact of using a wrong value for  $k_{est}$ is investigate in Table \ref{tab=simulationkestimation} in Appendix \ref{appendix:Simulationkest}.

\paragraph*{Other simulations}
In appendix \ref{appendix:Simulationkest} we also investigate the approximation with estimated spectra for data with distributions that are  not invariant by rotation. In some scenarios the approximation succeeds to estimate the location but failed to correctly estimate the variance. In others, the location of the maximum residual spike is overestimated and the minimum residual spike is underestimated. This would again lead to conservative tests, but suggests a lack of power.

\section{An application}
In this section, we apply our procedure developed from Theorem \ref{TH=Main} to data $\rX$ and $\rY$. First, each step is briefly explained. Then, an analysis is presented on simulated data together with the mathematical work and the important plots.\\ 
This procedure is not unique and other solutions better adapted to the problem could be implemented. For example, the choice of $k$ and the number of perturbations, could certainly be improved. The goal of this section is to provide a procedure as conservative as possible with reasonably good asymptotic power.

\begin{enumerate}
\item First, we center the data with regard to the rows and columns. 
\item Then, we need to estimate $k$ and rescale the variance. These two tasks are interconnected. One intuitive way to choose $k$ for each matrix ($k_X$ and $k_Y$) consists in looking at the spectra for spikes and keeping in mind that overestimation is preferable to underestimation of the actual value.\\
Using $k=\max(k_X,k_Y)$, we can then rescale the matrices $\rX$ and $\rY$ to create $\bX$ and $\bY$. 
\item Next, we apply the procedure to $\bX$ and $\bY$, using the above $k$. In our case, this leads to two observed extreme residual spikes.
\item We compute the distribution of the residual spike by assuming $k$ perturbation and estimating $M_{s,X}$. 
\item Finally, we can compare the extreme values with their distribution under $\text{H}_0$ for testing purposes.
\end{enumerate}

\begin{Rem} \
Our simulations in Appendix \ref{appendix:Simulationkest}, show that the choice of $k$ does not affect the conservative nature of the test. A strong underestimation of $k$, however, greatly reduces the power. This explains the advice to overestimate $k$.
\end{Rem}

\subsection{Analysis} \label{sec:simulAnalysis}
We observe data $\rX\in \mathbf{R}^{m \times n_X}$ and $\rY\in \mathbf{R}^{m \times n_Y}$ that we suppose is already centred by rows and columns. We choose $k$ by looking at the histogram of the matrices in Figure \ref{Figexample1} where $m=1000$, $n_X=2000$ et $n_Y=500$.
\begin{figure}[hbtp]
\includegraphics[width=0.5\paperwidth]{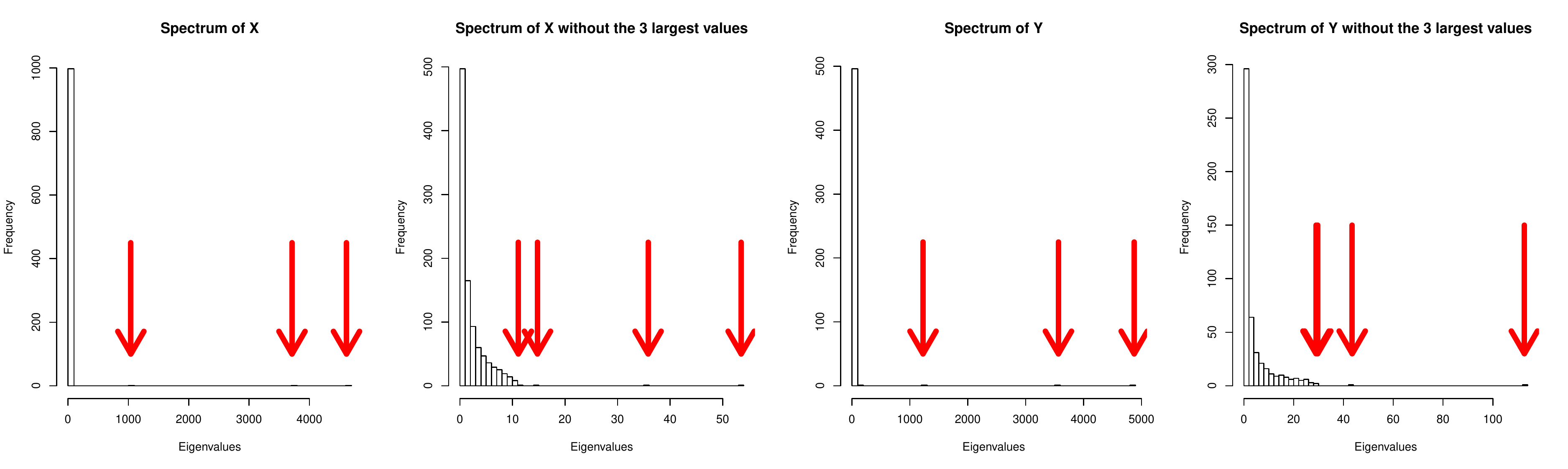} 
\caption{Spectra of $\rX$ and $\rY$ with largest isolated eigenvalues indicated by arrows.}\label{Figexample1}
\end{figure} 
We try to overestimate a lower bound on  $k$ based on Figure \ref{Figexample1}. The spectrum of $\rX$ seems to have $6$ isolated eigenvalues, but we could argue that two other eigenvalues are perturbations. The spectrum $\rY$ clearly shows $5$ isolated eigenvalues and at most $2$ additional ones. We thus set $k=8$ knowing that we probably overestimate the true value. Next, we estimate the variances, 
\begin{eqnarray*}
\hat{\sigma}_X^2=  \frac{1}{m-k}\sum_{i=k+1}^m \lambda_i\left( \frac{1}{n_X}\rX\rX^t \right),\\
\hat{\sigma}_Y^2= \frac{1}{m-k} \sum_{i=k+1}^m \lambda_i\left( \frac{1}{n_Y}\rY\rY^t \right).
\end{eqnarray*} 
We can then rescale the matrices $\rX$ and $\rY$ by $\hat{\sigma}_X$ and $\hat{\sigma}_Y$, respectively, to create the covariance matrices
\begin{eqnarray*}
\hat{\Sigma}_X=\frac{1}{n_X \hat{\sigma}_X^2} \rX \rX^t \text{ and }
\hat{\Sigma}_Y=\frac{1}{n_Y \hat{\sigma}_Y^2} \rY \rY^t.
\end{eqnarray*}
Finally, we filter the matrices as in definition \ref{Def=unbiased}. 
\begin{eqnarray*}
&&\hat{\hat{\Sigma}}_X = \I_m + \sum_{i=1}^k \left(\hat{\hat{\theta}}_{X,i}-1 \right) \hat{u}_{\hat{\Sigma}_X,i}\hat{u}_{\hat{\Sigma}_X,i}^t,\\
&&\hat{\hat{\theta}}_{X,i}= 1+\frac{1}{\frac{1}{m-k} \sum_{j=k+1}^m \frac{\hat{\lambda}_{ \hat{\Sigma}_X,j}}{\hat{\lambda}_{ \hat{\Sigma}_X,i}-\hat{\lambda}_{ \hat{\Sigma}_X,j}}}.
\end{eqnarray*}
The computed residual spikes of $\hat{\hat{\Sigma}}_X^{-1} \hat{\hat{\Sigma}}_Y$ are shown in Table \ref{Tabexample1}.

\begin{table}[hbtp]
\begin{tabular}{l|cccccccc}
$\lambda_{\max}$ & 56.03 & 10.25 & 9.88 & 8.96 & 8.29 & 7.27 & 5.71 & 5.10 \\
$\lambda_{\min}$ & 0.04 & 0.10 & 0.13 & 0.13 & 0.18 & 0.21 & 0.34 & 0.36 
\end{tabular}
\caption{Observed residual spikes.}\label{Tabexample1}
\end{table} 

Using Figure \ref{Figexample2} a, these values are compared to the theoretical distributions of the extreme residual spikes assuming equality of the perturbations of order $k$. The distribution in blue uses the usual estimator of the spectra and the distribution in orange uses the conservative estimator introduced in Appendix \ref{appendix:Spectrumestimation}.
The moments of the spectra are summarize in Figure \ref{Figexample2} b. 
\begin{figure}[hbtp]
\begin{center}
\begin{tabular}{cc}
\begin{minipage}{0.3\paperwidth}
\includegraphics[width=0.3\paperwidth]{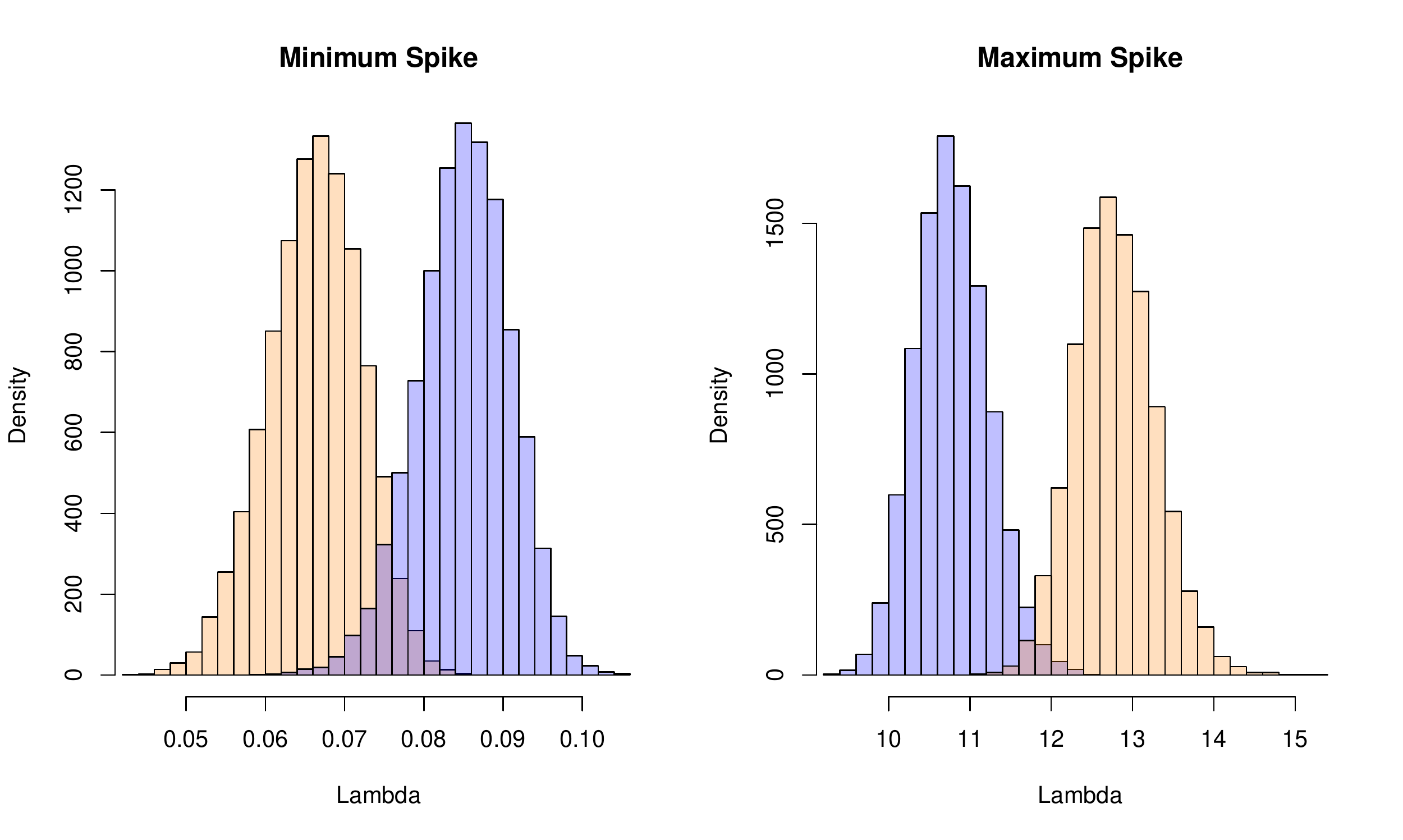}
 \end{minipage}
& 
\begin{minipage}{0.3\paperwidth}
\scalebox{0.7}{
\begin{tabular}{l|cc}
&$\lambda_{\min}$ & $\lambda_{\max}$\\
\color{blue} Usual & (0.085, 0.006) & (10.75, 0.45)\\
 \color{orange} Robust & (0.067, 0.006) & (12.67, 0.47)
\end{tabular}}
\end{minipage}\\
a & b
\end{tabular}
\end{center}
\caption{a: Distribution of the extreme residual spike assuming equality of the covariance, $k=8$ and $\theta_i$ large. (Robust estimation of the spectra in orange.) b: Estimated residual spikes moments, $(\mu,\sigma)$ using usual or robust estimators of the spectral moments. }\label{Figexample2}
\end{figure}

We finally clearly detect two residual spikes. Figure \ref{Figexample3} presents the residual eigenvectors of the residual eigenvalues.

\begin{figure}[hbtp]
\begin{center}
\begin{tabular}{cc}
$\lambda=56.03$ & $\lambda=0.04 $\\
\includegraphics[width=0.25\paperwidth]{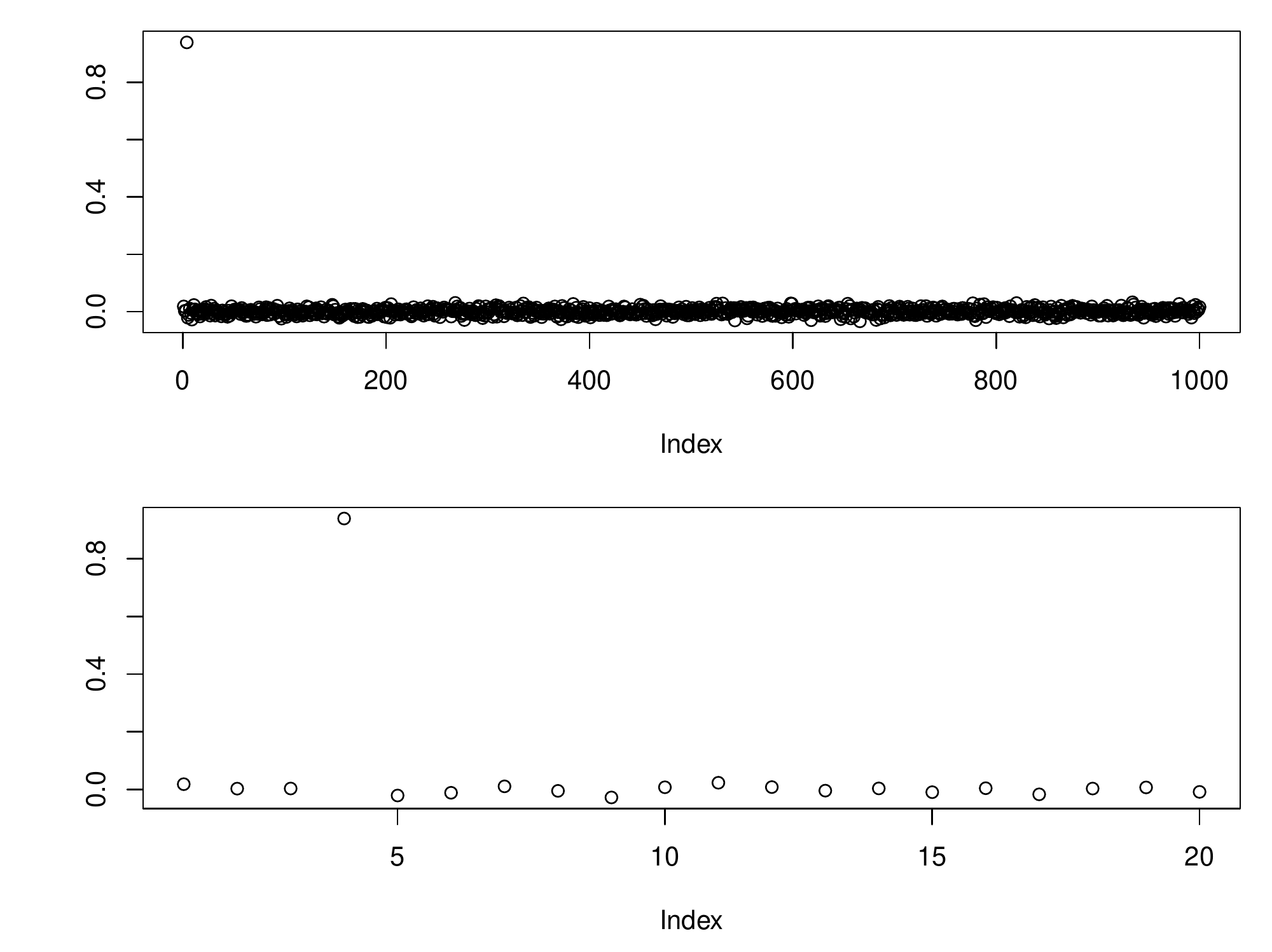} & \includegraphics[width=0.25\paperwidth]{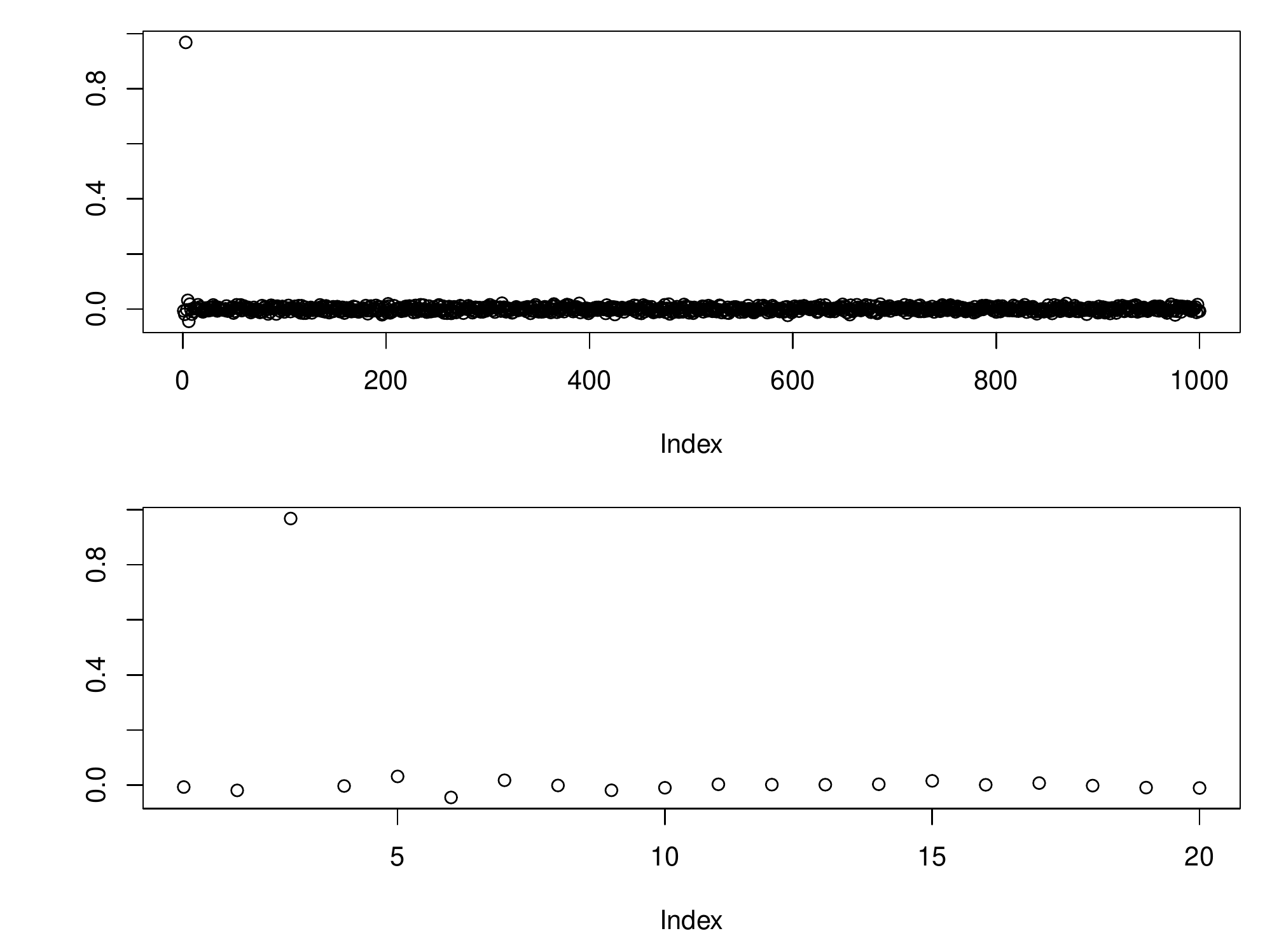} \\
$\lambda=10.25$ & $\lambda=0.10$ \\
\includegraphics[width=0.25\paperwidth]{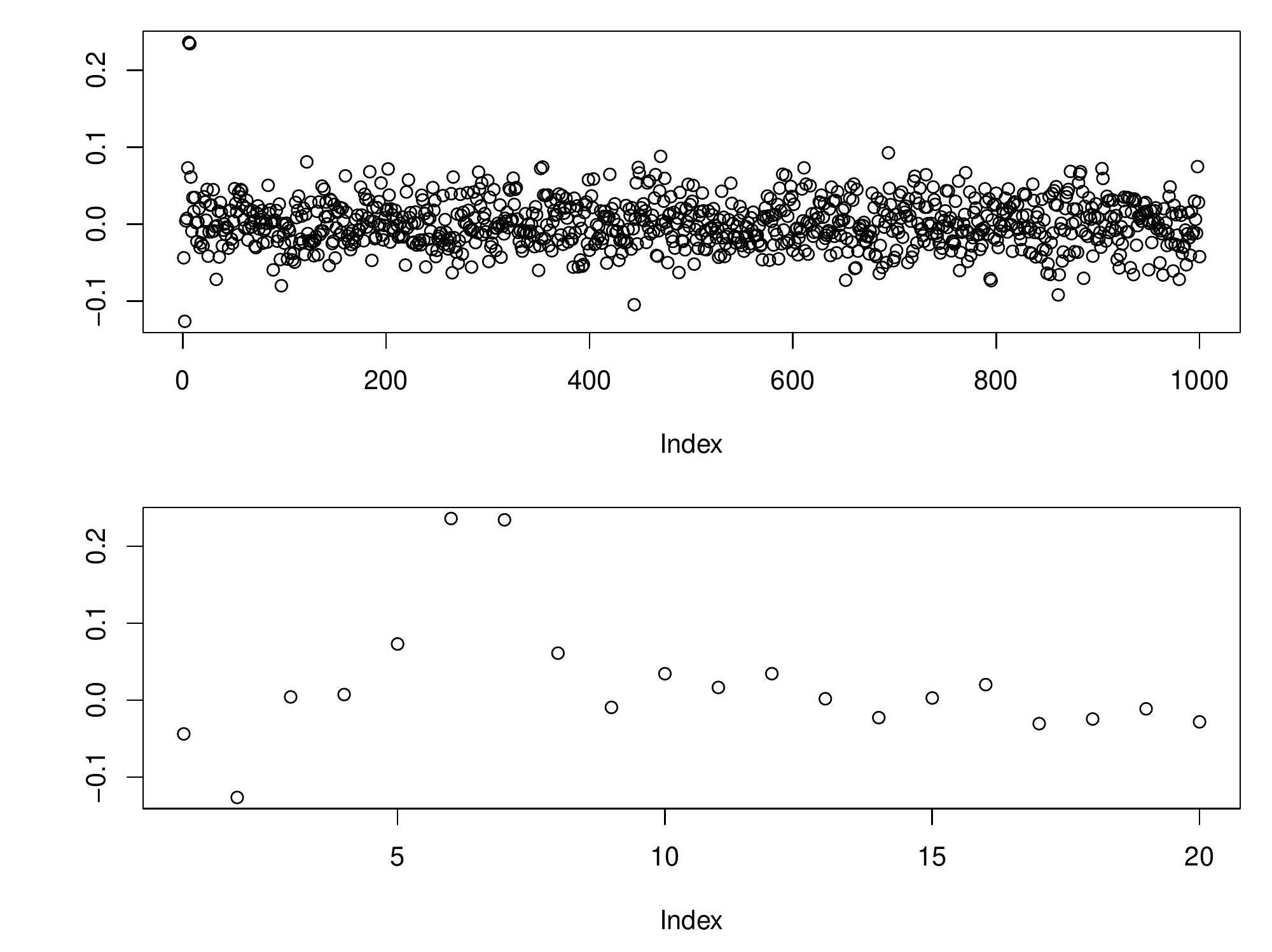} & \includegraphics[width=0.25\paperwidth]{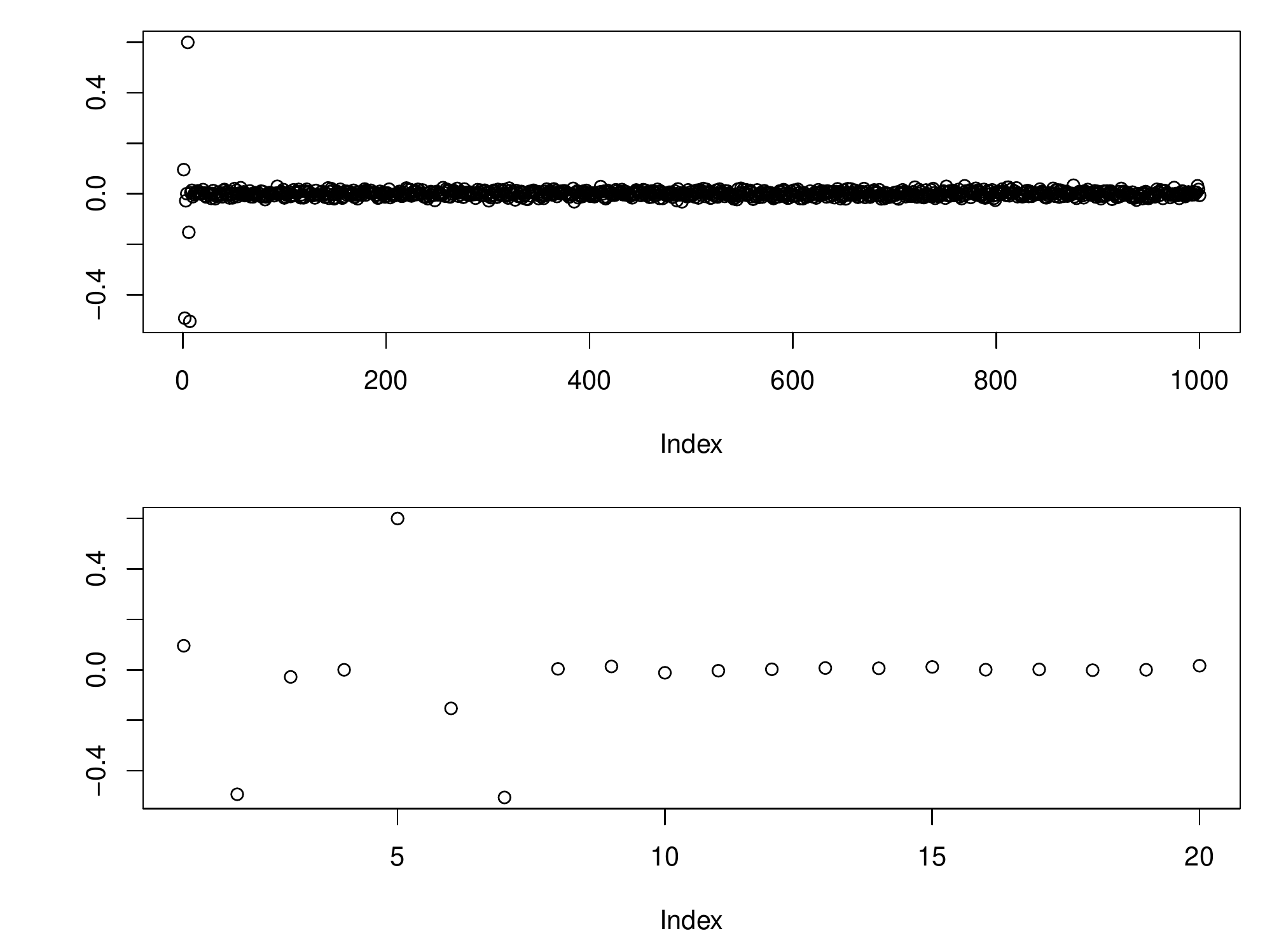} 
\end{tabular}
\end{center}
\caption{Representation of the entiere residual eigenvectors and only the 20 first entries. }\label{Figexample3}
\end{figure}

We conclude that the differences are in direction $e_3$ and $e_4$. As we see in the figure, two other eigenvectors also exhibit a structure. Without our test, we could have concluded that they also represent significant differences, but this residual structure is merely due to the biased estimation of the eigenvectors.

\begin{center}
\textbf{A structure in a residual eigenvector does not imply a real difference!}
\end{center}

\subsection{Conclusion}
By studying perturbation of order $1$ in \cite{mainarticle} and perturbation of order $k$ in \cite{mainarticle2}, we highlighted the lack of power of the usual procedure to detect differences between two groups. This paper extended the residual spike to perturbations of order $k$ by using tools introduced in previous papers. While this test has weaker power than in  \cite{mainarticle2}, it has the important advantage do be able to deal with multiple equal eigenvalues. Additional simulations investigating the robustness of this new procedure are contained in the thesis \cite{mathese} and seems promising.

\pagebreak

\appendix

\section{Table} \label{appendix:Tablecomplete}
We extend the simulations of Section \ref{sec:smallsimulationmainth}. We test our Main Theorem \ref{TH=Main} under different hypotheses on $\X\in\mathbf{R}^{m \times n_X}$ and $\Y\in\mathbf{R}^{m \times n_Y}$ (recall that $W_X=\frac{1}{n_X}\X \X^t$ and $W_Y=\frac{1}{n_Y}\Y \Y^t$):
\begin{enumerate}
\item The matrices $\X$ and $\Y$ contain independent standard normal entries.
\item The columns of the matrices $\X$ and $\Y$ are i.i.d. with a multivariate Student's distribution with $8$ degrees of freedom. For $i=1,2,...,n_X$ and $j=1,2...,n_Y$,
$$X_{\cdot,i} \overset{i.i.d.}{\sim} \frac{\Normal\left(\vec{0},\I_m\right)}{\sqrt{\frac{\chi^2_8}{8}}} \text{ and } Y_{\cdot,j} \overset{i.i.d.}{\sim} \frac{\Normal\left(\vec{0},\I_m\right)}{\sqrt{\frac{\chi^2_8}{8}}}  $$ 
\item The rows of $\X$ and $\Y$ are i.i.d. Gaussian ARMA entries of parameters $\text{AR}=(0.6,0.2)$ and $\text{MA}=(0.5,0.2)$. Moreover, the traces of the matrices are standardised by the estimated variance.
\end{enumerate}

Theorem \ref{TH=Main} is investigate through Table \ref{tab=simulationmainth1} and \ref{tab=simulationmainth2}. 
The estimates of the mean and the standard error of the residual spikes $(\hat{\mu},\hat{\sigma})$ are compare to their empirical values $(\mu,\sigma)$. The simulations are computed for the three scenarios described above. The perturbation $P=\I_m+ \sum_{i=1}^k (\theta_i-1) u_i u_i$ is without loss of generality assumed canonical and the eigenvalue $\theta_i$ are fixed and equal to 5000. \\

These simulation confirms that our Theorem is valid despite some less accurate results in {\color{red} red}. Even though the values of $n_X$ and $n_Y$ are large, this lack of accuracy is probably due to the temporal correlation of the data that reduces the equivalent number of independent columns.

\begin{table}
\begin{center}
\begin{adjustbox}{addcode={\begin{minipage}{\width}}{\caption{Simulations of the maximum residual spikes. The values $(\hat{\mu},\hat{\sigma})$ and $(\mu,\sigma)$ are respectively the estimations of the mean and the standard error of the residual spikes obtained by the Main Theorem \ref{TH=Main} and empirical methods using 500 replicates and $\theta_i=5000$, respectively.  } \label{tab=simulationmainth1}
\end{minipage}},rotate=90,center}
\scalebox{0.55}{
\begin{tabular}{c c}
&$\lambda_{\max}\left(\hat{\hat{\Sigma}}_{X,P_k}^{-1/2} \hat{\hat{\Sigma}}_{Y,P_k} \hat{\hat{\Sigma}}_{X,P_k}^{-1/2}\right)$ \\ 
 \rotatebox{90}{\hspace{-2.2cm} 1. Normal entries. }&\tiny 
\begin{tabular}{c @{\,\vrule width 1.5mm\,}  c @{\,\vrule width 1.5mm\,} c c c c @{\,\vrule width 0.5mm\,}  c c c c @{\,\vrule width 0.5mm\,} c c c c @{\,\vrule width 0.5mm\,} c c c c }
\backslashbox{$n_Y$}{$n_X$} & & \multicolumn{4}{c}{100} & \multicolumn{4}{c}{500} & \multicolumn{4}{c}{1000} & \multicolumn{4}{c}{2000}\\
\specialrule{1.5mm}{0pt}{0pt}
 & \backslashbox{$k$}{$m$} & \multicolumn{2}{c}{100}  & \multicolumn{2}{c}{1000}  &  \multicolumn{2}{c}{100}  & \multicolumn{2}{c}{1000} & \multicolumn{2}{c}{100}  & \multicolumn{2}{c}{1000} & \multicolumn{2}{c}{100}  & \multicolumn{2}{c}{1000} \\
 \specialrule{1.5mm}{0pt}{0pt}
 & &  $(\hat{\mu},\hat{\sigma})$  & $(\mu,\sigma)$ &  $(\hat{\mu},\hat{\sigma})$  & $(\mu,\sigma)$ & $(\hat{\mu},\hat{\sigma})$  & $(\mu,\sigma)$ & $(\hat{\mu},\hat{\sigma})$  & $(\mu,\sigma)$ & $(\hat{\mu},\hat{\sigma})$  & $(\mu,\sigma)$ & $(\hat{\mu},\hat{\sigma})$  & $(\mu,\sigma)$ & $(\hat{\mu},\hat{\sigma})$  & $(\mu,\sigma)$ & $(\hat{\mu},\hat{\sigma})$  & $(\mu,\sigma)$ \\
 \specialrule{1.5mm}{0pt}{0pt}
\multirow{3}{*}{100} 
& 1   & $(3.69,0.47 )$ & $(3.77,0.48 )$ & $(21.91,2.38 )$ & $(22.32,2.36 )$ & $(2.88,0.25 )$ & $(2.94,0.25 )$ & $(13.93,0.71 )$ & $(14.15,0.69 )$ & $(2.76,0.21 )$ & $(2.78,0.21 )$ & $(13.01,0.60 )$ & $(13.14,0.57 )$ & $(2.73,0.22 )$ & $(2.75,0.23 )$ & $(12.39,0.52 )$ & $(12.59,0.50)$ \\ 
 & 4 & $(4.67,0.41 )$ & $(4.85,0.49 )$ & $(26.66,1.92 )$ & $(28.21,2.84 )$ & $(3.37,0.20 )$ & $(3.39,0.22 )$ & $(15.32,0.56 )$ & $(15.50,0.60 )$ & $(3.18,0.18 )$ & $(3.19,0.18 )$ & $(14.12,0.45 )$ & $(14.30,0.44 )$ & $(3.15,0.19 )$ & $(3.17,0.20 )$ & $(13.40,0.38 )$ & $(13.52,0.41)$ \\ 
 & 10 & $(5.64,0.40 )$ & $(6.29,0.71 )$ & $(31.28,1.77 )$ & $(35.98,3.38 )$ & $(3.87,0.18 )$ & $(3.88,0.21 )$ & $(16.66,0.47 )$ & $(16.95,0.55 )$ & $(3.60,0.16 )$ & $(3.56,0.15 )$ & $(15.24,0.40 )$ & $(15.39,0.45 )$ & $(3.59,0.16 )$ & $(3.56,0.16 )$ & $(14.37,0.34 )$ & $(14.56,0.34)$ \\
\specialrule{0.5mm}{0pt}{0pt}
\multirow{3}{*}{500} 
& 1 &  &  &  &  & $( 1.87,0.12 )$ & $(1.87,0.12 )$ & $(5.85,0.33 )$ & $(5.86,0.34 )$ & $(1.72,0.09 )$ & $(1.73,0.09 )$ & $(4.79,0.20 )$ & $(4.81,0.21 )$ & $(1.63,0.07 )$ & $(1.63,0.08 )$ & $(4.28,0.15 )$ & $(4.29,0.16)$ \\ 
&  4 &  &  &  &   & $( 2.10,0.09 )$ & $(2.11,0.10 )$ & $(6.45,0.24 )$ & $(6.58,0.30 )$ & $(1.90,0.07 )$ & $(1.90,0.08 )$ & $(5.19,0.16 )$ & $(5.22,0.17 )$ & $(1.78,0.06 )$ & $(1.77,0.06 )$ & $(4.56,0.11 )$ & $(4.56,0.12)$ \\ 
&  10 &  &  &  &   & $(2.32,0.09 )$ & $(2.33,0.10 )$ & $(7.06,0.21 )$ & $(7.21,0.29 )$ & $(2.07,0.07 )$ & $(2.06,0.07 )$ & $(5.57,0.14 )$ & $(5.63,0.17 )$ & $(1.92,0.05 )$ & $(1.90,0.05 )$ & $(4.85,0.10 )$ & $(4.87,0.11)$ \\
\specialrule{0.5mm}{0pt}{0pt}
\multirow{3}{*}{1000} 
& 1 &  &  &  &  &  &  &  &  & $(1.57,0.07 )$ & $(1.57,0.07 )$ & $(3.73,0.15 )$ & $(3.74,0.15 )$ & $(1.48,0.06 )$ & $(1.47,0.06 )$ & $(3.19,0.10 )$ & $(3.19,0.11)$ \\ 
 & 4 &  &  &  &  &  &  &  &  & $(1.71,0.06 )$ & $(1.70,0.06 )$ & $(4.02,0.11 )$ & $(4.04,0.13 )$ & $(1.58,0.04 )$ & $(1.58,0.04 )$ & $(3.39,0.08 )$ & $(3.39,0.08)$ \\ 
 & 10 &  &  &  &  &  &  &  &  & $(1.84,0.05 )$ & $(1.84,0.05 )$ & $(4.31,0.10 )$ & $(4.36,0.12 )$ & $(1.69,0.04 )$ & $(1.68,0.04 )$ & $(3.58,0.07 )$ & $(3.59,0.08)$ \\
\specialrule{0.5mm}{0pt}{0pt}
\multirow{3}{*}{2000} 
& 1&  & &  &  &  &  &  &  &  &  &  &  & $(1.37,0.04 )$ & $(1.37,0.04 )$ & $(2.62,0.08 )$ & $(2.63,0.08)$ \\ 
 & 4  & &  &  &  &  &  &  &  &  &  &  &  & $(1.46,0.03 )$ & $(1.45,0.03 )$ & $(2.77,0.06 )$ & $(2.77,0.06)$ \\ 
 & 10  &  &  &  &  &  &  &  &  &  &  &  &  & $(1.54,0.03 )$ & $(1.53,0.03 )$ & $(2.92,0.05 )$ & $(2.93,0.06)$ \\
 \hline  \hline  
\end{tabular} \\

\rotatebox{90}{\hspace{-2.2cm} 2. Multivariate Student. } & \tiny \begin{tabular}{c @{\,\vrule width 1.5mm\,}  c @{\,\vrule width 1.5mm\,} c c c c @{\,\vrule width 0.5mm\,}  c c c c @{\,\vrule width 0.5mm\,} c c c c @{\,\vrule width 0.5mm\,} c c c c }
\backslashbox{$n_Y$}{$n_X$} & & \multicolumn{4}{c}{100} & \multicolumn{4}{c}{500} & \multicolumn{4}{c}{1000} & \multicolumn{4}{c}{2000}\\
\specialrule{1.5mm}{0pt}{0pt}
 & \backslashbox{$k$}{$m$} & \multicolumn{2}{c}{100}  & \multicolumn{2}{c}{1000}  &  \multicolumn{2}{c}{100}  & \multicolumn{2}{c}{1000} & \multicolumn{2}{c}{100}  & \multicolumn{2}{c}{1000} & \multicolumn{2}{c}{100}  & \multicolumn{2}{c}{1000} \\
 \specialrule{1.5mm}{0pt}{0pt}
 & &  $(\hat{\mu},\hat{\sigma})$  & $(\mu,\sigma)$ &  $(\hat{\mu},\hat{\sigma})$  & $(\mu,\sigma)$ & $(\hat{\mu},\hat{\sigma})$  & $(\mu,\sigma)$ & $(\hat{\mu},\hat{\sigma})$  & $(\mu,\sigma)$ & $(\hat{\mu},\hat{\sigma})$  & $(\mu,\sigma)$ & $(\hat{\mu},\hat{\sigma})$  & $(\mu,\sigma)$ & $(\hat{\mu},\hat{\sigma})$  & $(\mu,\sigma)$ & $(\hat{\mu},\hat{\sigma})$  & $(\mu,\sigma)$ \\
 \specialrule{1.5mm}{0pt}{0pt}
\multirow{3}{*}{100} 
&1 & $(5.00,1.23)$ & $(5.37,1.24)$ & $(31.66,7.70)$ & $(33.68,7.67)$ & $(3.27,0.37)$ & $(3.35,0.36)$ & $(18.45,2.87)$ & $(19.15,2.96)$ & $(3.24,0.39)$ & $(3.33,0.40)$ & $(16.33,1.70)$ & $(16.69,1.52)$ & $(3.16,0.42)$ & $(3.27,0.40)$ & $(18.30,3.61)$ & $(18.72,3.24)$ \\ 
 & 4 & $(7.38,1.03)$ & $(7.56,1.32)$ & $(47.52,6.91)$ & $(49.82,9.18)$ & $(4.00,0.31)$ & $(4.07,0.35)$ & $(24.02,2.34)$ & $(24.75,3.40)$ & $(4.04,0.33)$ & $(4.13,0.38)$ & $(19.53,1.30)$ & $(19.89,1.65)$ & $(3.95,0.33)$ & $(3.98,0.41)$ & $(25.24,3.02)$ & $(25.23,3.76)$ \\ 
 & 10 & $(9.76,1.03)$ & $(10.40,1.71)$ & $(62.93,6.28)$ & $(69.96,9.63)$ & $(4.75,0.29)$ & $(4.76,0.33)$ & $(29.61,2.13)$ & $(31.50,3.54)$ & $(4.86,0.32)$ & $(4.86,0.36)$ & $(22.77,1.18)$ & $(23.40,1.70)$ & $(4.78,0.32)$ & $(4.83,0.41)$ & $(32.12,2.80)$ & $(32.94,3.51)$ \\
\specialrule{0.5mm}{0pt}{0pt}
\multirow{3}{*}{500} 
& 1 &  &  &  &  & $(2.16,0.17)$ & $(2.15,0.16)$ & $(7.56,0.62)$ & $(7.62,0.57)$ & $(1.92,0.13)$ & $(1.91,0.13)$ & $(6.12,0.42)$ & $(6.22,0.43)$ & $(1.81,0.11)$ & $(1.84,0.11)$ & $(6.15,0.90)$ & $(6.32,1.02)$ \\ 
&  4 &  &  &  &  & $(2.49,0.14)$ & $(2.47,0.16)$ & $(8.76,0.50)$ & $(8.87,0.61)$ & $(2.16,0.10)$ & $(2.14,0.10)$ & $(6.96,0.33)$ & $(7.00,0.40)$ & $(2.03,0.09)$ & $(2.03,0.10)$ & $(8.03,0.81)$ & $(8.12,1.14)$ \\ 
&  10 &  &  &  &  & $(2.83,0.12)$ & $(2.79,0.15)$ & $(9.99,0.45)$ & $(10.17,0.63)$ & $(2.40,0.09)$ & $(2.36,0.10)$ & $(7.77,0.30)$ & $(7.99,0.43)$ & $(2.24,0.08)$ & $(2.24,0.10)$ & $(9.99,0.72)$ & $(10.69,1.41)$ \\ 
\specialrule{0.5mm}{0pt}{0pt}
\multirow{3}{*}{1000} 
& 1 &  &  &  &  &  &  &  &  & $(1.72,0.10)$ & $(1.71,0.09)$ & $(5.28,0.87)$ & $(5.34,0.84)$ & $(1.70,0.17)$ & $(1.70,0.16)$ & $(3.94,0.21)$ & $(3.92,0.21)$ \\ 
  & 4 &  &  &  &  &  &  &  & & $(1.91,0.08)$ & $(1.89,0.08)$ & $(7.05,0.78)$ & $(7.22,1.50)$ & $(2.03,0.15)$ & $(2.01,0.21)$ & $(4.36,0.17)$ & $(4.41,0.24)$ \\ 
  & 10 &  &  &  &  &  &  &  &  & $(2.09,0.07)$ & $(2.07,0.08)$ & $(8.80,0.71)$ & $(9.92,1.83)$ & $(2.37,0.14)$ & $(2.42,0.24)$ & $(4.77,0.15)$ & $(4.91,0.29)$ \\ 
  \specialrule{0.5mm}{0pt}{0pt}
\multirow{3}{*}{2000} 
& 1 &  &  &  &  &  &  &  &  &  &  &  &  & $(1.46,0.06)$ & $(1.46,0.05)$ & $(3.17,0.13)$ & $(3.18,0.12)$ \\ 
  & 4 &  &  &  &  &  &  &  &  &  &  &  &  & $(1.57,0.04)$ & $(1.56,0.05)$ & $(3.43,0.10)$ & $(3.43,0.10)$ \\ 
  & 10 &  &  &  &  &  &  &  &  &  &  &  &  & $(1.68,0.04)$ & $(1.66,0.05)$ & $(3.66,0.09)$ & $(3.67,0.10)$ \\
  \hline \hline 
\end{tabular}\\ 
\rotatebox{90}{\hspace{-2.3cm}3. ARMA \scriptsize $\big((0.6,0.2),(0.5,0.2)\big)$. }& \tiny

\begin{tabular}{c @{\,\vrule width 1.5mm\,}  c @{\,\vrule width 1.5mm\,} c c c c @{\,\vrule width 0.5mm\,}  c c c c @{\,\vrule width 0.5mm\,} c c c c @{\,\vrule width 0.5mm\,} c c c c }
\backslashbox{$n_Y$}{$n_X$} & & \multicolumn{4}{c}{100} & \multicolumn{4}{c}{500} & \multicolumn{4}{c}{1000} & \multicolumn{4}{c}{2000}\\
\specialrule{1.5mm}{0pt}{0pt}
 & \backslashbox{$k$}{$m$} & \multicolumn{2}{c}{100}  & \multicolumn{2}{c}{1000}  &  \multicolumn{2}{c}{100}  & \multicolumn{2}{c}{1000} & \multicolumn{2}{c}{100}  & \multicolumn{2}{c}{1000} & \multicolumn{2}{c}{100}  & \multicolumn{2}{c}{1000} \\
 \specialrule{1.5mm}{0pt}{0pt}
 & &  $(\hat{\mu},\hat{\sigma})$  & $(\mu,\sigma)$ &  $(\hat{\mu},\hat{\sigma})$  & $(\mu,\sigma)$ & $(\hat{\mu},\hat{\sigma})$  & $(\mu,\sigma)$ & $(\hat{\mu},\hat{\sigma})$  & $(\mu,\sigma)$ & $(\hat{\mu},\hat{\sigma})$  & $(\mu,\sigma)$ & $(\hat{\mu},\hat{\sigma})$  & $(\mu,\sigma)$ & $(\hat{\mu},\hat{\sigma})$  & $(\mu,\sigma)$ & $(\hat{\mu},\hat{\sigma})$  & $(\mu,\sigma)$ \\
 \specialrule{1.5mm}{0pt}{0pt}
\multirow{3}{*}{100} 
&1  & $(28.96,12.69)$ & $ (36.79,16.64)$ & $ (251.42,105.05)$ & $ (333.04,156.12)$ & $ (19.48,7.79)$ & $ (25.23,9.14)$ & $ (154.16,50.52)$ & $ (187.38,61.43)$ & $ (15.59,5.62)$ & $ (18.96,5.99)$ & $ (139.57,46.00)$ & $ (170.16,49.33)$ & $ (15.10,5.05)$ & $ (18.11,5.88)$ & $ (132.97,47.80)$ & $ (163.83,51.00)$ \\ 
  & 4 & $(54.47,13.21)$ & $ (65.76,25.01)$ & $ (466.97,110.66)$ & $ (570.47,218.52)$ & $ (35.83,7.83)$ & $ (41.70,8.71)$ & $ (254.86,48.94)$ & $ (290.63,51.26)$ & $ (27.23,5.62)$ & $ (30.07,5.66)$ & $ (231.95,44.43)$ & $ (261.05,38.51)$ & $ (25.11,4.72)$ & $ (28.40,4.30)$ & $ (227.38,43.68)$ & $ (251.50,37.94)$ \\ 
  & 10 & $(84.26,13.36)$ & $ (92.84,28.21)$ & $ (703.57,108.72)$ & $ (791.92,255.82)$ & $ (51.87,7.68)$ & $ (55.35,8.47)$ & $ (369.98,50.60)$ & $ (388.93,55.27)$ & $ (39.91,6.00)$ & $ (40.14,4.46)$ & $ (332.81,44.09)$ & $ (331.52,33.62)$ & $ (36.21,4.74)$ & $ (34.65,2.84)$ & $ (325.40,43.65)$ & $ (313.54,24.40)$ \\ 
\specialrule{0.5mm}{0pt}{0pt}
\multirow{3}{*}{500} 
& 1 &  &  &  &  & $(7.24,1.55)$ & $ (7.70,1.67)$ & $ (55.71,11.52)$ & $ (58.73,12.43)$ & $ (5.86,1.06)$ & $ (6.15,1.21)$ & $ (42.51,7.07)$ & $ (45.19,8.13)$ & $ (5.29,0.96)$ & $ (5.57,1.01)$ & $ (36.00,5.82)$ & $ (36.68,5.54)$ \\ 
  & 4 &  &  &  &  & $(10.36,1.37)$ & $ (10.66,1.68)$ & $ (78.46,9.99)$ & $ (82.43,13.21)$ & $ (8.09,0.97)$ & $ (8.38,1.13)$ & $ (57.47,6.27)$ & $ (59.21,7.58)$ & $ (7.28,0.85)$ & $ (7.40,0.96)$ & $ (46.98,4.82)$ & $ (48.43,4.76)$ \\ 
  & 10 &  &  &  &  & $(13.38,1.31)$ & $ (14.76,1.97)$ & $ (100.86,8.87)$ & $ (112.61,16.04)$ & $ (10.30,0.93)$ & $ (10.47,1.20)$ & $ (71.75,5.73)$ & $ (74.17,6.73)$ & $ (9.25,0.82)$ & $ (9.26,0.90)$ & $ (58.59,4.59)$ & $ (58.84,4.30)$ \\
\specialrule{0.5mm}{0pt}{0pt}
\multirow{3}{*}{1000} 
& 1  &  &  &  &  &  &  &  &  & $(4.46,0.66)$ & $ (4.52,0.73)$ & $ (28.97,3.98)$ & $ (29.43,4.26)$ & $ (3.83,0.48)$ & $ (3.93,0.54)$ & $ (22.44,2.76)$ & $ (23.08,2.87)$ \\ 
  & 4 &  &  &  &  &  &  &  &  & $(5.77,0.57)$ & $ (5.82,0.61)$ & $ (36.95,3.37)$ & $ (37.93,3.71)$ & $ (4.79,0.42)$ & $ (4.83,0.47)$ & $ (28.01,2.35)$ & $ (28.29,2.38)$ \\ 
  & 10 &  &  &  &  &  &  &  &  & $(7.10,0.54)$ & $ (7.34,0.73)$ & $ (44.83,3.18)$ & $ (47.80,4.46)$ & $ (5.81,0.39)$ & $ (5.84,0.46)$ & $ (33.31,2.08)$ & $ (33.98,2.39)$ \\
\specialrule{0.5mm}{0pt}{0pt}
\multirow{3}{*}{2000} 
& 1 &  &  &  &  &  &  &  &  &  &  &  &  & \color{red} $(3.09,0.37)$ & \color{red} $ (3.17,0.34)$ & \color{red} $ (15.35,1.61)$ & \color{red} $ (15.58,1.60)$ \\ 
  & 4 &  &  &  &  &  &  &  &  &  &  &  &  &\color{red} $(3.78,0.29)$ &\color{red} $ (3.82,0.34)$ &\color{red} $ (18.45,1.25)$ &\color{red} $ (18.69,1.43)$ \\ 
  & 10 &  &  &  &  &  &  &  &  &  &  &  &  &\color{red} $(4.51,0.26)$ &\color{red} $ (4.62,0.36)$ &\color{red} $ (21.45,1.13)$ &\color{red} $ (22.16,1.49)$ \\ 
\end{tabular}
\normalsize 
\end{tabular}}
\end{adjustbox}
\end{center}
\end{table}

\begin{table}
\begin{center}
\begin{adjustbox}{addcode={\begin{minipage}{\width}}{\caption{
Simulations of the minimum residual spikes. The values $(\hat{\mu},\hat{\sigma})$ and $(\mu,\sigma)$ are the estimates of the mean and the standard error of the residual spikes obtained by the Main Theorem \ref{TH=Main} and empirical methods using 500 replicates and $\theta_i=5000$, respectively. } \label{tab=simulationmainth2}\end{minipage}},rotate=90,center}
\scalebox{0.55}{
\begin{tabular}{c c}
&$\lambda_{\min}\left(\hat{\hat{\Sigma}}_{X,P_k}^{-1/2} \hat{\hat{\Sigma}}_{Y,P_k} \hat{\hat{\Sigma}}_{X,P_k}^{-1/2}\right)$ \\ 
 \rotatebox{90}{\hspace{-2.2cm} 1. Normal entries. }&\tiny 
\begin{tabular}{c @{\,\vrule width 1.5mm\,}  c @{\,\vrule width 1.5mm\,} c c c c @{\,\vrule width 0.5mm\,}  c c c c @{\,\vrule width 0.5mm\,} c c c c @{\,\vrule width 0.5mm\,} c c c c }
\backslashbox{$n_Y$}{$n_X$} & & \multicolumn{4}{c}{100} & \multicolumn{4}{c}{500} & \multicolumn{4}{c}{1000} & \multicolumn{4}{c}{2000}\\
\specialrule{1.5mm}{0pt}{0pt}
 & \backslashbox{$k$}{$m$} & \multicolumn{2}{c}{100}  & \multicolumn{2}{c}{1000}  &  \multicolumn{2}{c}{100}  & \multicolumn{2}{c}{1000} & \multicolumn{2}{c}{100}  & \multicolumn{2}{c}{1000} & \multicolumn{2}{c}{100}  & \multicolumn{2}{c}{1000} \\
 \specialrule{1.5mm}{0pt}{0pt}
 & &  $(\hat{\mu},\hat{\sigma})$  & $(\mu,\sigma)$ &  $(\hat{\mu},\hat{\sigma})$  & $(\mu,\sigma)$ & $(\hat{\mu},\hat{\sigma})$  & $(\mu,\sigma)$ & $(\hat{\mu},\hat{\sigma})$  & $(\mu,\sigma)$ & $(\hat{\mu},\hat{\sigma})$  & $(\mu,\sigma)$ & $(\hat{\mu},\hat{\sigma})$  & $(\mu,\sigma)$ & $(\hat{\mu},\hat{\sigma})$  & $(\mu,\sigma)$ & $(\hat{\mu},\hat{\sigma})$  & $(\mu,\sigma)$ \\
 \specialrule{1.5mm}{0pt}{0pt}
\multirow{3}{*}{100} &
1 & $(0.269,0.035)$ & $(0.278,0.035)$ & $(0.045,0.005)$ & $(0.045,0.005)$ & $(0.347,0.031)$ & $(0.348,0.042)$ & $(0.072,0.003)$ & $(0.070,0.009)$ & $(0.363,0.028)$ & $(0.353,0.043)$ & $(0.077,0.003)$ & $(0.076,0.010)$ & $(0.368,0.031)$ & $(0.362,0.042)$ & $(0.080,0.003)$ & $(0.079,0.011)$ \\ 
&  4 & $(0.201,0.026)$ & $(0.222,0.022)$ & $(0.036,0.004)$ & $(0.036,0.003)$ & $(0.271,0.027)$ & $(0.272,0.028)$ & $(0.056,0.005)$ & $(0.055,0.005)$ & $(0.286,0.027)$ & $(0.276,0.029)$ & $(0.060,0.005)$ & $(0.058,0.006)$ & $(0.289,0.029)$ & $(0.283,0.030)$ & $(0.063,0.006)$ & $(0.060,0.006)$ \\ 
 & 10 & $(0.138,0.022)$ & $(0.170,0.016)$ & $(0.027,0.003)$ & $(0.028,0.003)$ & $(0.192,0.026)$ & $(0.207,0.021)$ & $(0.039,0.005)$ & $(0.041,0.004)$ & $(0.204,0.025)$ & $(0.215,0.021)$ & $(0.042,0.005)$ & $(0.042,0.004)$ & $(0.204,0.027)$ & $(0.214,0.021)$ & $(0.043,0.006)$ & $(0.044,0.005)$ \\
  \specialrule{1.5mm}{0pt}{0pt}
\multirow{3}{*}{500}  &
  1 & & &  &  & $(0.535,0.034)$ & $(0.544,0.035)$ & $(0.171,0.009)$ & $(0.171,0.010)$ & $(0.579,0.030)$ & $(0.579,0.033)$ & $(0.208,0.009)$ & $(0.208,0.011)$ & $(0.612,0.028)$ & $(0.619,0.030)$ & $(0.234,0.008)$ & $(0.233,0.013)$ \\ 
 & 4 &  &  &  &  & $(0.471,0.026)$ & $(0.484,0.021)$ & $(0.154,0.007)$ & $(0.154,0.007)$ & $(0.519,0.025)$ & $(0.519,0.023)$ & $(0.188,0.008)$ & $(0.187,0.007)$ & $(0.554,0.023)$ & $(0.566,0.023)$ & $(0.212,0.007)$ & $(0.210,0.009)$ \\ 
 & 10 &  &  &  &  & $(0.409,0.023)$ & $(0.436,0.018)$ & $(0.136,0.006)$ & $(0.138,0.005)$ & $(0.456,0.022)$ & $(0.469,0.020)$ & $(0.166,0.007)$ & $(0.168,0.007)$ & $(0.495,0.021)$ & $(0.517,0.018)$ & $(0.187,0.007)$ & $(0.187,0.007)$ \\ 
   \specialrule{1.5mm}{0pt}{0pt}
\multirow{3}{*}{1000}  &
  1 &  &  &  &  &  &  &  &  & $(0.637,0.028)$ & $(0.641,0.028)$ & $(0.269,0.011)$ & $(0.268,0.010)$ & $(0.676,0.025)$ & $(0.682,0.026)$ & $(0.314,0.010)$ & $(0.313,0.012)$ \\ 
 & 4 & & &  &  &  &  &  &  & $(0.582,0.022)$ & $(0.587,0.019)$ & $(0.247,0.008)$ & $(0.248,0.008)$ & $(0.625,0.020)$ & $(0.635,0.021)$ & $(0.291,0.008)$ & $(0.289,0.009)$ \\ 
 & 10 &  &  &  &  &  &  &  &  & $(0.528,0.020)$ & $(0.547,0.017)$ & $(0.227,0.007)$ & $(0.230,0.006)$ & $(0.573,0.018)$ & $(0.594,0.017)$ & $(0.268,0.007)$ & $(0.269,0.007)$ \\
   \specialrule{1.5mm}{0pt}{0pt}
\multirow{3}{*}{2000}  & 
  1 &  &  &  &  &  &  &  &  &  &  &  &  & $(0.730,0.023)$ & $(0.729,0.021)$ & $(0.382,0.011)$ & $(0.383,0.011)$ \\ 
 & 4 &  &  &  &  &  &  &  &  &  &  &  &  & $(0.684,0.018)$ & $(0.689,0.016)$ & $(0.360,0.009)$ & $(0.362,0.008)$ \\ 
 & 10 &  &  &  &  &  &  &  &  &  &  &  &  & $(0.641,0.016)$ & $(0.655,0.013)$ & $(0.339,0.007)$ & $(0.342,0.007)$ \\
 \hline  \hline
\end{tabular}\\   
 
\rotatebox{90}{\hspace{-2.2cm} 2. Multivariate Student. } &\tiny \begin{tabular}{c @{\,\vrule width 1.5mm\,}  c @{\,\vrule width 1.5mm\,} c c c c @{\,\vrule width 0.5mm\,}  c c c c @{\,\vrule width 0.5mm\,} c c c c @{\,\vrule width 0.5mm\,} c c c c }
\backslashbox{$n_Y$}{$n_X$} & & \multicolumn{4}{c}{100} & \multicolumn{4}{c}{500} & \multicolumn{4}{c}{1000} & \multicolumn{4}{c}{2000}\\
\specialrule{1.5mm}{0pt}{0pt}
 & \backslashbox{$k$}{$m$} & \multicolumn{2}{c}{100}  & \multicolumn{2}{c}{1000}  &  \multicolumn{2}{c}{100}  & \multicolumn{2}{c}{1000} & \multicolumn{2}{c}{100}  & \multicolumn{2}{c}{1000} & \multicolumn{2}{c}{100}  & \multicolumn{2}{c}{1000} \\
 \specialrule{1.5mm}{0pt}{0pt}
 & &  $(\hat{\mu},\hat{\sigma})$  & $(\mu,\sigma)$ &  $(\hat{\mu},\hat{\sigma})$  & $(\mu,\sigma)$ & $(\hat{\mu},\hat{\sigma})$  & $(\mu,\sigma)$ & $(\hat{\mu},\hat{\sigma})$  & $(\mu,\sigma)$ & $(\hat{\mu},\hat{\sigma})$  & $(\mu,\sigma)$ & $(\hat{\mu},\hat{\sigma})$  & $(\mu,\sigma)$ & $(\hat{\mu},\hat{\sigma})$  & $(\mu,\sigma)$ & $(\hat{\mu},\hat{\sigma})$  & $(\mu,\sigma)$ \\
 \specialrule{1.5mm}{0pt}{0pt}
\multirow{3}{*}{100} 
&1  & $(0.203,0.047)$ & $(0.217,0.034)$ & $(0.031,0.007)$ & $(0.035,0.005)$ & $(0.309,0.035)$ & $(0.311,0.042)$ & $(0.054,0.008)$ & $(0.058,0.009)$ & $(0.307,0.038)$ & $(0.308,0.042)$ & $(0.062,0.006)$ & $(0.056,0.008)$ & $(0.318,0.040)$ & $(0.299,0.040)$ & $(0.055,0.011)$ & $(0.054,0.009)$ \\ 
 & 4 & $(0.120,0.034)$ & $(0.162,0.021)$ & $(0.019,0.005)$ & $(0.026,0.004)$ & $(0.229,0.029)$ & $(0.242,0.026)$ & $(0.038,0.006)$ & $(0.043,0.005)$ & $(0.226,0.032)$ & $(0.235,0.026)$ & $(0.046,0.005)$ & $(0.043,0.005)$ & $(0.232,0.031)$ & $(0.233,0.025)$ & $(0.036,0.007)$ & $(0.039,0.005)$ \\ 
 & 10 & $(0.049,0.028)$ & $(0.126,0.014)$ & $(0.008,0.004)$ & $(0.020,0.002)$ & $(0.152,0.026)$ & $(0.185,0.021)$ & $(0.022,0.005)$ & $(0.033,0.004)$ & $(0.148,0.027)$ & $(0.181,0.019)$ & $(0.030,0.005)$ & $(0.032,0.003)$ & $(0.151,0.029)$ & $(0.177,0.019)$ & $(0.019,0.006)$ & $(0.029,0.004)$ \\
\specialrule{0.5mm}{0pt}{0pt}
\multirow{3}{*}{500} 
& 1 &   &   &   &   & $(0.465,0.036)$ & $(0.440,0.039)$ & $(0.132,0.011)$ & $(0.132,0.011)$ & $(0.522,0.034)$ & $(0.535,0.039)$ & $(0.163,0.012)$ & $(0.147,0.011)$ & $(0.550,0.032)$ & $(0.515,0.032)$ & $(0.161,0.025)$ & $(0.160,0.017)$ \\ 
 & 4  &   &   &   &   & $(0.386,0.030)$ & $(0.375,0.025)$ & $(0.110,0.008)$ & $(0.111,0.009)$ & $(0.446,0.029)$ & $(0.464,0.032)$ & $(0.141,0.009)$ & $(0.129,0.007)$ & $(0.484,0.026)$ & $(0.458,0.024)$ & $(0.119,0.018)$ & $(0.131,0.014)$ \\ 
 & 10 &   &   &   &   & $(0.312,0.027)$ & $(0.322,0.021)$ & $(0.088,0.007)$ & $(0.093,0.008)$ & $(0.368,0.026)$ & $(0.404,0.029)$ & $(0.120,0.007)$ & $(0.114,0.005)$ & $(0.419,0.023)$ & $(0.409,0.020)$ & $(0.082,0.015)$ & $(0.108,0.009)$ \\
\specialrule{0.5mm}{0pt}{0pt}
\multirow{3}{*}{1000} 
& 1 &   &   &   &   &   &   &   &   & $(0.583,0.033)$ & $(0.562,0.034)$ & $(0.190,0.031)$ & $(0.196,0.022)$ & $(0.587,0.059)$ & $(0.628,0.036)$ & $(0.255,0.014)$ & $(0.249,0.013)$\\ 
 & 4 &   &   &   &   &   &   &   &   & $(0.517,0.026)$ & $(0.504,0.026)$ & $(0.134,0.022)$ & $(0.159,0.021)$ & $(0.494,0.045)$ & $(0.560,0.032)$ & $(0.228,0.011)$ & $(0.225,0.009)$ \\ 
 & 10 &   &   &   &   &   &   &   &   & $(0.450,0.023)$ & $(0.452,0.023)$ & $(0.085,0.018)$ & $(0.126,0.015)$ & $(0.409,0.037)$ & $(0.515,0.023)$ & $(0.203,0.008)$ & $(0.204,0.009)$ \\
\specialrule{0.5mm}{0pt}{0pt}
\multirow{3}{*}{2000} 
& 1 &   &   &   &   &   &   &   &   &   &   &   &   & $(0.684,0.026)$ & $(0.676,0.027)$ & $(0.315,0.013)$ & $(0.310,0.013)$ \\ 
 & 4 &   &   &   &   &   &   &   &   &   &   &   &   & $(0.631,0.021)$ & $(0.631,0.020)$ & $(0.289,0.010)$ & $(0.285,0.010)$ \\ 
 & 10  &   &   &   &   &   &   &   &   &   &   &   &   & $(0.580,0.019)$ & $(0.589,0.019)$ & $(0.264,0.008)$ & $(0.262,0.009)$ \\
 \hline \hline
\end{tabular}\\  
\rotatebox{90}{\hspace{-2.3cm}3. ARMA \scriptsize$\big((0.6,0.2),(0.5,0.2)\big)$. }& \tiny

\begin{tabular}{c @{\,\vrule width 1.5mm\,}  c @{\,\vrule width 1.5mm\,} c c c c @{\,\vrule width 0.5mm\,}  c c c c @{\,\vrule width 0.5mm\,} c c c c @{\,\vrule width 0.5mm\,} c c c c }
\backslashbox{$n_Y$}{$n_X$} & & \multicolumn{4}{c}{100} & \multicolumn{4}{c}{500} & \multicolumn{4}{c}{1000} & \multicolumn{4}{c}{2000}\\
\specialrule{1.5mm}{0pt}{0pt}
 & \backslashbox{$k$}{$m$} & \multicolumn{2}{c}{100}  & \multicolumn{2}{c}{1000}  &  \multicolumn{2}{c}{100}  & \multicolumn{2}{c}{1000} & \multicolumn{2}{c}{100}  & \multicolumn{2}{c}{1000} & \multicolumn{2}{c}{100}  & \multicolumn{2}{c}{1000} \\
 \specialrule{1.5mm}{0pt}{0pt}
 & &  $(\hat{\mu},\hat{\sigma})$  & $(\mu,\sigma)$ &  $(\hat{\mu},\hat{\sigma})$  & $(\mu,\sigma)$ & $(\hat{\mu},\hat{\sigma})$  & $(\mu,\sigma)$ & $(\hat{\mu},\hat{\sigma})$  & $(\mu,\sigma)$ & $(\hat{\mu},\hat{\sigma})$  & $(\mu,\sigma)$ & $(\hat{\mu},\hat{\sigma})$  & $(\mu,\sigma)$ & $(\hat{\mu},\hat{\sigma})$  & $(\mu,\sigma)$ & $(\hat{\mu},\hat{\sigma})$  & $(\mu,\sigma)$ \\
 \specialrule{1.5mm}{0pt}{0pt}
\multirow{3}{*}{100} 
&1  & $(0.035,0.015)$ & $(0.033,0.017)$ & $(0.004,0.002)$ & $(0.004,0.002)$ & $(0.052,0.020)$ & $(0.041,0.017)$ & $(0.007,0.002)$ & $(0.005,0.002)$ & $(0.065,0.025)$ & $(0.054,0.020)$ & $(0.007,0.002)$ & $(0.006,0.002)$ & $(0.067,0.022)$ & $(0.057,0.021)$ & $(0.007,0.003)$ & $(0.006,0.002)$ \\ 
  & 4 & $(0.005,0.012)$ & $(0.016,0.005)$ & $(0.001,0.001)$ & $(0.002,0.001)$ & $(0.012,0.015)$ & $(0.023,0.007)$ & $(0.002,0.002)$ & $(0.003,0.001)$ & $(0.017,0.019)$ & $(0.030,0.008)$ & $(0.002,0.002)$ & $(0.003,0.001)$ & $(0.021,0.018)$ & $(0.030,0.008)$ & $(0.002,0.002)$ & $(0.003,0.001)$ \\ 
  & 10 & $(-0.025,0.010)$ & $(0.011,0.003)$ & $(-0.002,0.001)$ & $(0.001,0.000)$ & $(-0.026,0.012)$ & $(0.016,0.004)$ & $(-0.002,0.001)$ & $(0.002,0.001)$ & $(-0.027,0.016)$ & $(0.020,0.005)$ & $(-0.003,0.002)$ & $(0.002,0.001)$ & $(-0.024,0.014)$ & $(0.021,0.005)$ & $(-0.003,0.002)$ & $(0.003,0.001)$ \\ 
\specialrule{0.5mm}{0pt}{0pt}
\multirow{3}{*}{500} 
& 1 & & & & & $(0.137,0.030)$ & $(0.136,0.027)$ & $(0.018,0.004)$ & $(0.018,0.004)$ & $(0.170,0.032)$ & $(0.164,0.032)$ & $(0.023,0.004)$ & $(0.023,0.004)$ & $(0.189,0.035)$ & $(0.184,0.034)$ & $(0.028,0.005)$ & $(0.028,0.005)$ \\ 
  & 4 & & & & & $(0.082,0.022)$ & $(0.095,0.014)$ & $(0.011,0.003)$ & $(0.013,0.002)$ & $(0.106,0.025)$ & $(0.119,0.018)$ & $(0.016,0.003)$ & $(0.017,0.002)$ & $(0.121,0.026)$ & $(0.131,0.017)$ & $(0.019,0.003)$ & $(0.020,0.003)$ \\ 
 & 10 & & & & & $(0.027,0.019)$ & $(0.070,0.009)$ & $(0.004,0.002)$ & $(0.009,0.001)$ & $(0.047,0.021)$ & $(0.089,0.012)$ & $(0.008,0.003)$ & $(0.012,0.002)$ & $(0.057,0.023)$ & $(0.098,0.012)$ & $(0.010,0.003)$ & $(0.015,0.002)$ \\ 
\specialrule{0.5mm}{0pt}{0pt}
\multirow{3}{*}{1000} 
& 1 & & & & & & & & & $(0.225,0.034)$ & $(0.223,0.032)$ & $(0.035,0.005)$ & $(0.035,0.005)$ & $(0.265,0.035)$ & $(0.258,0.033)$ & $(0.045,0.006)$ & $(0.044,0.006)$ \\ 
  & 4 & & & & & & & & & $(0.158,0.026)$ & $(0.173,0.019)$ & $(0.025,0.004)$ & $(0.027,0.003)$ & $(0.194,0.026)$ & $(0.203,0.022)$ & $(0.033,0.004)$ & $(0.035,0.003)$ \\ 
  & 10 & & & & & & & & & $(0.093,0.023)$ & $(0.135,0.013)$ & $(0.016,0.003)$ & $(0.021,0.002)$ & $(0.123,0.023)$ & $(0.162,0.014)$ & $(0.023,0.004)$ & $(0.028,0.002)$ \\ 

\specialrule{0.5mm}{0pt}{0pt}
\multirow{3}{*}{2000} 
& 1 & & & & & & & & & & & & & $(0.322,0.036)$ & $(0.325,0.037)$ & $(0.065,0.007)$ & $(0.065,0.007)$ \\ 
  & 4 & & & & & & & & & & & & & $(0.250,0.028)$ & $(0.263,0.023)$ & $(0.052,0.005)$ & $(0.053,0.004)$ \\ 
  & 10 & & & & & & & & & & & & & $(0.180,0.025)$ & $(0.220,0.017)$ & $(0.040,0.004)$ & $(0.046,0.003)$ 
\end{tabular}
\normalsize 
\end{tabular}
}

\end{adjustbox}
\end{center}
\end{table}

\section{Estimation of $k$}\label{appendix:Simulationkest}
In this section we show in Table \ref{tab=simulationkestimation} that a small underestimation or overestimation of $k$ does not affect the estimation of residual spikes. \\

We assume the form of $W_X=\frac{1}{n}\X \X^t$ and $W_Y=\frac{1}{n}\Y \Y^t$ where the entries of $\X$ and $\Y$ are i.i.d. Normal. Then we apply a perturbation $P=I_m+\sum_{i=1}^k (\theta_i-1)u_i u_i^t$ to $W_X$ and $W_Y$ to create 
$$\hat{\Sigma}_X=P^{1/2} W_X P^{1/2} \text{ and }\hat{\Sigma}_Y=P^{1/2} W_Y P^{1/2}.$$
Recall that in the simulations of Section \ref{appendix:Tablecomplete}, we assumed the spectra of $W_X$ and $W_Y$ are known. In this appendix, the simulations estimate the spectra parameter $M_{X,s}$ and $M_{Y,s}$ by $\hat{M}_{X,s}$ and $\hat{M}_{Y,s}$ using the observed spectra of $\hat{\Sigma}_X$ and $\hat{\Sigma}_X$,
\begin{eqnarray*}
\hat{M}_{X,s}=\frac{1}{m-k}\sum_{i=k+1}^m \lambda_{\hat{\Sigma}_X}^s \text{ and }\hat{M}_{Y,s}=\frac{1}{m-k}\sum_{i=k+1}^m \lambda_{\hat{\Sigma}_Y}^s.
\end{eqnarray*}
In this case it seems that, 
\begin{center}
\textbf{Faulty values of $k$ lead to conservative procedures in all cases. However, underestimation of $k$ can lead to a large loss of power!}
\end{center}
Conservative procedure are obtained when underestimating the minimum and overestimating the maximum residual spike.

\paragraph*{Detail of the simulations of Table \ref{tab=simulationkestimation}}
We apply a perturbation of order $k=4$ to normal data. \\
Then we use the procedure with different $k_{est}=1,2,3,4,5,6$.\\
The moments are estimated using the usual estimators of the spectra assuming $k=k_{est}$ and $n=1000$ replicates of the experiment.\\
The correct perturbation, $P_4$, that is applied to the data has eigenvalues, $\theta_1=1000,$ $\theta_2=200,$ $\theta_3=16,$ $\theta_4=2.1.$

\begin{table}
\begin{center}
\scalebox{0.65}{
\begin{adjustbox}{addcode={\begin{minipage}{\width}}{\caption{Residual spike moment for normal entries, $k=4$ and $\theta_1=1000,$ $\theta_2=200$, $\theta_3=16$, $\theta_4=2.1$. Then, the value $k$ is estimated by $k_{est}$. (Number of replicates: 1000)} \label{tab=simulationkestimation}
\end{minipage}},rotate=90,center}
\begin{tabular}{c}
$\lambda_{\max}\left(\hat{\hat{\Sigma}}_{X,P_k}^{-1/2} \hat{\hat{\Sigma}}_{Y,P_k} \hat{\hat{\Sigma}}_{X,P_k}^{-1/2}\right)$ \\ 
 \tiny 
\begin{tabular}{c @{\,\vrule width 1.5mm\,}  c @{\,\vrule width 1.5mm\,} c c c c @{\,\vrule width 0.5mm\,} c c c c @{\,\vrule width 0.5mm\,} c c c c  }
\backslashbox{$n_Y$}{$n_X$} & & \multicolumn{4}{c}{200} & \multicolumn{4}{c}{400} & \multicolumn{4}{c}{800} \\
\specialrule{1.5mm}{0pt}{0pt}
 & \backslashbox{$k_{est}$}{$m$} & \multicolumn{2}{c}{200}  & \multicolumn{2}{c}{400}  &  \multicolumn{2}{c}{200}  &  \multicolumn{2}{c}{400}   &  \multicolumn{2}{c}{200} & \multicolumn{2}{c}{400}    \\
 \specialrule{1.5mm}{0pt}{0pt}
 & &    $(\hat{\mu},\hat{\sigma})$  & $(\mu,\sigma)$ & $(\hat{\mu},\hat{\sigma})$  & $(\mu,\sigma)$ &  $(\hat{\mu},\hat{\sigma})$  & $(\mu,\sigma)$ &  $(\hat{\mu},\hat{\sigma})$  & $(\mu,\sigma)$  &  $(\hat{\mu},\hat{\sigma})$  & $(\mu,\sigma)$  &  $(\hat{\mu},\hat{\sigma})$  & $(\mu,\sigma)$ \\
 \specialrule{1.5mm}{0pt}{0pt}
\multirow{5}{*}{200} 
& 1 & $\left(88.78,55.48\right)$ & $\left(3.86,1.92\right)$ & $\left(85.85,57.82\right)$ & $\left(6.23,2.84\right)$ & $\left(87.34,55.85\right)$ & $\left(3.44,1.54\right)$ & $\left(106.76,67.23\right)$ & $\left(5.04,1.98\right)$ & $\left(94.28,58.81\right)$ & $\left(3.17,1.40\right)$ & $\left(91.32,58.26\right)$ & $\left(4.51,1.83\right)$ \\ 
 & 2 & $\left(7.86,2.12\right)$ & $\left(4.11,0.42\right)$ & $\left(8.15,1.23\right)$ & $\left(6.40,0.57\right)$ & $\left(7.30,2.14\right)$ & $\left(3.43,0.30\right)$ & $\left(6.60,0.93\right)$ & $\left(5.16,0.37\right)$ & $\left(6.63,1.88\right)$ & $\left(3.13,0.26\right)$ & $\left(6.48,1.07\right)$ & $\left(4.52,0.27\right)$ \\ 
 & 3 & $\left(4.27,0.30\right)$ & $\left(4.23,0.31\right)$ & $\left(6.62,0.43\right)$ & $\left(6.46,0.48\right)$ & $\left(3.55,0.19\right)$ & $\left(3.53,0.22\right)$ & $\left(5.29,0.26\right)$ & $\left(5.19,0.29\right)$ & $\left(3.20,0.15\right)$ & $\left(3.18,0.16\right)$ & $\left(4.65,0.19\right)$ & $\left(4.58,0.21\right)$ \\ 
 & 4 & $\left(4.42,0.28\right)$ & $\left(4.29,0.33\right)$ & $\left(6.90,0.40\right)$ & $\left(6.59,0.47\right)$ & $\left(3.63,0.18\right)$ & $\left(3.57,0.22\right)$ & $\left(5.44,0.24\right)$ & $\left(5.27,0.31\right)$ & $\left(3.25,0.14\right)$ & $\left(3.21,0.16\right)$ & $\left(4.77,0.19\right)$ & $\left(4.64,0.22\right)$ \\ 
 & 5 & $\left(4.53,0.28\right)$ & $\left(4.33,0.32\right)$ & $\left(7.05,0.39\right)$ & $\left(6.55,0.49\right)$ & $\left(3.71,0.17\right)$ & $\left(3.59,0.21\right)$ & $\left(5.55,0.23\right)$ & $\left(5.30,0.30\right)$ & $\left(3.29,0.14\right)$ & $\left(3.21,0.15\right)$ & $\left(4.85,0.18\right)$ & $\left(4.69,0.22\right)$ \\ 
 & 6 & $\left(4.63,0.27\right)$ & $\left(4.34,0.33\right)$ & $\left(7.25,0.39\right)$ & $\left(6.70,0.49\right)$ & $\left(3.77,0.16\right)$ & $\left(3.64,0.22\right)$ & $\left(5.66,0.24\right)$ & $\left(5.36,0.29\right)$ & $\left(3.38,0.14\right)$ & $\left(3.32,0.17\right)$ & $\left(4.90,0.18\right)$ & $\left(4.71,0.22\right)$ \\ 
\specialrule{0.5mm}{0pt}{0pt}
\multirow{5}{*}{400} 
& 1 &  &  &  &  & $\left(89.97,58.12\right)$ & $\left(2.74,1.04\right)$ & $\left(93.15,58.71\right)$ & $\left(3.91,1.40\right)$ & $\left(102.49,64.57\right)$ & $\left(2.42,0.84\right)$ & $\left(85.13,55.71\right)$ & $\left(3.39,1.11\right)$ \\ 
 & 2 &  &  &  &  & $\left(6.71,2.18\right)$ & $\left(2.82,0.22\right)$ & $\left(5.82,1.05\right)$ & $\left(4.00,0.27\right)$ & $\left(6.14,1.96\right)$ & $\left(2.46,0.16\right)$ & $\left(5.19,0.98\right)$ & $\left(3.35,0.18\right)$ \\ 
 & 3 &  &  &  &  & $\left(2.89,0.15\right)$ & $\left(2.86,0.16\right)$ & $\left(4.10,0.20\right)$ & $\left(4.06,0.22\right)$ & $\left(2.54,0.11\right)$ & $\left(2.50,0.11\right)$ & $\left(3.44,0.13\right)$ & $\left(3.41,0.15\right)$ \\ 
 & 4 &  &  &  &  & $\left(2.97,0.14\right)$ & $\left(2.90,0.17\right)$ & $\left(4.21,0.19\right)$ & $\left(4.06,0.21\right)$ & $\left(2.55,0.10\right)$ & $\left(2.51,0.11\right)$ & $\left(3.50,0.12\right)$ & $\left(3.41,0.14\right)$ \\ 
& 5 &  &  &  &  & $\left(3.02,0.14\right)$ & $\left(2.92,0.16\right)$ & $\left(4.29,0.18\right)$ & $\left(4.09,0.23\right)$ & $\left(2.61,0.10\right)$ & $\left(2.54,0.11\right)$ & $\left(3.56,0.12\right)$ & $\left(3.43,0.14\right)$ \\ 
 & 6 &  &  &  &  & $\left(3.06,0.13\right)$ & $\left(2.92,0.16\right)$ & $\left(4.37,0.18\right)$ & $\left(4.12,0.22\right)$ & $\left(2.62,0.09\right)$ & $\left(2.57,0.11\right)$ & $\left(3.59,0.12\right)$ & $\left(3.45,0.14\right)$ \\ 
\specialrule{0.5mm}{0pt}{0pt}
\multirow{5}{*}{800} 
& 1 &  &  &  &  &  &  &  &  & $\left(91.91,56.71\right)$ & $\left(2.06,0.53\right)$ & $\left(90.85,59.77\right)$ & $\left(2.69,0.74\right)$ \\ 
 & 2 &  &  &  &  &  &  &  &  & $\left(5.52,1.85\right)$ & $\left(2.10,0.13\right)$ & $\left(5.00,1.22\right)$ & $\left(2.73,0.14\right)$ \\ 
 & 3 &  &  &  &  &  &  &  &  & $\left(2.17,0.08\right)$ & $\left(2.14,0.08\right)$ & $\left(2.81,0.10\right)$ & $\left(2.79,0.11\right)$ \\ 
 & 4 &  &  &  &  &  &  &  &  & $\left(2.19,0.08\right)$ & $\left(2.17,0.08\right)$ & $\left(2.85,0.10\right)$ & $\left(2.79,0.11\right)$ \\ 
 & 5 &  &  &  &  &  &  &  &  & $\left(2.22,0.08\right)$ & $\left(2.17,0.08\right)$ & $\left(2.90,0.10\right)$ & $\left(2.80,0.11\right)$ \\ 
 & 6 &  &  &  &  &  &  &  &  & $\left(2.24,0.07\right)$ & $\left(2.18,0.09\right)$ & $\left(2.92,0.09\right)$ & $\left(2.80,0.11\right)$ 
\end{tabular}\\ 
$\lambda_{\min}\left(\hat{\hat{\Sigma}}_{X,P_k}^{-1/2} \hat{\hat{\Sigma}}_{Y,P_k} \hat{\hat{\Sigma}}_{X,P_k}^{-1/2}\right)$ \\ 
 \tiny 
\begin{tabular}{c @{\,\vrule width 1.5mm\,}  c @{\,\vrule width 1.5mm\,} c c c c  @{\,\vrule width 0.5mm\,} c c c c  @{\,\vrule width 0.5mm\,} c c c c   }
\backslashbox{$n_Y$}{$n_X$} & & \multicolumn{4}{c}{200} & \multicolumn{4}{c}{400} & \multicolumn{4}{c}{800} \\
\specialrule{1.5mm}{0pt}{0pt}
 & \backslashbox{$k_{est}$}{$m$} & \multicolumn{2}{c}{200}  & \multicolumn{2}{c}{400}  &  \multicolumn{2}{c}{200}  &  \multicolumn{2}{c}{400}   &  \multicolumn{2}{c}{200} & \multicolumn{2}{c}{400}    \\
 \specialrule{1.5mm}{0pt}{0pt}
 & &  $(\hat{\mu},\hat{\sigma})$  & $(\mu,\sigma)$ &  $(\hat{\mu},\hat{\sigma})$  & $(\mu,\sigma)$ & $(\hat{\mu},\hat{\sigma})$  & $(\mu,\sigma)$ & $(\hat{\mu},\hat{\sigma})$  & $(\mu,\sigma)$ & $(\hat{\mu},\hat{\sigma})$  & $(\mu,\sigma)$ & $(\hat{\mu},\hat{\sigma})$  & $(\mu,\sigma)$  \\
 \specialrule{1.5mm}{0pt}{0pt}
\multirow{5}{*}{200} 
& 1 & $\left(0.012,0.008\right)$ & $\left(0.297,0.093\right)$ & $\left(0.011,0.008\right)$ & $\left(0.181,0.053\right)$ & $\left(0.012,0.008\right)$ & $\left(0.333,0.102\right)$ & $\left(0.010,0.007\right)$ & $\left(0.219,0.060\right)$ & $\left(0.011,0.007\right)$ & $\left(0.360,0.105\right)$ & $\left(0.011,0.008\right)$ & $\left(0.246,0.065\right)$ \\ 
 & 2 & $\left(0.113,0.057\right)$ & $\left(0.245,0.024\right)$ & $\left(0.119,0.024\right)$ & $\left(0.159,0.014\right)$ & $\left(0.120,0.066\right)$ & $\left(0.290,0.027\right)$ & $\left(0.146,0.028\right)$ & $\left(0.193,0.016\right)$ & $\left(0.133,0.070\right)$ & $\left(0.318,0.030\right)$ & $\left(0.149,0.036\right)$ & $\left(0.217,0.018\right)$ \\ 
 & 3 & $\left(0.231,0.020\right)$ & $\left(0.239,0.019\right)$ & $\left(0.149,0.011\right)$ & $\left(0.156,0.011\right)$ & $\left(0.274,0.019\right)$ & $\left(0.281,0.021\right)$ & $\left(0.183,0.012\right)$ & $\left(0.189,0.014\right)$ & $\left(0.302,0.019\right)$ & $\left(0.306,0.022\right)$ & $\left(0.206,0.012\right)$ & $\left(0.210,0.015\right)$ \\ 
 & 4 & $\left(0.220,0.018\right)$ & $\left(0.235,0.017\right)$ & $\left(0.141,0.011\right)$ & $\left(0.153,0.011\right)$ & $\left(0.262,0.019\right)$ & $\left(0.279,0.021\right)$ & $\left(0.174,0.011\right)$ & $\left(0.188,0.013\right)$ & $\left(0.291,0.019\right)$ & $\left(0.305,0.021\right)$ & $\left(0.197,0.013\right)$ & $\left(0.210,0.016\right)$ \\ 
 & 5 & $\left(0.211,0.018\right)$ & $\left(0.233,0.017\right)$ & $\left(0.137,0.010\right)$ & $\left(0.153,0.011\right)$ & $\left(0.254,0.018\right)$ & $\left(0.276,0.021\right)$ & $\left(0.168,0.011\right)$ & $\left(0.187,0.014\right)$ & $\left(0.281,0.019\right)$ & $\left(0.306,0.022\right)$ & $\left(0.190,0.012\right)$ & $\left(0.210,0.016\right)$ \\ 
 & 6 & $\left(0.205,0.017\right)$ & $\left(0.230,0.018\right)$ & $\left(0.131,0.010\right)$ & $\left(0.151,0.011\right)$ & $\left(0.244,0.018\right)$ & $\left(0.275,0.020\right)$ & $\left(0.162,0.011\right)$ & $\left(0.186,0.014\right)$ & $\left(0.271,0.018\right)$ & $\left(0.305,0.023\right)$ & $\left(0.183,0.013\right)$ & $\left(0.210,0.015\right)$ \\ 
\specialrule{0.5mm}{0pt}{0pt}
\multirow{5}{*}{400} 
& 1 &  &  &  &  & $\left(0.011,0.008\right)$ & $\left(0.402,0.105\right)$ & $\left(0.011,0.008\right)$ & $\left(0.277,0.066\right)$ & $\left(0.010,0.007\right)$ & $\left(0.448,0.108\right)$ & $\left(0.013,0.009\right)$ & $\left(0.317,0.073\right)$ \\ 
  & 2 &  &  &  &  & $\left(0.130,0.083\right)$ & $\left(0.357,0.028\right)$ & $\left(0.162,0.043\right)$ & $\left(0.251,0.016\right)$ & $\left(0.131,0.085\right)$ & $\left(0.407,0.028\right)$ & $\left(0.183,0.053\right)$ & $\left(0.296,0.018\right)$ \\ 
 & 3 &  &  &  &  & $\left(0.342,0.021\right)$ & $\left(0.349,0.020\right)$ & $\left(0.241,0.013\right)$ & $\left(0.247,0.013\right)$ & $\left(0.389,0.021\right)$ & $\left(0.397,0.021\right)$ & $\left(0.286,0.014\right)$ & $\left(0.291,0.015\right)$ \\ 
 & 4 &  &  &  &  & $\left(0.332,0.020\right)$ & $\left(0.347,0.018\right)$ & $\left(0.235,0.013\right)$ & $\left(0.246,0.013\right)$ & $\left(0.382,0.019\right)$ & $\left(0.395,0.020\right)$ & $\left(0.278,0.013\right)$ & $\left(0.290,0.015\right)$ \\ 
 & 5 &  &  &  &  & $\left(0.323,0.019\right)$ & $\left(0.344,0.019\right)$ & $\left(0.228,0.012\right)$ & $\left(0.245,0.013\right)$ & $\left(0.372,0.019\right)$ & $\left(0.393,0.019\right)$ & $\left(0.271,0.013\right)$ & $\left(0.289,0.015\right)$ \\ 
 & 6 &  &  &  &  & $\left(0.316,0.019\right)$ & $\left(0.343,0.018\right)$ & $\left(0.223,0.012\right)$ & $\left(0.244,0.013\right)$ & $\left(0.365,0.018\right)$ & $\left(0.392,0.021\right)$ & $\left(0.265,0.012\right)$ & $\left(0.289,0.014\right)$ \\ 
\specialrule{0.5mm}{0pt}{0pt}
\multirow{5}{*}{800} 
 & 1 &  &  &  &  &  &  &  &  & $\left(0.011,0.008\right)$ & $\left(0.509,0.103\right)$ & $\left(0.012,0.008\right)$ & $\left(0.391,0.074\right)$ \\ 
 & 2 &  &  &  &  &  &  &  &  & $\left(0.148,0.103\right)$ & $\left(0.477,0.027\right)$ & $\left(0.184,0.075\right)$ & $\left(0.367,0.017\right)$ \\ 
 & 3 &  &  &  &  &  &  &  &  & $\left(0.459,0.020\right)$ & $\left(0.467,0.018\right)$ & $\left(0.355,0.014\right)$ & $\left(0.361,0.014\right)$ \\ 
 & 4 &  &  &  &  &  &  &  &  & $\left(0.453,0.018\right)$ & $\left(0.463,0.018\right)$ & $\left(0.349,0.013\right)$ & $\left(0.359,0.013\right)$ \\ 
 & 5 &  &  &  &  &  &  &  &  & $\left(0.447,0.018\right)$ & $\left(0.465,0.018\right)$ & $\left(0.340,0.013\right)$ & $\left(0.356,0.013\right)$ \\ 
 & 6 &  &  &  &  &  &  &  &  & $\left(0.439,0.017\right)$ & $\left(0.460,0.018\right)$ & $\left(0.337,0.013\right)$ & $\left(0.358,0.014\right)$ 
\end{tabular}\\
\end{tabular}
\end{adjustbox}}

\end{center}
\end{table}

\section{Robust estimation of the spectrum}\label{appendix:Spectrumestimation}
In the procedure, when $\hat{\Sigma}=P^{1/2}WP^{1/2}$, we estimate 
\begin{equation*}
\frac{1}{m}\sum_{i=1}^m f\left( \lambda_{W,i} \right)
\end{equation*}
by 
\begin{equation*}
\frac{1}{m-k}\sum_{i=k+1}^m f\left( \lambda_{\hat{\Sigma},i} \right).
\end{equation*}
The estimation always underestimates the true value. When the estimation is used for a second moment, this has no real impact. However, the first moment could lead to a loss of the conservative properties of the procedure. Even if simulations show that this loss is very small when $m$ is large, we propose a more conservative way to estimate the expectation of the residual spike that only uses $f(x)=x^2$. When $\hat{\Sigma}=P_k^{1/2}WP_k^{1/2}$ is as defined above in \ref{section:teststat}, then
\begin{eqnarray*}
\frac{1}{m}\sum_{i=1}^m \left( \hat{\lambda}_{W,i} \right)^2 
\end{eqnarray*}
can be estimated by
\begin{eqnarray*}
\frac{\sum_{i=k+1}^m \left( \hat{\lambda}_{\hat{\Sigma},i}(k) \right)^2 + 2 k \left( \hat{\lambda}_{\hat{\Sigma},k+1}  \right)^2}{(m-k) M_{1,\hat{\Sigma}}+ 2 k \hat{\lambda}_{\hat{\Sigma},k+1} } 
\end{eqnarray*}

\pagebreak
\section{Statistical applications of Random matrix theory:\\ comparison of two populations III,\\
Supplement} \label{appendixproof}

\subsection{Introduction} 

This appendix contains the supplemental material presenting the proofs of the theorems and lemmas of the paper. The theorems are first introduced with the same notation as in the main paper and directly proved. Additional notations and assumptions are introduced at the start.

\subsection{Notations, definitions, assumptions and previous theorems}
The notation in \cite{mainarticle} are as follows.

\begin{Not}\ \label{ANot=Theorem}\\
Although we use a precise notation to enunciate the theorems, the proofs often use a simpler notation when no confusion is possible. This difference is always specified at the beginning of the proofs. 
\begin{itemize}
\item If $W$ is a symmetric random matrix, we denote by $\left(\hat{\lambda}_{W,i}, \hat{u}_{W,i} \right)$ its $i^{\rm th}$ eigenvalue and eigenvector.
\item A finite perturbation of order $k$ is denoted by $P_k= \I_m  +\sum_{i=1}^k (\theta_i-1) u_i u_i^t \in \mathbb{R}^{m \times m}$ with $u_1,u_2,...,u_k \in \mathbb{R}^{m \times m}$ orthonormal vectors.
\item We denote by $W \in  \mathbb{R}^{m\times m}$ an invariant by rotation random matrix as defined in Assumption \ref{AAss=matrice}. Moreover, the estimated covariance matrix is $\hat{\Sigma}=P_k^{1/2} W P_k^{1/2}$. \\
When comparing two groups, we use $W_X$, $W_Y$ and $\hat{\Sigma}_X$, $\hat{\Sigma}_Y$.
\item When we consider only one group, $\hat{\Sigma}_{P_r}=P_r^{1/2} W P_r^{1/2}$ is the perturbation of order $r$ of the matrix $W$ and: 
\begin{itemize}
\item $\hat{u}_{P_r,i}$ is its $i^{\text{th}}$ eigenvector. When $r=k$ we just use the simpler notation $\hat{u}_{i}=\hat{u}_{P_k,i}$ after an explicit statement.
\item  $\hat{u}_{P_r,i,j}$ is the $j^{\rm th}$ component of the $i^{\rm th}$ eigenvector.
\item $\hat{\lambda}_{P_r,i}$ is its $i^{\text{th}}$ eigenvalue. If $\theta_1>\theta_2>...>\theta_r$, then for $i=1,2,...,r$ we use also the notation $\hat{\theta}_{P_r,i}=\hat{\lambda}_{P_r,i}$. We call these eigenvalues the spikes. When $r=k$, we just use the simpler notation $\hat{\theta}_i=\hat{\theta}_{P_k,i}$ after an explicit statement.
\item $\hat{\alpha}_{P_r,i}^2=\sum_{j=1}^r \left\langle \hat{u}_{P_r,i},u_j \right\rangle^2$ is called the \textbf{general angle}.
\end{itemize}  
With this notation, we have $\hat{\Sigma}=\hat{\Sigma}_{P_k}=P_k^{1/2} W P_k^{1/2}$.
\item When we look at two groups $X$ and $Y$, we use a notation similar to the above. The perturbation of order $r$ of the matrices $W_X$ and $W_Y$ are $\hat{\Sigma}_{X,P_r}=P_r^{1/2} W_X P_r^{1/2}$ and $\hat{\Sigma}_{Y,P_r}=P_r^{1/2} W_Y P_r^{1/2}$ respectively. Then, we define for the group $\hat{\Sigma}_{X,P_r}$ (and similarly for $\hat{\Sigma}_{Y,P_r}$):
\begin{itemize}
\item $\hat{u}_{\hat{\Sigma}_{X,P_r},i}$ is its $i^{\text{th}}$ eigenvector. When $r=k$ we just use the simpler notation $\hat{u}_{X,i}=\hat{u}_{\hat{\Sigma}_{X,P_k},i}$ after an explicit statement.
\item  $\hat{u}_{\hat{\Sigma}_{X,P_r},i,j}$ is the $j^{\rm th}$ component of the $i^{\rm th}$ eigenvector.
\item $\hat{\lambda}_{\hat{\Sigma}_{X,P_r},i}$ is its $i^{\text{th}}$ eigenvalue. If $\theta_1>\theta_2>...>\theta_r$, then for $i=1,2,...,r$ we use the notation $\hat{\theta}_{\hat{\Sigma}_{X,P_r},i}=\hat{\lambda}_{\hat{\Sigma}_{X,P_r},i}$. When $r=k$, we just use the simpler notation $\hat{\theta}_{X,i}=\hat{\theta}_{\hat{\Sigma}_{X,P_k},i}$ after an explicit statement.
\item $\hat{\alpha}_{\hat{\Sigma}_{X,P_r},i}^2=\sum_{j=1}^r \left\langle \hat{u}_{\hat{\Sigma}_{X,P_r},i},u_j \right\rangle^2$. 
\item $\hat{\alpha}_{X,Y,P_r,i}^2=\sum_{j=1}^r \left\langle \hat{u}_{\hat{\Sigma}_{X,P_r},i},\hat{u}_{\hat{\Sigma}_{Y,P_r},j} \right\rangle^2$ is the \textbf{double angle} and, when no confusion is possible, we use the simpler notation $\hat{\alpha}_{P_r,i}^2$. When this simpler notation is used, it is stated explicitly.
\end{itemize} 
\item The proofs can assume either the sign convention 
\begin{eqnarray*}
\hat{u}_{P_s,i,i}>0, \text{ for $s=1,2,...,k$ and $i=1,2,...,s$,}
\end{eqnarray*}
or they may use 
\begin{eqnarray*}
\hat{u}_{P_s,i,s}>0, \text{ for $s=1,2,...,k$ and $i=1,2,...,s$,}.
\end{eqnarray*}
Which convention is adopted will be indicated in the proofs when confusion is possible.
\item We define the function $M_{s_1,s_2,X}(\rho_X)$, $M_{s_1,s_2,Y}(\rho_Y)$ and $M_{s_1,s_2}(\rho_X,\rho_Y)$ as 
\begin{eqnarray*}
M_{s_1,s_2,X}(\rho_X)&=&\frac{1}{m} \sum_{i=1}^m \frac{\hat{\lambda}_{W_X,i}^{s_1}}{\left( \rho_X -\hat{\lambda}_{W_X,i} \right)^2},\\
M_{s_1,s_2,Y}(\rho_Y)&=&\frac{1}{m} \sum_{i=1}^m \frac{\hat{\lambda}_{W_Y,i}^{s_1}}{\left( \rho_Y -\hat{\lambda}_{W_Y,i} \right)^2},\\
M_{s_1,s_2}(\rho_X,\rho_Y)&=& \frac{M_{s_1,s_2,X}(\rho_X)+M_{s_1,s_2,Y}(\rho_Y)}{2}.
\end{eqnarray*}
In particular, when $s_2=0$, we use $M_{s_1,X}=M_{s_1,0,X}$. When studying a single group, we use the simpler notation $M_{s_1,s_2}(\rho)$.
\item We use two transforms inspired by the T-transform:
\begin{itemize}
\item $T_{W,u}(z)= \sum_{i=1}^m \frac{\hat{\lambda}_{W,i}}{z-\hat{\lambda}_{W,i}} \left\langle \hat{u}_{W,i},u \right\rangle^2$ is the T-transform in direction $u$ using the random matrix $W$.
\item $\hat{T}_{\hat{\Sigma}_X}(z)= \frac{1}{m}\sum_{i=k+1}^m \frac{\hat{\lambda}_{\hat{\Sigma}_X,i}}{z-\hat{\lambda}_{\hat{\Sigma}_X,i}}$,
 and  $\hat{T}_{W_X}(z)= \frac{1}{m} \sum_{i=1}^m \frac{\hat{\lambda}_{W_X,i}}{z-\hat{\lambda}_{W_X,i}}$ are the estimated T-transforms using $\hat{\Sigma}_X$ and $W$, respectively.
\end{itemize}
\end{itemize}
\end{Not}

We recall the assumptions and definitions used throughout the paper.
\begin{Ass}\label{AAss=matrice} 
Let $W_X=O_X \Lambda_X O_X$ and $W_Y=O_Y \Lambda_Y O_Y$ with 
\begin{eqnarray*}
&&O_X, O_Y  \text{ being unit orthonormal invariant and independent random matrices,}\\
&&\Lambda_X,\Lambda_Y \text{ being diagonal bounded matrices and independent of } O_X, O_Y,\\
&&\text{and }\Tr\left( W_X \right)= \Tr\left( W_Y \right)=m.
\end{eqnarray*}

\noindent Assume $P_X= \I_m + \sum_{i=1}^k(\theta_{X,i}-1) e_i e_i^t$ and $P_Y= \I_m+ \sum_{i=1}^k(\theta_{Y,i}-1) e_i e_i^t$. Then 
\begin{eqnarray*}
\hat{\Sigma}_X=P_X^{1/2} W_X P_X^{1/2} \text{ and } \hat{\Sigma}_Y=P_Y^{1/2} W_Y P_Y^{1/2}.
\end{eqnarray*}
\end{Ass}

\begin{Ass}\label{AAss=theta}
\begin{itemize} \
\item[(A1)] $\frac{\theta}{\sqrt{m} }\rightarrow \infty.$
\item[(A2)] $\theta_i=p_i \theta$, where $p_i$ is fixed and different from $1$.
\end{itemize}
\end{Ass}

\begin{Def} \label{ADef=unbiased} 
Suppose $\hat{\Sigma}$ satisfies Assumption \ref{AAss=matrice}. Then, the \textbf{unbiased estimator of }$\theta$ is defined as 
$$ \hat{\hat{\theta}}=1+\frac{1}{\frac{1}{m-k} \sum_{i=k+1}^{m} \frac{\hat{\lambda}_{\hat{\Sigma},i}}{\hat{\theta}-\hat{\lambda}_{\hat{\Sigma},i}}},$$
where $\hat{\lambda}_{\hat{\Sigma},i}$ is the $i^{\text{th}}$ of $\hat{\Sigma}$.\\
With $\hat{\theta}$ and $\hat{u}_i$ the $i^{\text{th}}$ eigenvalue and eigenvector of $\hat{\Sigma}$, the \textbf{filtered estimated covariance matrix } is defined as 
$$\hat{\hat{\Sigma}}= \I_m+\sum_{i=1}^k (\hat{\hat{\theta}}_i-1) \hat{u}_i \hat{u}_i^t.$$
\end{Def}
\begin{Def} \label{ADef:Invariant}
Let $W$ is a random matrix and let $P_1=\I_m+(\theta_1-1) u_1 u_1^t$ and $P_k=\I_m+\sum_{i=1}^k (\theta_i-1)u_i u_i^t$ be perturbations of order $1$ and $k$, respectively. We say that a statistic $T\left( W_m,P_1\right)$ is \textbf{invariant} with respect to $k$, if $T\left( W_m,P_k\right)$ is such that 
\begin{eqnarray*}
\resizebox{.9\hsize}{!}{$T\left( W_m,P_k\right) = T\left( W_m,P_1\right) + \epsilon_m, \text{ where }
\max\left(\frac{\epsilon_m}{\E\left[ T\left( W,P_1\right)\right]},\frac{\epsilon_m^2}{\var\left( T\left( W,P_1\right) \right)}\right) \rightarrow 0.$}
\end{eqnarray*}
\end{Def}
We are now ready to state the main Theorems of this paper.

\begin{Th}
 \label{ATH=Main} 
Suppose $W_X,W_Y \in \mathbb{R}^{m\times m}$ satisfy Assumption \ref{AAss=matrice} and $(m,\theta_i)$ are according to Assumption \ref{AAss=theta}.
\begin{enumerate}
\item \cite{mainarticle} has investigated the asymptotics of large $m$ in the case of a perturbation of order 1, that is, 
$P_1= \I_m+(\theta-1)e_1 e_1^t \in  \mathbb{R}^{m\times m}$ with $\frac{\sqrt{m}}{\theta}=o(1)$. Let  
\begin{eqnarray*}
\hat{\Sigma}_{X,P_1}=P_1^{1/2} W_X P_1^{1/2} \text{ and } \hat{\Sigma}_{Y,P_1}=P_1^{1/2} W_Y P_1^{1/2}.
\end{eqnarray*}
and $\hat{\hat{\Sigma}}_{X,P_1}$, $\hat{\hat{\Sigma}}_{Y,P_1}$ as described above (see, \ref{ADef=unbiased}). 

Then, conditional on the spectra $S_{W_X}=\left\lbrace \hat{\lambda}_{W_X,1},\hat{\lambda}_{W_X,2},...,\hat{\lambda}_{W_X,m} \right\rbrace$ and $S_{W_Y}=\left\lbrace \hat{\lambda}_{W_Y,1},\hat{\lambda}_{W_Y,2},...,\hat{\lambda}_{W_Y,m} \right\rbrace$ of $W_X$ and $W_Y$,
\begin{eqnarray*}
\left. \sqrt{m} \frac{\left(\lambda_{\max}\left(\hat{\hat{\Sigma}}_{X,P_1}^{-1/2} \hat{\hat{\Sigma}}_{Y,P_1} \hat{\hat{\Sigma}}_{X,P_1}^{-1/2} \right) -\lambda^+ \right)}{\sigma^+} \right| S_{W_X}, S_{W_Y} \sim \Normal(0,1)+o_{p}(1), 
\end{eqnarray*}
where
\begin{eqnarray*}
&&\lambda^+= \sqrt{M_2^2-1}+M_2, \hspace{30cm}
\end{eqnarray*}
\begin{eqnarray*}
&&{\sigma^+}^2 \equiv {\sigma^+}^2 \left(M_{2,X},M_{3,X},M_{4,X},M_{2,Y},M_{3,Y},M_{4,Y}\right)  \hspace{30cm}
\end{eqnarray*}
\begin{eqnarray*}
&&M_{s,X}= \frac{1}{m} \sum_{i=1}^m \hat{\lambda}_{W_X,i}^s, \ 
M_{s,Y}= \frac{1}{m} \sum_{i=1}^m \hat{\lambda}_{W_Y,i}^s, \ 
M_s=\frac{M_{s,X}+M_{s,Y}}{2}.\hspace{30cm}
\end{eqnarray*}
Moreover, 
\begin{eqnarray*}
\left.\sqrt{m} \frac{\left(\lambda_{\min}\left(\hat{\hat{\Sigma}}_{X,P_1}^{-1/2} \hat{\hat{\Sigma}}_{Y,P_1} \hat{\hat{\Sigma}}_{X,P_1}^{-1/2} \right) -\lambda^- \right)}{\sigma^-}\right| S_{W_X}, S_{W_Y} \sim \Normal(0,1)+o_{m}(1), 
\end{eqnarray*}
where 
\begin{eqnarray*}
&&\lambda^-= -\sqrt{M_2^2-1}+M_2, \hspace{30cm}\\
&&{\sigma^-}^2=\left(\lambda^-\right)^4 {\sigma^+}^2.
\end{eqnarray*}
The error $o_p(1)$ in the approximation is with regard to large values of $m$.

\item Suppose that $P_k= \I_m+\sum_{s=1}^k(\theta_s-1)e_s e_s^t  \mathbb{R}^{m\times m}$ with $\theta_s=p_s \theta$, $p_s>0$ and $\frac{\sqrt{m}}{\theta}=o(1)$ with regard to large $m$.
\begin{eqnarray*}
\hat{\Sigma}_{X,P_k}=P_k^{1/2} W_X P_k^{1/2} \text{ and } \hat{\Sigma}_{Y,P_k}=P_k^{1/2} W_Y P_k^{1/2},
\end{eqnarray*}
and $\hat{\hat{\Sigma}}_{X,P_k}$, $\hat{\hat{\Sigma}}_{Y,P_k}$ as described above (see, \ref{ADef=unbiased}). 
\noindent Then, conditioning on the spectra $S_{W_X}$ and $S_{W_Y}$,\\
\scalebox{0.82}{
\begin{minipage}{1\textwidth}
\begin{eqnarray*}
\left. \lambda_{\max}\left(\hat{\hat{\Sigma}}_{X,P_k}^{-1/2} \hat{\hat{\Sigma}}_{Y,P_k} \hat{\hat{\Sigma}}_{X,P_k}^{-1/2}\right) \right| S_{W_X},S_{W_Y}=\lambda_{\max}\left( H^+ \right)+1+O_p\left(\frac{1}{m}\right)+O_p\left( \frac{1}{ \theta \sqrt{m}}\right),\\
\left.\lambda_{\min}\left(\hat{\hat{\Sigma}}_{X,P_k}^{-1/2} \hat{\hat{\Sigma}}_{Y,P_k} \hat{\hat{\Sigma}}_{X,P_k}^{-1/2}\right)\right| S_{W_X},S_{W_Y}=\lambda_{\max}\left( H^- \right)+1+O_p\left(\frac{1}{m}\right)+O_p\left( \frac{1}{ \theta \sqrt{m}}\right),
\end{eqnarray*}
\end{minipage}}
where
\begin{eqnarray*}
H^\pm= \zeta_{\infty}^\pm \begin{pmatrix}
\hat{\zeta}_1^\pm / \zeta_{\infty}^\pm &  w_{1,2}^\pm & w_{1,3}^\pm & \cdots &  w_{1,k}^\pm \\ 
 w_{2,1}^\pm  & \hat{\zeta}_2^\pm / \zeta_{\infty}^\pm &  w_{2,3}^\pm & \cdots  & w_{2,k}^\pm \\ 
 w_{3,1}^\pm & w_{3,2}^\pm & \hat{\zeta}_3^\pm /\zeta_{\infty}^\pm & \cdots  &  w_{3,k}^\pm \\ 
\vdots & \vdots & \ddots & \ddots &  \vdots \\ 
  w_{k,1}^\pm &  w_{k,2}^\pm & w_{k,3}^\pm & \cdots  & \hat{\zeta}_k^\pm /\zeta_{\infty}^\pm  \\ 
\end{pmatrix} ,
\end{eqnarray*}
and \\
\scalebox{0.73}{
\begin{minipage}{1\textwidth}
\begin{eqnarray*}
&&\hat{\zeta}_i^+= \left.\lambda_{\max}\left(\hat{\hat{\Sigma}}_{X,\tilde{P}_i}^{1/2} \hat{\hat{\Sigma}}_{Y,\tilde{P}_i} \hat{\hat{\Sigma}}_{X,\tilde{P}_i}^{1/2} \right)-1 \right| S_{W_X},S_{W_Y},\hspace{30cm}\\
&&\hat{\zeta}_i^-= \left.\lambda_{\min}\left(\hat{\hat{\Sigma}}_{X,\tilde{P}_i}^{1/2} \hat{\hat{\Sigma}}_{Y,\tilde{P}_i} \hat{\hat{\Sigma}}_{X,\tilde{P}_i}^{1/2} \right)-1 \right| S_{W_X},S_{W_Y},\\
&&\zeta_{\infty}^\pm =\underset{m \rightarrow \infty}{\lim} \hat{\zeta}_i^\pm = \lambda^\pm -1,\\
&&w_{i,j}^\pm \sim { \color{red} \Normal}\left(0,
\frac{1}{m}\frac{2 (M_{2,X}-1) (M_{2,Y}-1)+B_{X}^\pm +B_{Y}^\pm}{\left((\zeta_{\infty}^\pm-2 M_2+1)^2+2 (M_2-1)\right)^2}
  \right)+o_p\left( \frac{1}{\sqrt{m}} \right),
\end{eqnarray*}
\end{minipage}}\\
\scalebox{0.73}{
\begin{minipage}{1\textwidth}
\begin{eqnarray*}
&&B_{X}^+=\left(1-M_2+2 M_{2,X}+\sqrt{M_2^2-1}\right)^2 (M_{2,X}-1)\hspace{20cm}\\
&&\hspace{2cm} +2\left(-1+M_2-2M_{2,x}-\sqrt{M_2^2-1}\right)(M_{3,X}-M_{2,X})+(M_{4,X}-M_{2,X}^2),\\
&&B_{Y}^+=\left(1+M_2+M_{2,Y}-M_{2,X}-\sqrt{M_2^2-1}\right)^2 (M_{2,Y}-1)\\
&&\hspace{2cm} +2\left(-1-M_2-M_{2,Y}-M_{2,X}-\sqrt{M_2^2-1}\right)(M_{3,Y}-M_{2,Y})+(M_{4,Y}-M_{2,Y}^2),\\
&&B_{X}^-=\left(1-M_2+2 M_{2,X}-\sqrt{M_2^2-1}\right)^2 (M_{2,X}-1)\hspace{20cm}\\
&&\hspace{2cm} +2\left(-1+M_2-2M_{2,x}+\sqrt{M_2^2-1}\right)(M_{3,X}-M_{2,X})+(M_{4,X}-M_{2,X}^2),\\
&&B_{Y}^-=\left(1+M_2+M_{2,Y}-M_{2,X}+\sqrt{M_2^2-1}\right)^2 (M_{2,Y}-1)\\
&&\hspace{2cm} +2\left(-1-M_2-M_{2,Y}+M_{2,X}-\sqrt{M_2^2-1}\right)(M_{3,Y}-M_{2,Y})+(M_{4,Y}-M_{2,Y}^2).
\end{eqnarray*}
\end{minipage}}\\
\noindent The matrices $H^+$ and $H^-$ are strongly correlated. However, within a matrix, all the entries are uncorrelated.
\end{enumerate}
\end{Th}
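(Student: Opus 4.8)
The plan is to exploit the fact that each filtered covariance matrix $\hat{\hat{\Sigma}}_{X,P_k}$ and $\hat{\hat{\Sigma}}_{Y,P_k}$ is a rank-$k$ perturbation of the identity, so that $\hat{\hat{\Sigma}}_{X,P_k}^{-1/2}\hat{\hat{\Sigma}}_{Y,P_k}\hat{\hat{\Sigma}}_{X,P_k}^{-1/2}$ differs from $\I_m$ only on the $(\le 2k)$-dimensional subspace spanned by the eigenvectors $\hat u_{X,i}$ and $\hat u_{Y,j}$. First I would recast the search for a residual spike $\lambda\neq 1$ as the generalized eigenvalue problem $\hat{\hat{\Sigma}}_{Y,P_k}\, w=\lambda\,\hat{\hat{\Sigma}}_{X,P_k}\, w$ and, via the matrix-determinant lemma applied to $\hat{\hat{\Sigma}}_{Y,P_k}-\lambda\hat{\hat{\Sigma}}_{X,P_k}$, reduce it to a $2k\times 2k$ secular equation whose entries are built from the unbiased estimators $\hat{\hat\theta}_{X,i},\hat{\hat\theta}_{Y,j}$ and the cross-overlaps $\left\langle \hat u_{X,i},\hat u_{Y,j}\right\rangle$. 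The residual spikes are exactly the roots of this determinant, so the whole problem reduces to controlling finitely many scalar quantities.

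Second, I would determine the joint asymptotics of these building blocks conditional on the spectra $S_{W_X},S_{W_Y}$, where the only remaining randomness sits in the Haar-distributed eigenvectors of $W_X,W_Y$. The engine throughout is a conditional central limit theorem for the weighted quadratic forms $\sum_i f(\hat\lambda_{W,i})\left\langle \hat u_{W,i},u\right\rangle^2$ that appear in the T-transform in direction $u$: by rotational invariance these concentrate at $\tfrac1m\sum_i f(\hat\lambda_{W,i})$ with Gaussian fluctuations of order $1/\sqrt m$ whose variance is an explicit quadratic functional of $f$ against the empirical moments $M_{s,X},M_{s,Y}$. Choosing $f(x)=x/(z-x)$ and its derivatives recovers both the biases that define $\hat{\hat\theta}$ and the fluctuations of the overlaps.

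Third, I would treat the two scales of the secular matrix separately. On the diagonal ($i=j$) the leading behaviour is that of the single-direction problem: invoking invariance with respect to $k$ (Definition \ref{ADef:Invariant}) I would show that, up to negligible error, the diagonal reduces to the order-$1$ residual spike $\hat\zeta_i^\pm$ built from $\tilde P_i$, whose conditional law is of the Part~1 Gaussian type, centred at $\zeta_\infty^\pm=\lambda^\pm-1$. The off-diagonal entries ($i\neq j$) are the genuinely new objects: I would expand $\left\langle \hat u_{X,i},\hat u_{Y,j}\right\rangle$ around its vanishing mean, compute its conditional variance, and show it produces the stated normal law for $w_{i,j}^\pm$ with variance involving $B_X^\pm+B_Y^\pm$, then verify that distinct entries are asymptotically uncorrelated (the projections onto the fixed directions $e_i,e_j$ decouple under the Haar measure) while $H^+$ and $H^-$, being built from the same eigenvectors, remain strongly correlated. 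Taking $\theta\to\infty$ then collapses the $2k\times 2k$ problem to the $k\times k$ matrix $H^\pm$ (the $X$- and $Y$-spikes in a common direction lock together, generating the $O_p(1/(\theta\sqrt m))$ term), and a standard eigenvalue perturbation bound yields $\lambda_{\max/\min}(\cdots)=\lambda_{\max}(H^\pm)+1+O_p(1/m)+O_p(1/(\theta\sqrt m))$.

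The hard part will be the off-diagonal fluctuation analysis: obtaining the exact conditional variance of $\left\langle \hat u_{X,i},\hat u_{Y,j}\right\rangle$, and hence the polynomials $B_X^\pm,B_Y^\pm$ in the empirical moments, requires pushing the quadratic-form CLT to second order while tracking how the rank-$k$ spike deforms the eigenvectors away from the pure Haar case, and simultaneously coordinating the $\theta\to\infty$ and $m\to\infty$ limits so that the two error terms genuinely dominate the remainder. Establishing the claimed uncorrelatedness within each of $H^+$ and $H^-$, together with the strong cross-correlation between them, is the other delicate point, since it rests on the precise covariance structure of these overlaps rather than on a soft independence argument.
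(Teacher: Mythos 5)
Your proposal is correct in substance and reaches the theorem by a genuinely different linear-algebra reduction, while relying on the same probabilistic engine. Where you pass through the generalized eigenproblem $\hat{\hat{\Sigma}}_{Y,P_k}\,w=\lambda\,\hat{\hat{\Sigma}}_{X,P_k}\,w$ and a matrix-determinant-lemma secular equation of size $2k\times 2k$ in the overlaps $\left\langle \hat u_{X,i},\hat u_{Y,j}\right\rangle$, the paper instead rotates by $\hat{U}_{P_k,X}$ (sending the $X$-eigenvectors to $e_1,\dots,e_k$), decomposes $\hat{\hat{\Sigma}}_{X,P_k}^{-1/2}\hat{\hat{\Sigma}}_{Y,P_k}\hat{\hat{\Sigma}}_{X,P_k}^{-1/2}-\I_m$ into $k$ rank-two summands, solves each exactly with the closed-form eigenpair formulas of Lemma \ref{ALemmalinearalg} (this is where the pseudo-spikes $\hat\zeta_i^\pm$, pseudo-eigenvectors $w_i^\pm$ and the order-$1$ invariance enter, matching your diagonal step), and then reduces dimension via the Gram-matrix Lemma \ref{ALemreducedim}, obtaining blocks $H^+$, $H^-$ plus a cross block $H^b$ that Lemma \ref{ALemneglectvalue} discards because the $+$ and $-$ families are eigenvalue-separated and cross overlaps are $O_p(1/\sqrt m)$ — this is your ``standard eigenvalue perturbation bound'' and the source of the $O_p(1/m)$ term; the $O_p\left(1/(\theta\sqrt m)\right)$ term, which you attribute to the spikes locking, in fact comes from the eigenvector expansion $\tilde u_{P_k,s,t}=\hat u_{P_k,X,t,s}+\hat u_{P_k,Y,s,t}+O_p(1/m)+O_p\left(1/(\theta\sqrt m)\right)$ of \cite{mainarticle2}. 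Your route buys a single secular determinant with no rank-two eigenpair algebra; the paper's route buys explicit pseudo-eigenvectors, which make your ``hard part'' mechanical: $\left\langle w_i^\pm,w_j^\pm\right\rangle$ is expanded into bulk cross-sums such as $\sum_{p>k}\hat u_{P_k,Y,s,p}\hat u_{P_k,X,t,p}$ and the variance with $B_X^\pm+B_Y^\pm$ is read off from prior results of \cite{mainarticle2}, not from a second-order quadratic-form CLT derived from scratch. Two caveats to import: the diagonal invariance needs the two residual spikes of each rank-two piece to stay separated ($\lambda^+-\lambda^-\not\rightarrow 0$, guaranteed in the proportional regime, via a Lipschitz bound on the eigenvalue function $g^\pm$); and uncorrelatedness within $H^\pm$ is established by explicit covariance computations exploiting independence of $W_X$ and $W_Y$, not by a soft Haar-decoupling argument — indeed even the paper only obtains marginal normality plus vanishing covariances for the cross terms (the $\Normal$ in the statement is flagged in red), so your plan should not promise joint normality there.
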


\subsection{Proof}
Results necessary to prove the Main Theorem \ref{ATH=Main} are separated into subsection. 
First, we present a sketch of the proof without detail to explain the main idea. 
Then, we prove some useful lemmas from linear algebra and finally, we put the things together.

\subsubsection{Rough sketch of the idea behind the proof} \label{Asec=sketch}

\paragraph*{Residual spike for a perturbation of order $1$} 
The first part of Theorem \ref{ATH=Main} concerns perturbations of order $1$ and is proved in \cite{mainarticle}.

\paragraph*{Decomposition of the matrix} 
The generalisation of the previous result to perturbations of order $k > 1$ is not straightforward. 
We want to study the largest eigenvalue of 
\begin{eqnarray*}
\hat{\hat{\Sigma}}_{P_k,X}^{-1/2} \hat{\hat{\Sigma}}_{P_k,Y} \hat{\hat{\Sigma}}_{P_k,X}^{-1/2},
\end{eqnarray*}
where $\hat{\hat{\Sigma}}_{P_k,X}$ is the filtered estimator of the random covariance matrix $\hat{\Sigma}_{P_k,X}= P_k^{1/2} W_X P_k^{1/2}$ defined in Def. \ref{ADef=unbiased}. First, we define a rotation matrix $\hat{U}_{P_k,X}$ such that $\hat{U}_{P_k,X}^t \hat{u}_{P_k,X,i}=e_i$ for $i=1,2,...,k$ and $\hat{U}_{P_k,X}^t \hat{u}_{P_k,Y,i}=\tilde{u}_{P_k}$. Then, we study the transformed matrices which retains the same eigenvalues,

\scalebox{0.77}{
\begin{minipage}{1\textwidth}
\begin{eqnarray*}
\Sigma_{P_k,X}^{-1/2} \tilde{\Sigma}_{P_k,Y} \Sigma_{P_k,X}^{-1/2}&=& \I_m + \sum_{i=1}^k \left[ \Sigma_{P_k,X}^{-1/2}  \left(\hat{\hat{\theta}}_{P_k,Y,i}-1 \right) \tilde{u}_{P_k,i} \tilde{u}_{P_k,i}^t \Sigma_{P_k,X}^{-1/2} +\left(\frac{1}{\hat{\hat{\theta}}_{P_k,X,i}}-1\right)e_i e_i^t \right],
\end{eqnarray*}
\end{minipage}}\\
where $\Sigma_{P_k,X}=\I_m + \sum_{i=1}^k \left(\hat{\hat{\theta}}_{P_k,X,i}-1 \right) e_i e_i^t$ and \linebreak $\tilde{\Sigma}_{P_k,Y}=\I_m + \sum_{i=1}^k \left(\hat{\hat{\theta}}_{P_k,Y,i}-1 \right) \tilde{u}_i \tilde{u}_i^t$. The details are presented in Section \ref{Asec:decomp}.

\paragraph*{Pseudo Invariance of the residual spike}
When $k$ grows, the residual spike is not invariant in the sense of Def. \ref{ADef:Invariant}. We define for $i=1,2,...,k$, $\hat{\Sigma}_{\tilde{P}_i,X}= \tilde{P}_i^{1/2} W_X \tilde{P}_i^{1/2}$, where $\tilde{P}_i= \I_m + (\theta_i-1) e_i e_i^t$ is a perturbation of order $1$. 
The following invariance then holds,\\
\scalebox{0.72}{
\begin{minipage}{1\textwidth}
\begin{eqnarray*}
 \lambda \left( \Sigma_{P_k,X}^{-1/2}  \left(\hat{\hat{\theta}}_{P_k,Y,i}-1 \right) \tilde{u}_{P_k,i} \tilde{u}_{P_k,i}^t \Sigma_{P_k,X}^{-1/2} +\left(\frac{1}{\hat{\hat{\theta}}_{P_k,X,i}}-1\right)e_i e_i^t \right)
= \lambda\left( \hat{\hat{\Sigma}}_{\tilde{P}_i,X}^{-1/2} \hat{\hat{\Sigma}}_{\tilde{P}_i,Y} \hat{\hat{\Sigma}}_{\tilde{P}_i,X}^{-1/2} \right)-1 + O_p\left( \frac{1}{m} \right),
\end{eqnarray*} 
\end{minipage}}\\
where $\lambda()$ provides the non null eigenvalues. This result is proven assuming either that $\theta_i$ is large or that the two non-trivial residual spikes of the perturbation of order $1$ are distinct. The second condition is easy to show, because dot products between eigenvectors do not tend to $1$. However, when $n_X,n_Y>>m$, this could create some imprecision. The details are presented in Section \ref{Asec:pseudoinvresidual}.
 
\paragraph*{Pseudo residual eigenvectors}
The previous part demonstrates an invariance property for some eigenvalues related to the residual spike. The next step studies eigenvectors corresponding to these eigenvalues. For $s=1,2,...,k$, we set 
\begin{eqnarray*}
\hat{\zeta}_s^{\pm}=\lambda \left( \Sigma_{P_k,X}^{-1/2}  \left(\hat{\hat{\theta}}_{P_k,Y,s}-1 \right) \tilde{u}_{P_k,s} \tilde{u}_{P_k,s}^t \Sigma_{P_k,X}^{-1/2} +\left(\frac{1}{\hat{\hat{\theta}}_{P_k,X,s}}-1\right)e_s e_s^t \right)
\end{eqnarray*}
and 
\begin{eqnarray*}
w_s^{\pm}&=&u\Bigg(\Sigma_{P_k,X}^{-1/2} \left( (\hat{\theta}_{P_k,Y,s}-1) \tilde{u}_{P_k,s}\tilde{u}_{P_k,Y,s}^t \right) \Sigma_{P_k,X}^{-1/2}+ \left(\frac{1}{\hat{\theta}_{P_k,X,s}}-1\right) e_s e_s^t \Bigg),
\end{eqnarray*}
its corresponding eigenvector. The notation $\pm$ allows $\hat{\zeta}_s^{-}$ and $\hat{\zeta}_s^{+}$ to be distinguished. We define for $s=1,2,...,k$,
$$\zeta_{\infty}^{\pm}(\theta_s)= \underset{m \rightarrow \infty}{\lim} \hat{\zeta}_s^{\pm} \text{ and }
\zeta_{\infty}^{\pm}= \underset{m,\theta_s \rightarrow \infty}{\lim} \hat{\zeta}_s^{\pm}.$$
Then,
\begin{eqnarray*}
w_{s,s}^{\pm}
&=&\frac{\left(\zeta_{\infty}^{\pm}-2 \left(M_2 -1\right)\right)}{\sqrt{\left(\zeta_{\infty}^{\pm}-2 \left(M_2 -1\right)\right)^2+2 \left(M_2 -1\right) }}+ O_p\left( \frac{1}{\sqrt{m}} \right) + o_{p;\theta}\left( 1 \right), \hspace{20cm}
\end{eqnarray*} 
\scalebox{0.72}{
\begin{minipage}{1\textwidth}
\begin{eqnarray*}
w_{s,2:m \setminus s}^{\pm}
&=&\frac{\sqrt{\hat{\theta}_{P_k,Y,1}-1} \left(\frac{\tilde{u}_{P_k,s,1}}{\sqrt{\hat{\theta}_{P_k,X,1}}},...,\frac{\tilde{u}_{P_k,s,s-1}}{\sqrt{\hat{\theta}_{P_k,X,s-1}}},\frac{\tilde{u}_{P_k,s,s+1}}{\sqrt{\hat{\theta}_{P_k,X,s+1}}},...,\frac{\tilde{u}_{P_k,s,k}}{\sqrt{\hat{\theta}_{P_k,X,k}}},\tilde{u}_{P_k,s,k+1},...,\tilde{u}_{P_k,s,m}\right) }{\sqrt{\left(\zeta_{\infty}^{\pm}-2 \left(M_2 -1\right)\right)^2+2 \left(M_2 -1\right)}+O_p\left(\frac{1}{\sqrt{m}} \right)+o_{p;\theta} (1)},\hspace{20cm}
\end{eqnarray*} 
\end{minipage}}\vspace{0.3cm}\\
The details of the proof are presented in Section \ref{Asec:pseudoinvresidualvector}.

\begin{Rem}
These results concerning pseudo residual structure are valid for large $\theta$. The paper is based on extracts from the Thesis \cite{mathese} that prove similar formulas for all $\theta$. 
\end{Rem}

\paragraph*{Dimension reduction}
The three previous parts showed that

\scalebox{0.72}{
\begin{minipage}{1\textwidth}
\begin{eqnarray*}
\lambda\left( \hat{\hat{\Sigma}}_{P_k,X}^{-1/2} \hat{\hat{\Sigma}}_{P_k,Y} \hat{\hat{\Sigma}}_{P_k,X}^{-1/2}\right)
&=&\lambda\left( \I_m + \sum_{i=1}^k \left[ \Sigma_{P_k,X}^{-1/2}  \left(\hat{\hat{\theta}}_{P_k,Y,i}-1 \right) \tilde{u}_{P_k,i} \tilde{u}_{P_k,i}^t \Sigma_{P_k,X}^{-1/2} +\left(\frac{1}{\hat{\hat{\theta}}_{P_k,X,i}}-1\right)e_i e_i^t \right] \right)
\end{eqnarray*}
\end{minipage}}\vspace{0.3cm}\\
and for $i=1,2,...,k$, \\
\scalebox{0.83}{
\begin{minipage}{1\textwidth}
\begin{eqnarray*}
\hat{\zeta}_{i}^+ w_{i}^+ {w_{i}^+}^t +\hat{\zeta}_{i}^- w_{i}^- {w_{i}^-}^t=\left[ \Sigma_{P_k,X}^{-1/2}  \left(\hat{\hat{\theta}}_{P_k,Y,i}-1 \right) \tilde{u}_{P_k,i} \tilde{u}_{P_k,i}^t \Sigma_{P_k,X}^{-1/2} +\left(\frac{1}{\hat{\hat{\theta}}_{P_k,X,i}}-1\right)e_i e_i^t \right],\\
\end{eqnarray*}
\end{minipage}}\vspace{0.3cm}\\
where $\pm$ allows the distinction of the pseudo eigenvalues and eigenvectors such that $+$ is the largest and $-$ is the smallest. 
Easy arguments from linear algebra lead to 
\begin{eqnarray*}
\lambda_{\max}\left( \hat{\hat{\Sigma}}_{P_k,X}^{-1/2} \hat{\hat{\Sigma}}_{P_k,Y} \hat{\hat{\Sigma}}_{P_k,X}^{-1/2} \right) = \lambda_{\max}\left(H^+ \right)+1+O_p\left(\frac{1}{m} \right) ,\\
\lambda_{\min}\left( \hat{\hat{\Sigma}}_{P_k,X}^{-1/2} \hat{\hat{\Sigma}}_{P_k,Y} \hat{\hat{\Sigma}}_{P_k,X}^{-1/2} \right) = \lambda_{\min}\left(H^- \right)+1+O_p\left(\frac{1}{m} \right) ,
\end{eqnarray*}
where $H^{\pm}$ are matrices of dimension $k$ such that\\
\scalebox{0.75}{
\begin{minipage}{1\textwidth}
\begin{eqnarray*}
H^{\pm} &=& \begin{pmatrix}
\hat{\zeta}_1^{\pm} & \sqrt{\hat{\zeta}_1^{\pm} \hat{\zeta}_2^{\pm}} \left\langle w_1^{\pm},w_2^{\pm} \right\rangle & \sqrt{\hat{\zeta}_1^{\pm} \hat{\zeta}_3^{\pm}} \left\langle w_1^{\pm},w_3^{\pm} \right\rangle & \cdots & \sqrt{\hat{\zeta}_k^{\pm} \hat{\zeta}_2^{\pm}} \left\langle w_1^{\pm},w_k^{\pm} \right\rangle \\ 
\sqrt{\hat{\zeta}_2^{\pm} \hat{\zeta}_1^{\pm}} \left\langle w_2^{\pm},w_1^{\pm} \right\rangle  & \hat{\zeta}_2^{\pm} & \sqrt{\hat{\zeta}_2^{\pm} \hat{\zeta}_3^{\pm}} \left\langle w_2^{\pm},w_3^{\pm} \right\rangle & \cdots  & \sqrt{\hat{\zeta}_2^{\pm} \hat{\zeta}_k^{\pm}} \left\langle w_2^{\pm},w_k^{\pm} \right\rangle \\ 
\sqrt{\hat{\zeta}_3^{\pm} \hat{\zeta}_1^{\pm}} \left\langle w_3^{\pm},w_1^{\pm} \right\rangle &\sqrt{\hat{\zeta}_3^{\pm} \hat{\zeta}_2^{\pm}} \left\langle w_3^{\pm},w_2^{\pm} \right\rangle  & \hat{\zeta}_3^{\pm} & \cdots  & \sqrt{\hat{\zeta}_3^{\pm} \hat{\zeta}_k^{\pm}} \left\langle w_3^{\pm},w_k^{\pm} \right\rangle \\ 
\vdots & \vdots & \ddots & \ddots &  \vdots \\ 
 \sqrt{\hat{\zeta}_k^{\pm} \hat{\zeta}_1^{\pm}} \left\langle w_k^{\pm},w_1^{\pm} \right\rangle & \sqrt{\hat{\zeta}_k^{\pm} \hat{\zeta}_2^{\pm}} \left\langle w_k^{\pm},w_2^{\pm} \right\rangle & \sqrt{\hat{\zeta}_k^{\pm} \hat{\zeta}_3^{\pm}} \left\langle w_k^{\pm},w_3^{\pm} \right\rangle & \cdots  & \hat{\zeta}_k^{\pm}  \\ 
\end{pmatrix}.
\end{eqnarray*}
\end{minipage}}\vspace{0.3cm}\\
The details are explained in Section \ref{Asec:dimreduce}.

\paragraph*{Elements of $H$}
The matrices $H^{\pm}$ are functions of $\hat{\zeta}_{i}^{\pm}$ and $\left\langle w_{i}^\pm,w_{j}^\pm \right\rangle$ for $i,j=1,2,...,k$. By the pseudo invariance of the residual spike, we know that $\hat{\zeta}_{i}^\pm$ behaves like residual spikes of a perturbation of order $1$. Using a theorem of \cite{mainarticle2}, we can express $\left\langle w_{i}^\pm,w_{j}^\pm \right\rangle$ as a function of well-known statistics. We directly see that $\left\langle w_{s}^\pm,w_{s}^\pm \right\rangle=1$. Moreover, for $s \neq t$,
\begin{eqnarray*}
&&\left\langle w_{i}^\pm,w_{j}^\pm \right\rangle \sim {\rm RV}\left(0,\frac{1}{m}
\frac{2 (M_{2,X}-1) (M_{2,Y}-1)+B_{X}^\pm +B_{Y}^\pm}{\left((\zeta_{\infty}^\pm-2 M_2+1)^2+2 (M_2-1)\right)^2}
  \right)
  \end{eqnarray*}
where  $B_{X}^\pm$ and $B_{Y}^\pm$ are defined in the Theorem \ref{ATH=Main}. The details of the computation are presented in Section \ref{Asec:elementH}.

\paragraph*{Normality discussion}\ \\
When the perturbation is of order $1$, we have already proved the normality in \cite{mainarticle}. When the perturbation is of order $k$ and $n_X>>n_Y$, then the joint normality is a straightforward consequence of the unit statistic Theorem of \cite{mainarticle}. In the general case we can only express the entries of $H^{\pm}$ as a function of marginally Normal statistics. The details of the computations are presented in Section \ref{Asec:HNormal}.

\subsubsection{Prerequisite Lemmas}
In order to prove the Main Theorem \ref{ATH=Main} we need some preliminary results, which we present in the form of lemmas.

\begin{Lem} \label{ALemreducedim} 
Suppose $w_1,...,w_k \in \mathbb{R}^m$ and $\lambda_1,...,\lambda_k \in \mathbb{R}^*$, then if the function $\lambda()$ provides the non-trivial eigenvalues,
\begin{eqnarray*}
\lambda\Bigg( \sum_{i=1}^k \lambda_i w_i w_i^t \Bigg) = \lambda\Bigg( H \Bigg) ,
\end{eqnarray*}
where\\
\scalebox{0.9}{
\begin{minipage}{1\textwidth}
\begin{eqnarray*}
H=\begin{pmatrix}
{\lambda}_1 & \sqrt{{\lambda}_1 {\lambda}_2} \left\langle w_1,w_2 \right\rangle & \sqrt{{\lambda}_1 {\lambda}_3} \left\langle w_1,w_3 \right\rangle & \cdots & \sqrt{{\lambda}_k {\lambda}_2} \left\langle w_1,w_k \right\rangle \\ 
\sqrt{{\lambda}_2 {\lambda}_1} \left\langle w_2,w_1 \right\rangle  & {\lambda}_2 & \sqrt{{\lambda}_2 {\lambda}_3} \left\langle w_2,w_3 \right\rangle & \cdots  & \sqrt{{\lambda}_2 {\lambda}_k} \left\langle w_2,w_k \right\rangle \\ 
\sqrt{{\lambda}_3 {\lambda}_1} \left\langle w_3,w_1 \right\rangle &\sqrt{{\lambda}_3 {\lambda}_2} \left\langle w_3,w_2 \right\rangle  & {\lambda}_3 & \cdots  & \sqrt{{\lambda}_3 {\lambda}_k} \left\langle w_3,w_k \right\rangle \\ 
\vdots & \vdots & \ddots & \ddots &  \vdots \\ 
 \sqrt{{\lambda}_k {\lambda}_1} \left\langle w_k,w_1 \right\rangle & \sqrt{{\lambda}_k {\lambda}_2} \left\langle w_k,w_2 \right\rangle & \sqrt{{\lambda}_k {\lambda}_3} \left\langle w_k,w_3 \right\rangle & \cdots  & {\lambda}_k  \\ 
\end{pmatrix} .
\end{eqnarray*}
\end{minipage}}
\end{Lem}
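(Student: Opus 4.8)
The plan is to recognize $\sum_{i=1}^k \lambda_i w_i w_i^t$ as $AA^t$ for a suitable $m\times k$ matrix $A$, so that its non-trivial eigenvalues are captured by the companion $k\times k$ product $A^tA$, which turns out to equal $H$ on the nose. Concretely, set $\Delta=\diag(\sqrt{\lambda_1},\dots,\sqrt{\lambda_k})$ (a complex diagonal matrix if some $\lambda_i<0$, which is harmless for the eigenvalue computation) and let $A=(w_1\,|\,\cdots\,|\,w_k)\,\Delta$, so that the $i$-th column of $A$ is $\sqrt{\lambda_i}\,w_i$. Since the transpose (not the conjugate transpose) is used throughout, $AA^t=\sum_{i=1}^k(\sqrt{\lambda_i}w_i)(\sqrt{\lambda_i}w_i)^t=\sum_{i=1}^k\lambda_i w_i w_i^t$, while the $(i,j)$ entry of $A^tA$ is $(\sqrt{\lambda_i}w_i)^t(\sqrt{\lambda_j}w_j)=\sqrt{\lambda_i\lambda_j}\,\langle w_i,w_j\rangle$, which is precisely $H_{i,j}$. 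Thus $A^tA=H$.

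Next I would invoke the standard determinant identity relating $AA^t$ and $A^tA$: for any $A\in\mathbb{C}^{m\times k}$,
\[
\det(z\,\I_m-AA^t)=z^{m-k}\,\det(z\,\I_k-A^tA).
\]
A clean way to obtain it is the Schur-complement evaluation of $\det\begin{pmatrix} z\,\I_m & A\\ A^t & \I_k\end{pmatrix}$ carried out in the two possible orders, which is equivalent to the familiar fact that $PQ$ and $QP$ share the same nonzero spectrum with multiplicities. This identity says that the characteristic polynomial of $AA^t$ equals $z^{m-k}$ times that of $A^tA$; hence, after stripping the $m-k$ structural zeros that every matrix of rank at most $k$ must carry, the remaining $k$ eigenvalues of $\sum_i\lambda_i w_i w_i^t=AA^t$ coincide as a multiset with the full spectrum of $A^tA=H$. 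Since $\lambda(\cdot)$ returns exactly these non-trivial eigenvalues, we conclude $\lambda\bigl(\sum_{i=1}^k\lambda_i w_i w_i^t\bigr)=\lambda(H)$.

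The calculation is routine and the only genuine points are bookkeeping rather than difficulty. First, the square roots $\sqrt{\lambda_i}$ may be imaginary when $\lambda_i<0$, so $A$ is a priori complex; this is immaterial, because the determinant identity is a polynomial identity valid over $\mathbb{C}$ and $H$ itself is real, its entries depending only on the products $\lambda_i\lambda_j$ (in the intended application, the matrices $H^{\pm}$, all $\lambda_i=\hat{\zeta}_i^{\pm}$ carry a common sign, so every $\sqrt{\lambda_i\lambda_j}$ is real). Second, one must be precise about what "non-trivial eigenvalues" means: $\sum_i\lambda_i w_i w_i^t$ has rank at most $k$ and therefore at least $m-k$ forced zero eigenvalues, and it is exactly these that the factor $z^{m-k}$ removes; any further zeros, arising when the $w_i$ are linearly dependent so that $H$ is singular, then appear on both sides and are retained consistently. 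With these two remarks the identification of the two spectra is complete.
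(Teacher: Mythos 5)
Your proof takes essentially the same route as the paper's: both write $\sum_{i=1}^k \lambda_i w_i w_i^t = (W\Lambda^{1/2})(W\Lambda^{1/2})^t$ with $W=(w_1\,|\,\cdots\,|\,w_k)$ and pass to $(W\Lambda^{1/2})^t(W\Lambda^{1/2})=H$ via the fact that $AA^t$ and $A^tA$ share the same nonzero spectrum. Your additional care with the possibly imaginary $\sqrt{\lambda_i}$ when $\lambda_i<0$ (permitted since $\lambda_i\in\mathbb{R}^*$) and with the multiplicity bookkeeping via the determinant identity is a welcome refinement of a point the paper's proof passes over silently, but it does not change the argument.
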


\begin{proof} 

\noindent We define
\begin{eqnarray*}
\Lambda=\begin{pmatrix}
\lambda_1 & 0 & \cdots & 0 \\ 
0 &\lambda_2 &  & 0 \\ 
\vdots &  &\ddots & \vdots \\ 
0 & 0 & \cdots & \lambda_k
\end{pmatrix} , \ 
W=\left( 
w_1 ,w_2 , ... , w_k
\right)  \in \mathbb{R}^{m \times k},
\end{eqnarray*}
then
\begin{eqnarray*}
\lambda\Bigg( \sum_{i=1}^k \lambda_i w_i w_i^t \Bigg)&=& \lambda\Bigg( W \Lambda W^t\Bigg)
= \lambda\Bigg( \left(W \Lambda^{1/2} \right)  \left(W \Lambda^{1/2} \right)^t\Bigg)\\
&=& \lambda\Bigg( \left(W \Lambda^{1/2} \right)^t  \left(W \Lambda^{1/2} \right)\Bigg) \text{ (for nonzero eigenvalues)}\\
&=& \lambda\Bigg(  \Lambda^{1/2} W^t W \Lambda^{1/2} \Bigg)
= \lambda\Bigg( H \Bigg).
\end{eqnarray*}

\end{proof}

\begin{Lem}\label{ALemmalinearalg}
Suppose $e_1,w \in \mathbf{R}^m$ and $a,b \in \mathbf{R}$. Then, if $||w||=1$, the two ($\pm$) non-trivial eigenvalues and eigenvectors are\\
\scalebox{0.82}{
\begin{minipage}{1\textwidth}
\begin{eqnarray*}
\lambda^{\pm}\Bigg( a e_1 e_1^t + b w w^t  \Bigg)&=& \frac{1}{2} \left(a+b \pm \sqrt{4 a b w_1^2+(a-b)^2}\right),\\
u^{\pm}\Bigg( a e_1 e_1^t + b w w^t  \Bigg)&=&\frac{1}{\rm{Norm}^{\pm}}\left(\frac{\lambda^{\pm}\Bigg( a e_1 e_1^t + b w w^t   \Bigg)+b \left(w1^2-1\right)}{ b w_1 },w_2,w_3,w_4,...,w_m\right),\\
\left({\rm Norm}^{\pm}\right)^2 &=& \frac{\left(  \lambda^{\pm}\Bigg( a e_1 e_1^t + b w w^t   \Bigg)+b \left(w1^2-1\right) \right)^2}{b^2 w_1^2 } +1-w_1^2.
\end{eqnarray*}
\end{minipage}}\vspace{0.3cm}\\
If $||w||\not = 1$,\\
\scalebox{0.82}{
\begin{minipage}{1\textwidth}
\begin{eqnarray*}
\lambda^{\pm}\Bigg( a e_1 e_1^t + w w^t  \Bigg)&=& \frac{1}{2} \left(\pm \sqrt{\left(a+ ||w||^2\right)^2-4 a \left(||w||^2-w_1^2\right)}+a+ ||w||^2\right),\\
u^{\pm}\Bigg(a e_1 e_1^t + w w^t \Bigg)
&=&\frac{1}{\rm{Norm}^{\pm}}\left(\frac{\lambda^{\pm}\Bigg( a e_1 e_1^t + w w^t   \Bigg)- ||w||^2+w_1^2}{  w_1},w_2,w_3,w_4,...,w_m\right),\\
\left({\rm Norm}^{\pm}\right)^2 &=& \frac{\left(\lambda^{\pm}\Bigg( a e_1 e_1^t + w w^t   \Bigg) - ||w||^2+w_1^2\right)^2}{ w_1^2 } +||w||^2-w_1^2.
\end{eqnarray*}
\end{minipage}}\vspace{0.3cm}\\
\end{Lem}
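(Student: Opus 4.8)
The plan is to treat both displayed matrices simultaneously by writing $M=a\,e_1e_1^t+b\,w w^t$, recovering the first statement by imposing $\|w\|=1$ and the second by setting $b=1$ while leaving $\|w\|$ free. Since $M$ has rank at most two, it has at most two non-trivial eigenvalues, and every eigenvector belonging to a non-zero eigenvalue lies in the linear span of $e_1$ and $w$. This structural fact is what makes the whole computation two-dimensional.

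For the eigenvalues I would invoke the identity $\lambda\!\left(V\Lambda V^t\right)=\lambda\!\left(\Lambda^{1/2}V^tV\Lambda^{1/2}\right)$ established inside the proof of Lemma \ref{ALemreducedim}, applied with $V=(e_1,w)$ and $\Lambda=\diag(a,b)$; this identity needs no unit-norm hypothesis, which is precisely what lets me handle the case $\|w\|\neq1$. It collapses the problem to the $2\times2$ matrix
\[
H=\begin{pmatrix} a & \sqrt{ab}\,w_1\\[2pt] \sqrt{ab}\,w_1 & b\,\|w\|^2\end{pmatrix},\qquad \trace H=a+b\|w\|^2,\quad \det H=ab\left(\|w\|^2-w_1^2\right),
\]
where $w_1=\langle e_1,w\rangle$ is the first coordinate of $w$. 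Solving $\lambda^2-(\trace H)\lambda+\det H=0$ gives the two roots $\lambda^{\pm}$; substituting $\|w\|^2=1$ reproduces $\tfrac12\!\left(a+b\pm\sqrt{(a-b)^2+4ab\,w_1^2}\right)$, and substituting $b=1$ reproduces the second displayed eigenvalue formula.

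For the eigenvectors I would argue directly from $Mu=a(e_1^tu)\,e_1+b(w^tu)\,w$: outside the first coordinate $Mu$ is a scalar multiple of $w$, so any eigenvector for a non-zero eigenvalue must coincide with a multiple of $w$ in coordinates $2,\dots,m$. Fixing that multiple to be $1$, I write $u=(u_1,w_2,\dots,w_m)$ and read off the eigen-equation in any coordinate $j\ge2$, which yields $b\!\left(w_1u_1+\|w\|^2-w_1^2\right)=\lambda$ and hence the claimed first component $u_1=\bigl(\lambda+b(w_1^2-\|w\|^2)\bigr)/(b w_1)$. The coordinate-$1$ equation then reads $au_1+\lambda w_1=\lambda u_1$, and the point I must verify is that this is not an extra constraint: cross-multiplying turns it exactly into $\lambda^2-(\trace H)\lambda+\det H=0$, so it holds automatically for $\lambda=\lambda^{\pm}$. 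The stated norm is finally just $\left({\rm Norm}^{\pm}\right)^2=u_1^2+\sum_{j\ge2}w_j^2=u_1^2+\|w\|^2-w_1^2$, specialised to $\|w\|^2=1$ or $b=1$.

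The computation is routine linear algebra, so the main obstacle is bookkeeping rather than anything conceptual. The one genuine subtlety is the consistency of the two coordinate blocks just mentioned, which is resolved only by recognising the characteristic polynomial of $H$. I would also dispatch separately the degenerate configurations in which the displayed fractions are ill-defined: if $w_1=0$ then $e_1$ and $w$ decouple and $M$ is already block-diagonal with eigenvalues $a$ and $\|w\|^2$, while if $w$ is collinear with $e_1$ the matrix $M$ has rank one; in both cases the formulas are to be read as the appropriate limits.
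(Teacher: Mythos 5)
Your proof is correct, but it is a genuinely different route from the paper's: the paper offers no derivation at all for this lemma, stating only that the formulas ``were computed with Wolfram Mathematica 11.1.1'' and inviting the reader to verify the eigen-equation $\left( a e_1 e_1^t + b w w^t \right) u^{\pm}=  \lambda^{\pm}u^{\pm}$ by substitution. You instead derive the formulas: writing $M=V\Lambda V^t$ with $V=(e_1,w)$, $\Lambda=\diag(a,b)$ and passing to $\Lambda^{1/2}V^tV\Lambda^{1/2}$ (the same $\lambda(AB)=\lambda(BA)$ device used in the proof of Lemma \ref{ALemreducedim}, which indeed requires no unit-norm hypothesis) reduces everything to the $2\times 2$ matrix with trace $a+b\|w\|^2$ and determinant $ab\left(\|w\|^2-w_1^2\right)$, whose characteristic polynomial I have checked reproduces both displayed eigenvalue formulas under the specialisations $\|w\|=1$ and $b=1$; your eigenvector computation (coordinates $j\ge 2$ force the shape $(u_1,w_2,\dots,w_m)$, the $j\ge 2$ equation gives $u_1=\bigl(\lambda+b(w_1^2-\|w\|^2)\bigr)/(bw_1)$, and the coordinate-$1$ equation is redundant precisely because it is the characteristic equation) is also correct, and the norms follow. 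Your derivation buys several things the paper's one-line verification does not: it explains where the formulas come from, it makes the internal consistency of the two coordinate blocks a visible consequence of the characteristic polynomial rather than an unexplained coincidence, it treats both halves of the lemma uniformly, and it flags the degenerate configurations ($w_1=0$, $w$ collinear with $e_1$; one should add $b=0$, where the displayed fraction is likewise ill-defined) that the paper's statement silently excludes. The paper's approach buys only brevity: once the formulas are conjectured, direct substitution is a logically sufficient check, so neither proof has a gap --- but yours is the more informative one.
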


\begin{proof}
These results were computed with Wolfram Mathematica 11.1.1. and we invite the reader to check that
\begin{eqnarray*}
\left( a e_1 e_1^t + b w w^t \right) u^{\pm}=  \lambda^{\pm}u^{\pm}.
\end{eqnarray*}
\end{proof}

\begin{Lem} \label{ALemneglectvalue} 
Suppose $u_1,...,u_k\in \mathbb{R}^m$ are orthonormal and $\lambda_1 >...>\lambda_k \in \mathbb{R}^+$ with finite $k$. Suppose $v \in \mathbb{R}^m$ and $\mu \in \mathbb{R}^+$ such that $\left\langle u_i,v \right\rangle=O_p\left(1/\sqrt{m}\right)$ and $ \mu-\lambda_1  < d<0$ for a fixed $d$, then 
\begin{eqnarray*}
\lambda_{\max}\Bigg( \sum_{i=1}^k \lambda_i u_i u_i^t + \mu v v^t \Bigg)= \lambda_1+O_p\left(\frac{1}{m} \right).
\end{eqnarray*}
Moreover, if $ \mu-\lambda_k  > d_2 > 0$ for a fixed $d_2$,
\begin{eqnarray*}
\lambda_{\min}\Bigg( \sum_{i=1}^k \lambda_i u_i u_i^t + \mu v v^t \Bigg)= \lambda_k+O_p\left(\frac{1}{m} \right).
\end{eqnarray*}
\end{Lem}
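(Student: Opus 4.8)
The plan is to treat $M=\sum_{i=1}^k \lambda_i u_i u_i^t + \mu v v^t$ as a rank-one positive semidefinite perturbation of $A=\sum_{i=1}^k \lambda_i u_i u_i^t$ and to control its extreme nontrivial eigenvalues through the associated secular equation. Writing $P=\sum_{i=1}^k u_i u_i^t$ for the orthogonal projection onto $\mathrm{span}(u_1,\dots,u_k)$ and $\beta_i=\langle u_i,v\rangle$, the hypothesis $\beta_i=O_p(1/\sqrt m)$ together with the finiteness of $k$ gives $\|Pv\|^2=\sum_{i=1}^k\beta_i^2=O_p(1/m)$, so $v$ is orthogonal to $\mathrm{span}(u_1,\dots,u_k)$ up to an $O_p(1/\sqrt m)$ term and $\|(\I-P)v\|^2=\|v\|^2-O_p(1/m)$ (here $v$ is a unit vector, as in the applications where it is a normalized eigenvector produced by Lemma \ref{ALemmalinearalg}).

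Since $A$ has eigenvalues $\lambda_1>\cdots>\lambda_k>0$ on $\mathrm{span}(u_1,\dots,u_k)$ and $0$ on its complement, every nontrivial eigenvalue $\nu$ of $M$ with $\nu\notin\{0,\lambda_1,\dots,\lambda_k\}$ solves the secular equation $1=\mu\,v^t(\nu\I-A)^{-1}v=\mu\big(\|(\I-P)v\|^2/\nu+\sum_{i=1}^k\beta_i^2/(\nu-\lambda_i)\big)=:g(\nu)$. On each interval delimited by the poles $0<\lambda_k<\cdots<\lambda_1$ the function $g$ is strictly decreasing, so it meets the level $1$ exactly once; combined with Weyl's inequality for a rank-one positive semidefinite update (which forces $\nu_1\ge\lambda_1\ge\nu_2\ge\cdots\ge\lambda_k\ge\nu_{k+1}\ge 0$), this places exactly one root in $(\lambda_1,\infty)$, one in each gap $(\lambda_{i+1},\lambda_i)$, and one in $(0,\lambda_k)$, accounting for the $k+1$ nontrivial eigenvalues.

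For the maximum I would analyse the root $\nu_1\in(\lambda_1,\infty)$. Setting $\nu_1=\lambda_1+\epsilon$ and isolating the dominant pole term, the equation $g(\nu_1)=1$ reads $\mu\beta_1^2/\epsilon=1-\mu\|(\I-P)v\|^2/\lambda_1+O_p(1/m)$. The hypothesis $\mu-\lambda_1<d<0$ makes the right-hand side a strictly positive constant bounded away from $0$, whence $\epsilon=O_p(\beta_1^2)=O_p(1/m)$ and $\lambda_{\max}(M)=\lambda_1+O_p(1/m)$. The minimum is handled symmetrically through the root $\nu\in(0,\lambda_k)$: writing $\nu=\lambda_k-\epsilon$ gives $-\mu\beta_k^2/\epsilon=1-\mu\|(\I-P)v\|^2/\lambda_k+O_p(1/m)$, and now the hypothesis $\mu-\lambda_k>d_2>0$ makes the right-hand side a strictly negative constant bounded away from $0$, so again $\epsilon=O_p(\beta_k^2)=O_p(1/m)$ and the smallest nontrivial eigenvalue equals $\lambda_k+O_p(1/m)$. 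Intuitively, these two sign conditions guarantee that the single \emph{spurious} eigenvalue created by the $v$-direction sits near $\mu$, hence below $\lambda_1$ (respectively above $\lambda_k$), so it never becomes the extreme nontrivial eigenvalue.

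The main obstacle, and the only place the quantitative hypotheses are used, is upgrading ``$\nu$ close to $\lambda_1$ (resp. $\lambda_k$)'' to the sharp rate $O_p(1/m)$: this requires the denominators $1-\mu\|v\|^2/\lambda_1$ and $\mu\|v\|^2/\lambda_k-1$ to stay bounded away from zero, which is exactly what the fixed separations $\mu-\lambda_1<d<0$ and $\mu-\lambda_k>d_2>0$ provide, and it also uses the finiteness of $k$ so that $\sum_i\beta_i^2=O_p(1/m)$ and the gaps $\lambda_i-\lambda_{i+1}$ are fixed. A careful write-up must in addition confirm that the remaining, non-dominant pole terms contribute genuinely $O_p(1/m)$ uniformly, which follows because each $\beta_i^2=O_p(1/m)$ while the evaluation points stay a fixed distance from the other poles.
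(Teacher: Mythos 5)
Your proof is correct in substance, but it takes a genuinely different route from the paper's. You treat $M=\sum_{i=1}^k\lambda_i u_iu_i^t+\mu vv^t$ via the rank-one secular equation $1=\mu\,v^t(\nu\I-A)^{-1}v$ plus Weyl interlacing, and localize the extreme nontrivial root by showing the non-pole part of the equation stays a fixed distance from the level $1$; the paper instead argues variationally, decomposing a maximizing unit eigenvector as $w=\sum_i\alpha_iu_i+\beta v$, splitting into the cases $\beta=O_p(1/\sqrt m)$ and $\beta$ larger, bounding the Rayleigh quotient in the second case by $\lambda_1+(\mu-\lambda_1)\beta^2+2\mu\beta\sum_i\alpha_i\langle u_i,v\rangle+O_p(1/m)$ (the sign condition $\mu-\lambda_1<d<0$ makes the middle terms negative with probability tending to one), and obtaining the matching lower bound from $u_1^tMu_1=\lambda_1+O_p(1/m)$. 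Your route buys more: it localizes all $k+1$ nontrivial eigenvalues simultaneously, identifies the spurious eigenvalue as sitting near $\mu$, and treats the minimum symmetrically, whereas the paper writes out only the maximum case. The paper's argument, in exchange, is more elementary and immune to the degeneracies your route must handle by convention: the claim of exactly one root per interval needs $\langle u_i,v\rangle\neq 0$ and $(\I-P)v\neq 0$ (if $\beta_1=0$ the pole at $\lambda_1$ vanishes and $\lambda_1$ is itself an eigenvalue, which is harmless but should be said). Two small tightenings your final paragraph already anticipates: to avoid circularity when replacing $1/(\lambda_1+\epsilon)$ by $1/\lambda_1$, use monotonicity directionally, i.e.\ for any $t>0$ one has $\mu\|(\I-P)v\|^2/(\lambda_1+t)\le\mu/\lambda_1\le 1-|d|/\lambda_1$, so $g(\lambda_1+K/m)<1$ with high probability for $K$ large, and on $(0,\lambda_k)$ use $1/\nu\ge 1/\lambda_k$ uniformly so the root cannot sit in the bulk of the interval. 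Finally, note that both you and the paper silently assume $\|v\|=1$ (the statement omits it, and the paper's normalization identity $\sum_i\alpha_i^2+\beta^2+2\sum_i\alpha_i\beta\langle u_i,v\rangle=1$ presupposes it), and both read $\lambda_{\min}$ as the smallest \emph{nontrivial} eigenvalue, since the literal minimum of this rank-$(k+1)$ matrix is $0$; your explicit flagging of that reading is the correct interpretation for the application in the dimension-reduction step.
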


\begin{proof}
\noindent Suppose $w$ is the maximum unit eigenvector of $\sum_{i=1}^k \lambda_i u_i u_i^t + \mu v v^t$. Then, 
\begin{eqnarray*}
&&w=\sum_{i=1}^k \alpha_i u_i  + \beta v,
\end{eqnarray*}
where
\begin{eqnarray*}
&&\sum_{i=1}^k \alpha_i^2+\beta^2+2 \sum_{i=1}^k \alpha_i \beta \left\langle u_i,v \right\rangle=1.
\end{eqnarray*}
If $\beta=O_p(1/\sqrt{m})$, then\\
\scalebox{0.9}{
\begin{minipage}{1\textwidth}
\begin{eqnarray*}
w^t \left(\sum_{i=1}^k \lambda_i u_i u_i^t + \mu v v^t \right) w &=& \sum_{i=1}^k \lambda_i \left(\alpha_i+\beta \left\langle v,u_i \right\rangle \right)^2+ \mu \left( \sum_{i=1}^k \alpha_i \left\langle u_i,v \right\rangle+\beta \right)^2\\
&=& \sum_{i=1}^k \lambda_i \alpha_i^2+ O_p\left(\frac{1}{m} \right)\\
&\leqslant & \lambda_1 + O_p\left(\frac{1}{m} \right).
\end{eqnarray*}
\end{minipage}}\vspace{0.3cm}\\
If $\beta$ is larger than $O_p(1/\sqrt{m})$,\\
\scalebox{0.7}{
\begin{minipage}{1\textwidth}
\begin{eqnarray*}
w^t \left(\sum_{i=1}^k \lambda_i u_i u_i^t + \mu v v^t \right) w &=& \sum_{i=1}^k \lambda_i \left(\alpha_i+\beta \left\langle v,u_i \right\rangle \right)^2+ \mu \left( \sum_{i=1}^k \alpha_i \left\langle u_i,v \right\rangle+\beta \right)^2\\
&\leqslant & \lambda_1 \left(\sum_{i=1}^k \left(\alpha_i+\beta \left\langle v,u_i \right\rangle \right)^2 \right) 
+ \mu \left( \sum_{i=1}^k \alpha_i \left\langle u_i,v \right\rangle+\beta \right)^2\\
&= & \lambda_1 \left(\sum_{i=1}^k \alpha_i^2+ 2 \sum_{i=1}^k \alpha_i \beta \left\langle v,u_i \right\rangle \right) 
+ \mu \left( 2 \beta \sum_{i=1}^k \alpha_i \left\langle u_i,v \right\rangle+ \beta^2 \right) + O_p\left( \frac{1}{m} \right)\\
&= & \lambda_1 + (\mu-\lambda_1) \beta^2 
+ 2 \mu  \beta \sum_{i=1}^k \alpha_i \left\langle u_i,v \right\rangle + O_p\left( \frac{1}{m} \right)\\
&\leqslant & \lambda_1 + O_p\left( \frac{1}{m} \right),
\end{eqnarray*} 
\end{minipage}}\vspace{0.3cm}\\
where the last two lines are obtained using $\sum_{i=1}^k \alpha_i^2 + 2 \sum_{i=1}^k \alpha_i \beta \left\langle u_i,v \right\rangle=1-\beta^2$ and
because 
\begin{eqnarray*}
P\left\lbrace (\mu-\lambda_1) \beta^2+2 \mu  \beta \sum_{i=1}^k \alpha_i \left\langle u_i,v \right\rangle <0 \right\rbrace \underset{m\rightarrow \infty}{\rightarrow} 1.
\end{eqnarray*} 
\noindent On the other hand, 
\begin{eqnarray*}
\lambda_{\max}\Bigg( \sum_{i=1}^k \lambda_i u_i u_i^t + \mu v v^t \Bigg) \geqslant u_1^t \left( \sum_{i=1}^k \lambda_i u_i u_i^t + \mu v v^t \right) u_1 = \lambda_1 +O_p\left(\frac{1}{m} \right).
\end{eqnarray*}
This concludes the proof.

\end{proof}

\subsubsection{Residual spike for perturbations of order $1$} \label{Asec:order1}
The proof of the first part of the Main Theorem \ref{ATH=Main} is in \cite{mainarticle}. 

\subsubsection{Decomposition of the difference matrix}\label{Asec:decomp}

As proposed in the Section \ref{Asec=sketch}, we can decompose the matrix $\hat{\hat{\Sigma}}_{P_k,X}^{-1/2} \hat{\hat{\Sigma}}_{P_k,Y} \hat{\hat{\Sigma}}_{P_k,X}^{-1/2}$ into a sum\\
\scalebox{0.7}{
\begin{minipage}{1\textwidth}
\begin{eqnarray*}
\hat{\hat{\Sigma}}_{P_k,X}^{-1/2} \hat{\hat{\Sigma}}_{P_k,Y} \hat{\hat{\Sigma}}_{P_k,X}^{-1/2}
&=&\I_m + \sum_{i=1}^k \left[ \hat{\hat{\Sigma}}_{P_k,X}^{-1/2}  \left(\hat{\hat{\theta}}_{P_k,Y,i}-1 \right) \hat{u}_{P_k,Y,i} \hat{u}_{P_k,Y,i}^t \hat{\hat{\Sigma}}_{P_k,X}^{-1/2} +\left(\frac{1}{\hat{\hat{\theta}}_{P_k,X,i}}-1\right) \hat{u}_{P_k,X,i} \hat{u}_{P_k,X,i}^t \right].
\end{eqnarray*}
\end{minipage}}\vspace{0.3cm}\\
Next, we define a rotation matrix $\hat{U}_{P_k,X}$ such that $\hat{U}_{P_k,X}^t \hat{u}_{P_k,X,i}=e_i$ and $\hat{U}_{P_k,X}^t \hat{u}_{P_k,Y,i}=\tilde{u}_{P_k}$ as in \cite[Theorem 4.2]{mainarticle2} of \cite{mainarticle2}. Because this rotation does not affect the eigenvalues, \\
\scalebox{0.7}{
\begin{minipage}{1\textwidth}
\begin{eqnarray*}
\lambda \left( \hat{\hat{\Sigma}}_{P_k,X}^{-1/2} \hat{\hat{\Sigma}}_{P_k,Y} \hat{\hat{\Sigma}}_{P_k,X}^{-1/2} \right)&=& \lambda \left( \hat{U}_{P_k,X}\hat{\hat{\Sigma}}_{P_k,X}^{-1/2}\hat{U}_{P_k,X}^t \hat{U}_{P_k,X} \hat{\hat{\Sigma}}_{P_k,Y}\hat{U}_{P_k,X}^t \hat{U}_{P_k,X} \hat{\hat{\Sigma}}_{P_k,X}^{-1/2}\hat{U}_{P_k,X}^t \right)\\
 &=& \lambda\left( \I_m + \sum_{i=1}^k \left[ \Sigma_{P_k,X}^{-1/2}  \left(\hat{\hat{\theta}}_{P_k,Y,i}-1 \right) \tilde{u}_{P_k,i} \tilde{u}_{P_k,i}^t \Sigma_{P_k,X}^{-1/2} +\left(\frac{1}{\hat{\hat{\theta}}_{P_k,X,i}}-1\right)e_i e_i^t \right] \right),
\end{eqnarray*}
\end{minipage}}\vspace{0.3cm}\\
where $\Sigma_{P_k,X}=\hat{U}_{P_k,X}^t \hat{\hat{\Sigma}}_{P_k,X} \hat{U}_{P_k,X}=\I_m + \sum_{i=1}^k \left(\hat{\hat{\theta}}_{P_k,X,i}-1 \right) e_i e_i^t$ and $\lambda()$ provides the eigenvalues of the matrices. 

\subsubsection{Pseudo invariant residual spike}\label{Asec:pseudoinvresidual}
With $\hat{\Sigma}_{P_k,X}= P_k^{1/2} W_X P_k^{1/2}$ and $\hat{\Sigma}_{P_k,Y}= P_k^{1/2} W_Y P_k^{1/2}$, we can define 
$\hat{\Sigma}_{\tilde{P}_i,X}= \tilde{P}_i^{1/2} W_X \tilde{P}_i^{1/2},$ where $\tilde{P}_i= \I_m + (\theta_i-1) e_i e_i^t$, and we can show that if $\theta_i$ is large or if we assume that the residual spikes of the perturbations of order $1$ are distinct, \\
\scalebox{0.7}{
\begin{minipage}{1\textwidth}
\begin{eqnarray*}
\lambda \left( \Sigma_{P_k,X}^{-1/2}  \left(\hat{\hat{\theta}}_{P_k,Y,i}-1 \right) \tilde{u}_{P_k,i} \tilde{u}_{P_k,i}^t \Sigma_{P_k,X}^{-1/2} +\left(\frac{1}{\hat{\hat{\theta}}_{P_k,X,i}}-1\right)e_i e_i^t \right)
= \lambda\left( \hat{\hat{\Sigma}}_{\tilde{P}_i,X}^{-1/2} \hat{\hat{\Sigma}}_{\tilde{P}_i,Y} \hat{\hat{\Sigma}}_{\tilde{P}_i,X}^{-1/2} \right)-1 + O_p\left( \frac{1}{m} \right).
\end{eqnarray*} 
\end{minipage}}\vspace{0.3cm}\\
The proof of this equality is computed in two steps. 
\begin{enumerate}
\item First we compute the non trivial eigenvalues and eigenvectors of 
$$\Sigma_{P_k,X}^{-1/2}\left(\hat{\hat{\theta}}_{P_k,Y,i}-1 \right) \tilde{u}_{P_k,i} \tilde{u}_{P_k,i}^t \Sigma_{P_k,X}^{-1/2}.$$
\item Then, using the Lemma \ref{ALemmalinearalg}, we establish the equality. 
\end{enumerate}
\begin{enumerate}
\item We define 
\begin{eqnarray*}
 \tilde{\Sigma}_{X} &=& \text{I}+ \sum_{i=1}^k(\hat{\theta}_{X,i}-1) \tilde{\tilde{u}}_i \tilde{\tilde{u}}_i^t,\\
 \tilde{\tilde{u}}_{i,1}&=& \tilde{u}_{1,i}.
\end{eqnarray*}
The vector $\tilde{\tilde{u}}_{i}$ is just $\hat{U}_Y^t \hat{u}_{X,i}$ or $\tilde{U} e_i$. Using the fact that for a matrix $M$, the non trivial eigenvector of $e_1 e_1^t M$ is $e_1$,  \\
\scalebox{0.78}{
\begin{minipage}{1\textwidth}
\begin{eqnarray*}
\lambda\Bigg(\Sigma_{X}^{-1/2} \left( (\hat{\theta}_{Y,1}-1) \tilde{u}_{1}\tilde{u}_{1}^t \right)\Sigma_{X}^{-1/2}\Bigg) &=& \lambda\Bigg(\tilde{\Sigma}_{X}^{-1/2} \left( (\hat{\theta}_{Y,1}-1) e_1 e_1^t \right) \tilde{\Sigma}_{X}^{-1/2}\Bigg)\\
&=& \lambda\Bigg( \left( (\hat{\theta}_{Y,1}-1) e_1 e_1^t \right) \tilde{\Sigma}_{X}^{-1}\Bigg)\\
&=&
e_1^t \left( (\hat{\theta}_{Y,1}-1) e_1 e_1^t \right) \tilde{\Sigma}_{X}^{-1} e_1 \\ 
&=&e_1^t  (\hat{\theta}_{Y,1}-1) \left(  \left( e_1 e_1^t \right)
+ \sum_{i=1}^k(\frac{1}{\hat{\theta}_{X,i}}-1) \tilde{\tilde{u}}_{i,1} e_1\tilde{\tilde{u}}_i^t \right) e_1\\
&=&(\hat{\theta}_{Y,1}-1) \left(  1
+ \sum_{i=1}^k(\frac{1}{\hat{\theta}_{X,i}}-1) \tilde{\tilde{u}}_{i,1}^2 \right)\\
&=&(\hat{\theta}_{Y,1}-1) \left(  1
+ \sum_{i=1}^k(\frac{1}{\hat{\theta}_{X,i}}-1) \tilde{u}_{1,i}^2 \right).
\end{eqnarray*}
\end{minipage}}\vspace{0.3cm}\\
The computation of the eigenvector leads to \\
\scalebox{0.8}{
\begin{minipage}{1\textwidth}
\begin{eqnarray*}
u\Bigg(\Sigma_{X}^{-1/2} \left( (\hat{\theta}_{Y,1}-1) \tilde{u}_{1}\tilde{u}_{1}^t \right)\Sigma_{X}^{-1/2}\Bigg) &\propto& \Sigma_X^{-1/2} u\Bigg( \left( (\hat{\theta}_{Y,1}-1) \tilde{u}_{1}\tilde{u}_{1}^t \right)\Sigma_{X}^{-1}\Bigg) \\
&\propto &  \Sigma_X^{-1/2} \tilde{u}_1\\
&\propto & \left( \frac{\tilde{u}_{1,1}}{\sqrt{\hat{\theta}_{X,1}}},\frac{\tilde{u}_{1,2}}{\sqrt{\hat{\theta}_{X,2}}},...,\frac{\tilde{u}_{1,k}}{\sqrt{\hat{\theta}_{X,k}}}, \tilde{u}_{1,k+1},...,\tilde{u}_{1,m} \right).
\end{eqnarray*} 
\end{minipage}}\vspace{0.3cm}\\
Because the previous eigenvector is not standardised, we compute its norm,\\
\scalebox{0.68}{
\begin{minipage}{1\textwidth}
\begin{eqnarray*}
\left|\left| \left( \frac{\tilde{u}_{1,1}}{\sqrt{\hat{\theta}_{X,1}}},\frac{\tilde{u}_{1,2}}{\sqrt{\hat{\theta}_{X,2}}},...,\frac{\tilde{u}_{1,k}}{\sqrt{\hat{\theta}_{X,k}}}, \tilde{u}_{1,k+1},...,\tilde{u}_{1,m} \right) \right|\right|^2 &=& 
\frac{\tilde{u}_{1,1}^2}{\hat{\theta}_{X,1}}+\frac{\tilde{u}_{1,2}^2}{\hat{\theta}_{X,2}}+...+\frac{\tilde{u}_{1,k}^2}{\hat{\theta}_{X,k}}+ \tilde{u}_{1,k+1}^2+...+\tilde{u}_{1,m}^2\\
&=&  1
+ \sum_{i=1}^k(\frac{1}{\hat{\theta}_{X,i}}-1) \tilde{u}_{1,i}^2 \\
&=& \frac{ \lambda\Bigg(\hat{\hat{\Sigma}}_{X}^{-1/2} \left( (\hat{\theta}_{Y,1}-1) \hat{u}_{Y,1}\hat{u}_{Y,1}^t \right) \hat{\hat{\Sigma}}_{X}^{-1/2}\Bigg)}{\hat{\theta}_{Y,1}-1}. 
\end{eqnarray*}
\end{minipage}}\vspace{0.3cm}\\
We conclude the first part with the two formulas:\\
\scalebox{0.85}{
\begin{minipage}{1\textwidth}
\begin{eqnarray*}
 \lambda\Bigg(\Sigma_{X}^{-1/2} \left( (\hat{\theta}_{Y,1}-1) \tilde{u}_{1}\tilde{u}_{1}^t \right)\Sigma_{X}^{-1/2}\Bigg)
 &=&\left(\hat{\theta}_{Y,1}-1 \right) \left( \sum_{i=1}^k \left(\frac{1}{\hat{\theta}_{X,i}} -1 \right) \tilde{u}_{1,i}^2+1 \right)
\end{eqnarray*} 
\end{minipage}}\vspace{0.3cm}\\
and \\
\scalebox{0.68}{
\begin{minipage}{1\textwidth}
\begin{eqnarray*}
&&u\Bigg(\Sigma_{X}^{-1/2} \left( (\hat{\theta}_{Y,1}-1) \tilde{u}_{1}\tilde{u}_{1}^t \right) \Sigma_{X}^{-1/2}\Bigg) \\
&&\hspace{1cm}=\frac{\sqrt{\hat{\theta}_{Y,1}-1}}{\sqrt{ \lambda\Bigg(\hat{\hat{\Sigma}}_{X}^{-1/2} \left( (\hat{\theta}_{Y,1}-1) \hat{u}_{Y,1}\hat{u}_{Y,1}^t \right) \hat{\hat{\Sigma}}_{X}^{-1/2}\Bigg)}}\left( \frac{\tilde{u}_{1,1}}{\sqrt{\hat{\theta}_{X,1}}},\frac{\tilde{u}_{1,2}}{\sqrt{\hat{\theta}_{X,2}}},...,\frac{\tilde{u}_{1,k}}{\sqrt{\hat{\theta}_{X,k}}},\tilde{u}_{1,k+1},...,\tilde{u}_{1,m} \right).
\end{eqnarray*}
\end{minipage}}\vspace{0.3cm}\\

\item The second part uses the Lemma \ref{ALemmalinearalg} to establish the following relation 
\begin{eqnarray*}
&&\Sigma_{P_k,X}^{-1/2}\left(\hat{\hat{\theta}}_{P_k,Y,1}-1 \right) \tilde{u}_{P_k,1} \tilde{u}_{P_k,1}^t \Sigma_{P_k,X}^{-1/2} +\left(\frac{1}{\hat{\hat{\theta}}_{P_k,X,1}}-1\right)e_1 e_1^t \\
&&\hspace{2cm}= \lambda \left( \hat{\hat{\Sigma}}_{P_1,X}^{-1/2} \hat{\hat{\Sigma}}_{P_1,X} \hat{\hat{\Sigma}}_{P_1,X}^{-1/2} \right)-1+O_p\left( \frac{1}{m}\right).
\end{eqnarray*}

\begin{enumerate}
\item We start with the order $1$, \\
\scalebox{0.68}{
\begin{minipage}{1\textwidth}
\begin{eqnarray*}
\lambda \left( \hat{\hat{\Sigma}}_{P_1,X}^{-1/2} \hat{\hat{\Sigma}}_{P_1,X} \hat{\hat{\Sigma}}_{P_1,X}^{-1/2} \right)
&=&\lambda\Bigg(\Sigma_{P_1,X}^{-1/2} \left( (\hat{\theta}_{P_1,Y,1}-1) \tilde{u}_{P_1,1}\tilde{u}_{P_1,1}^t \right) \Sigma_{P_1,X}^{-1/2}+ \left(\frac{1}{\hat{\theta}_{P_1,X,1}}-1\right) e_1 e_1^t + {\rm I} \Bigg)  \\
&=&\lambda\Bigg( \hat{\eta}_{P_1} u_{\hat{\eta}_{P_1}} u_{\hat{\eta}_{P_1}}^t+ \left(\frac{1}{\hat{\theta}_{P_1,X,1}}-1\right) e_1 e_1^t \Bigg)+1,
\end{eqnarray*}
\end{minipage}}\vspace{0.3cm}\\

where,\\
\scalebox{0.68}{
\begin{minipage}{1\textwidth}
\begin{eqnarray*}
\hat{\eta}_{P_1}&=& \lambda\Bigg(\Sigma_{X}^{-1/2} \left( (\hat{\theta}_{Y,1}-1) \tilde{u}_{1}\tilde{u}_{1}^t \right)\Sigma_{X}^{-1/2}\Bigg)
= \frac{\hat{\theta}_{P_1,Y,1}-1}{\hat{\theta}_{P_1,X,1}} \tilde{u}_{P_1,1,1}^2 +\left(\hat{\theta}_{P_1,Y,1}-1 \right) \left(1-\tilde{u}_{P_1,1,1}^2 \right),\\
u_{\hat{\eta}_{P_1}}&=&u\Bigg(\Sigma_{X}^{-1/2} \left( (\hat{\theta}_{Y,1}-1) \tilde{u}_{1}\tilde{u}_{1}^t \right)\Sigma_{X}^{-1/2}\Bigg)
= \frac{\sqrt{\hat{\theta}_{P_1,Y,1}-1} \left( \frac{\tilde{u}_{P_1,1,1}}{\sqrt{\hat{\theta}_{P_1,X,1}}},\tilde{u}_{P_1,1,2},...,\tilde{u}_{P_1,1,m} \right)}{\sqrt{\lambda\Bigg(\Sigma_{X}^{-1/2} \left( (\hat{\theta}_{Y,1}-1) \tilde{u}_{1}\tilde{u}_{1}^t \right)\Sigma_{X}^{-1/2}\Bigg)}}.
\end{eqnarray*}
\end{minipage}}\vspace{0.3cm}\\
By the Lemma \ref{ALemmalinearalg}, the non trivial eigenvalues are functions of different parameters. Using a similar notation to the Lemma we set
\begin{eqnarray*}
a_{P_1}&=&\frac{1}{\hat{\theta}_{P_1,X,1}}-1,\\
b_{P_1}&=&\frac{\hat{\theta}_{P_1,Y,1}-1}{\hat{\theta}_{P_1,X,1}} \tilde{u}_{P_1,1,1}^2 +\left(\hat{\theta}_{P_1,Y,1}-1 \right) \left(1-\tilde{u}_{P_1,1,1}^2 \right),\\
b_{P_1} w_{P_1}^2&=&\left(\hat{\theta}_{P_1,Y,1}-1\right)  \frac{\tilde{u}_{P_1,1,1}^2}{\hat{\theta}_{P_1,X,1}}.
\end{eqnarray*}
The Lemma \ref{ALemmalinearalg} provides a function $g$ such that
\begin{eqnarray*}
\lambda\Bigg( \hat{\eta}_{P_1} u_{\hat{\eta}_{P_1}} u_{\hat{\eta}_{P_1}}^t+ \left(\frac{1}{\hat{\theta}_{P_1,X,1}}-1\right) e_1 e_1^t \Bigg)=g^{\pm}(a_{P_1},b_{P_1},b_{P_1} w_{P_1}^2).
\end{eqnarray*}
Therefore
\begin{eqnarray*}
\lambda \left( \hat{\hat{\Sigma}}_{P_1,X}^{-1/2} \hat{\hat{\Sigma}}_{P_1,X} \hat{\hat{\Sigma}}_{P_1,X}^{-1/2} \right)
&=&g^{\pm}(a_{P_1},b_{P_1},b_{P_1} w_{P_1}^2).
\end{eqnarray*}

\item  For perturbations of order $k$, \\
\scalebox{0.78}{
\begin{minipage}{1\textwidth}
\begin{eqnarray*}
&&\lambda\Bigg(\hat{\hat{\Sigma}}_{P_k,X}^{-1/2} \left( (\hat{\theta}_{P_k,Y,1}-1) \hat{u}_{P_k,Y,1}\hat{u}_{P_k,Y,1}^t \right) \hat{\hat{\Sigma}}_{P_k,X}^{-1/2}+ \left(\frac{1}{\hat{\theta}_{P_k,X,1}}-1\right) \hat{u}_{P_k,X,1}\hat{u}_{P_k,X,1}^t \Bigg) \\
&&\hspace{1cm} = \lambda\Bigg(\Sigma_{P_k,X}^{-1/2} \left( (\hat{\theta}_{P_k,Y,1}-1) \tilde{u}_{P_k,1}\tilde{u}_{P_k,1}^t \right) \Sigma_{P_k,X}^{-1/2}+ \left(\frac{1}{\hat{\theta}_{P_k,X,1}}-1\right) e_1 e_1^t \Bigg) \\
&&\hspace{1cm} = \lambda\Bigg(\hat{\eta}_{P_k} u_{\hat{\eta}_{P_k}} u_{\hat{\eta}_{P_k}}^t + \left(\frac{1}{\hat{\theta}_{P_k,X,1}}-1\right) e_1 e_1^t \Bigg), 
\end{eqnarray*}
\end{minipage}}\vspace{0.3cm}\\
where\\
\scalebox{0.76}{
\begin{minipage}{1\textwidth}
\begin{eqnarray*}
\hat{\eta}_{P_k,1}&=& \left(\hat{\theta}_{P_k,Y,1}-1 \right) \left( \sum_{i=1}^k \left(\frac{1}{\hat{\theta}_{P_k,X,i}} -1 \right) \tilde{u}_{P_k,1,i}^2+1 \right)\\
&=& \frac{\hat{\theta}_{P_k,Y,1}-1}{\hat{\theta}_{P_k,X,1}} \left(\sum_{i=1}^k \tilde{u}_{P_k,1,i}^2 \right) +\left(\hat{\theta}_{P_k,Y,1}-1 \right) \left(1-\sum_{i=1}^k\tilde{u}_{P_k,1,i}^2 \right)+O_p\left( \frac{1}{m}\right),\\
u_{\hat{\eta}_{P_k,1}}&=&  \frac{\sqrt{\hat{\theta}_{P_k,Y,1}-1} \left( \frac{\tilde{u}_{P_k,1,1}}{\sqrt{\hat{\theta}_{P_k,X,1}}},\frac{\tilde{u}_{P_k,1,2}}{\sqrt{\hat{\theta}_{P_k,X,2}}},...,\frac{\tilde{u}_{P_k,1,k}}{\sqrt{\hat{\theta}_{P_k,X,k}}},\tilde{u}_{P_k,1,k+1},...,\tilde{u}_{P_k,1,m} \right) }{\sqrt{ \lambda\Bigg(\hat{\hat{\Sigma}}_{P_k,X}^{-1/2} \left( (\hat{\theta}_{P_k,Y,1}-1) \hat{u}_{P_k,Y,1}\hat{u}_{P_k,Y,1}^t \right) \hat{\hat{\Sigma}}_{P_k,X}^{-1/2}\Bigg)}}.
\end{eqnarray*}
\end{minipage}}\vspace{0.3cm}\\
As previously by the Lemma \ref{ALemmalinearalg}, the non-trivial eigenvalues are functions of different parameters. Using a similar notation to the Lemma we set\\
\scalebox{0.75}{
\begin{minipage}{1\textwidth}
\begin{eqnarray*}
a_{P_k}&=&\frac{1}{\hat{\theta}_{P_k,X,1}}-1,\\
b_{P_k}&=&\frac{\hat{\theta}_{P_k,Y,1}-1}{\hat{\theta}_{P_k,X,1}} \left(\sum_{i=1}^k \tilde{u}_{P_k,1,i}^2 \right) +\left(\hat{\theta}_{P_k,Y,1}-1 \right) \left(1-\sum_{i=1}^k\tilde{u}_{P_k,1,i}^2 \right)+O_p\left( \frac{1}{m}\right),\\
b_{P_k} w_{P_k,1}^2&=&\left(\hat{\theta}_{P_k,Y,1}-1\right)  \frac{\tilde{u}_{P_k,1,1}^2}{\hat{\theta}_{P_k,X,1}}=\left(\hat{\theta}_{P_k,Y,1}-1\right)  \frac{\sum_{i=1}^k \tilde{u}_{P_k,1,i}^2}{\hat{\theta}_{P_k,X,1}}+O_p\left( \frac{1}{m} \right).
\end{eqnarray*}
\end{minipage}}\vspace{0.3cm}\\
The Lemma \ref{ALemmalinearalg} provides the function $g$ such that \\
\scalebox{0.9}{
\begin{minipage}{1\textwidth}
\begin{eqnarray*}
\lambda\Bigg( \hat{\eta}_{P_k} u_{\hat{\eta}_{P_k}} u_{\hat{\eta}_{P_k}}^t+ \left(\frac{1}{\hat{\theta}_{P_k,X,1}}-1\right) e_1 e_1^t  \Bigg)
&=& g^{\pm}(a_{P_k},b_{P_k},b_{P_k} w_{P_k,1}^2).
\end{eqnarray*}
\end{minipage}}\vspace{0.3cm}\\
\item Finally we show that 
$$g^{\pm}(a_{P_k},b_{P_k},b_{P_k} w_{P_k,1}^2)=g^{\pm}(a_{P_1},b_{P_1},b_{P_1} w_{P_1,1}^2)+O_p\left( \frac{1}{m}\right).$$
By the Invariance Theorems,
\begin{eqnarray*}
&&a_{P_k}=a_{P_1}+O_p\left( \frac{1}{\theta m}\right),\\
&&b_{P_k}=b_{P_1}+O_p\left(\frac{1}{m} \right),\\
&&b_{P_k} w_{P_k,1}^2 = b_{P_1} w_{P_1}^2 +O_p\left( \frac{1}{m}\right).
\end{eqnarray*}
Moreover, the three values do not converge to $0$.\\
Because we know from Lemma \ref{ALemmalinearalg} that $g$ is continuous,
$$g^{\pm}(x,y,z)=\frac{1}{2} \left(x+y \pm \sqrt{4 x z +(x-y)^2}\right).$$
This function is Lipschitz if $4 x z +(x-y)^2$ is not closed to $0$. The reader can show that the perturbation creates two residual spikes different from $1$ when $\theta_1$ is detectable. (In other cases the covariance matrices are the same (CANNOT BE DISTINGUISHED?).) In particular we can show that when $\theta_1$ is large, the pseudo residual spike is distinct from $1$.\\
Therefore, using this property we conclude,\\
\scalebox{0.77}{
\begin{minipage}{1\textwidth}
\begin{eqnarray*}
&&\hspace{0cm} \left| \left(\lambda \left( \hat{\hat{\Sigma}}_{P_1,X}^{-1/2} \hat{\hat{\Sigma}}_{P_1,X} \hat{\hat{\Sigma}}_{P_1,X}^{-1/2} \right)-1 \right)- \left( \lambda\Bigg( \hat{\eta}_{P_k,1} u_{\hat{\eta}_{P_k,1}} u_{\hat{\eta}_{P_k,1}}^t+ \left(\frac{1}{\hat{\theta}_{P_k,X,1}}-1\right) e_1 e_1^t  \Bigg) \right) \right|\\
&&\hspace{2cm}= \left|g^{\pm}(a_{P_1},b_{P_1},b_{P_1}w_{P_1,1}^2) - g^{\pm}(a_{P_k},b_{P_k},b_{P_k} w_{P_k,1}^2)\right|\\
&&\hspace{2cm}=O_p\left( \frac{1}{m} \right)
\end{eqnarray*}
\end{minipage}}\vspace{0.3cm}\\
\begin{Rem}
The hypothesis assuming that $\lambda^+-\lambda^- \not \rightarrow 0$ is evident, except when $n_X,n_Y>>m$. Nevertheless, the Main Theorem \ref{ATH=Main} assumes proportional values and so avoids this critical case. 
\end{Rem}

\end{enumerate}
\end{enumerate}

\subsubsection{Pseudo residual eigenvectors} \label{Asec:pseudoinvresidualvector} 
Knowing the pseudo residual spike, it is not difficult to find the corresponding pseudo residual eigenvector. For $s=1,2,...,k$, suppose 
\begin{eqnarray*}
w_s^{\pm}&=&u\Bigg(\Sigma_{P_k,X}^{-1/2} \left( (\hat{\theta}_{P_k,Y,s}-1) \tilde{u}_{P_k,s}\tilde{u}_{P_k,Y,s}^t \right) \Sigma_{P_k,X}^{-1/2}+ \left(\frac{1}{\hat{\theta}_{P_k,X,s}}-1\right) e_s e_s^t \Bigg)
\end{eqnarray*}
are the pseudo residual eigenvectors corresponding to the eigenvalues 
\begin{eqnarray*}
\hat{\zeta}_s^{\pm}&=& \lambda \Bigg(\Sigma_{P_k,X}^{-1/2} \left( (\hat{\theta}_{P_k,Y,s}-1) \tilde{u}_{P_k,s}\tilde{u}_{P_k,Y,s}^t \right) \Sigma_{P_k,X}^{-1/2}+ \left(\frac{1}{\hat{\theta}_{P_k,X,s}}-1\right) e_s e_s^t \Bigg).
\end{eqnarray*}
We define
\begin{eqnarray*}
\zeta_{\infty}^{\pm}(\theta_s)&=& \underset{m \rightarrow \infty} {\lim}\hat{\zeta}_s^{\pm},\\
 \zeta_{\infty}^{\pm} &=& \underset{\theta_s,m \rightarrow \infty} {\lim} \hat{\zeta}_s^{\pm}.
\end{eqnarray*}
Then, \\
\scalebox{0.9}{
\begin{minipage}{1\textwidth}
\begin{eqnarray*}
w_{s,s}^{\pm}&=&\frac{\sqrt{\hat{\theta}_{P_k,X,s}}}{{\rm Norm}_s^{\pm} \sqrt{\hat{\theta}_{P_k,Y,s}-1} \tilde{u}_{P_k,s,s}   }\left(\hat{\zeta}_s^{\pm}- \left(\hat{\theta}_{P_k,Y,s}-1\right)\left(1-\hat{\alpha}^2_{P_k,s}  \right)  \right) + O_p\left( \frac{1}{m}\right)\\
&=&  \frac{\sqrt{\theta_s}}{{\rm Norm}_s^{\pm} \sqrt{\theta_s-1} \alpha_s } \left( \zeta_{\infty}^{\pm}(\theta_s) - \left( \theta_s-1 \right) \left(1-\alpha_s^2 \right) \right)+ O_p\left( \frac{1}{\sqrt{m}} \right)\\
&=&\frac{\left(\zeta_{\infty}^{\pm}-2 \left(M_2 -1\right)\right)}{\sqrt{\left(\zeta_{\infty}^{\pm}-2 \left(M_2 -1\right)\right)^2+2 \left(M_2 -1\right) }}+ O_p\left( \frac{1}{\sqrt{m}} \right) + o_{p;\theta_s}\left( 1 \right),
\end{eqnarray*}\end{minipage}}\vspace{0.3cm}\\
\scalebox{0.76}{
\begin{minipage}{1\textwidth}
\begin{eqnarray*}
w_{s,2:m \setminus s}^{\pm}&=&\frac{\sqrt{\hat{\theta}_{P_k,Y,s}-1} \left( \frac{ \tilde{u}_{P_k,s,1}}{\sqrt{\hat{\theta}_{P_k,X,1}}},...,\frac{ \tilde{u}_{P_k,s,s-1}}{\sqrt{\hat{\theta}_{P_k,X,s-1}}}
\frac{ \tilde{u}_{P_k,s,s+1}}{\sqrt{\hat{\theta}_{P_k,X,s+1}}},...,\frac{\tilde{u}_{P_k,s,k}}{\sqrt{\hat{\theta}_{P_k,X,k}}},\tilde{u}_{P_k,s,k+1},...,\tilde{u}_{P_k,s,m} \right)}{{\rm Norm}_s^{\pm}},\\
&=&\frac{\sqrt{\hat{\theta}_{P_k,Y,1}-1} \left(\frac{\tilde{u}_{P_k,s,1}}{\sqrt{\hat{\theta}_{P_k,X,1}}},...,\frac{\tilde{u}_{P_k,s,s-1}}{\sqrt{\hat{\theta}_{P_k,X,s-1}}},\frac{\tilde{u}_{P_k,s,s+1}}{\sqrt{\hat{\theta}_{P_k,X,s+1}}},...,\frac{\tilde{u}_{P_k,s,k}}{\sqrt{\hat{\theta}_{P_k,X,k}}},\tilde{u}_{P_k,s,k+1},...,\tilde{u}_{P_k,s,m}\right) }{\sqrt{\left(\zeta_{\infty}^{\pm}-2 \left(M_2 -1\right)\right)^2+2 \left(M_2 -1\right)}+O_p\left(\frac{1}{\sqrt{m}} \right)+o_{p;\theta_s} (1)},
\end{eqnarray*}
\end{minipage}}\vspace{0.3cm}\\
\scalebox{0.76}{
\begin{minipage}{1\textwidth}
\begin{eqnarray*}
\left({\rm Norm}_s^{\pm}\right)^2&=& \frac{\hat{\theta}_{P_k,X,s}\left(\hat{\zeta}_s^{\pm}- \left(\hat{\theta}_{P_k,Y,s}-1\right)\left(1-\hat{\alpha}^2_{P_k,s}  \right)  \right)^2 }{\left(\hat{\theta}_{P_k,Y,s}-1\right) \tilde{u}_{P_k,s,s}^2   }
+ \left(\hat{\theta}_{P_k,Y,s}-1\right) \left(1-\hat{\alpha}^2_{P_k,s}  \right)+ O_p\left( \frac{1}{m}\right)\\
&=&\frac{\theta_s}{\left(\theta_s-1\right) \alpha_s^2 } \left( \zeta_{\infty}^{\pm}\left( \theta_s\right) - \left( \theta_s-1 \right) \left(1-\alpha_s^2 \right) \right)^2+ \left( \theta_s-1 \right) \left(1-\alpha_s^2 \right)+ O_p\left( \frac{1}{\sqrt{m}} \right).
\end{eqnarray*}
\end{minipage}}\vspace{0.3cm}\\
We used the fact that the rate convergence of $\hat{\theta}_{P_k,X,s}$, $\hat{\theta}_{P_k,Y,s}$ and $\hat{\alpha}_{P_k,s}^2$  is in $1/\sqrt{m}$.\\
Moreover, when $\theta$ is large, $\left(1-\alpha_s^2 \right)=\frac{2\left(M_2-1\right)}{\theta_s}+O_p\left(1/\theta_s^2\right)$.

\subsubsection{Dimension reduction}\label{Asec:dimreduce}
The previous parts showed that the non-trivial eigenvalues of
\begin{eqnarray*}
\hat{\hat{\Sigma}}_{P_k,X}^{-1/2} \hat{\hat{\Sigma}}_{P_k,Y} \hat{\hat{\Sigma}}_{P_k,X}^{-1/2}-\I_m
\end{eqnarray*}
 are the same as the eigenvalues of
\begin{eqnarray*}
\sum_{i=1}^k \hat{\zeta}_i^+ w_i^+ {w_i^+}^t +\sum_{i=1}^k \hat{\zeta}_i^- w_i^- {w_i^-}^t \in \mathbf{R}^m \times  \mathbf{R}^m.
\end{eqnarray*}
We can use \ref{ALemreducedim} to show that for all non null eigenvalues,

\begin{eqnarray*}
\lambda_i\left(\sum_{i=1}^k \hat{\zeta}_i^+ w_i^+ {w_i^+}^t +\sum_{i=1}^k \hat{\zeta}_i^- w_i^- {w_i^-}^t \right) = \lambda_i\left(H \right),
\end{eqnarray*}
where \\
\scalebox{0.75}{
\begin{minipage}{1\textwidth}
\begin{eqnarray*}
H&=& \begin{pmatrix} H^+  & {H^b} \\ {H^b}^t & H^-
 \end{pmatrix}, \\
H^{\pm} &=& \begin{pmatrix}
\hat{\zeta}_1^{\pm} & \sqrt{\hat{\zeta}_1^{\pm} \hat{\zeta}_2^{\pm}} \left\langle w_1^{\pm},w_2^{\pm} \right\rangle & \sqrt{\hat{\zeta}_1^{\pm} \hat{\zeta}_3^{\pm}} \left\langle w_1^{\pm},w_3^{\pm} \right\rangle & \cdots & \sqrt{\hat{\zeta}_k^{\pm} \hat{\zeta}_2^{\pm}} \left\langle w_1^{\pm},w_k^{\pm} \right\rangle \\ 
\sqrt{\hat{\zeta}_2^{\pm} \hat{\zeta}_1^{\pm}} \left\langle w_2^{\pm},w_1^{\pm} \right\rangle  & \hat{\zeta}_2^{\pm} & \sqrt{\hat{\zeta}_2^{\pm} \hat{\zeta}_3^{\pm}} \left\langle w_2^{\pm},w_3^{\pm} \right\rangle & \cdots  & \sqrt{\hat{\zeta}_2^{\pm} \hat{\zeta}_k^{\pm}} \left\langle w_2^{\pm},w_k^{\pm} \right\rangle \\ 
\sqrt{\hat{\zeta}_3^{\pm} \hat{\zeta}_1^{\pm}} \left\langle w_3^{\pm},w_1^{\pm} \right\rangle &\sqrt{\hat{\zeta}_3^{\pm} \hat{\zeta}_2^{\pm}} \left\langle w_3^{\pm},w_2^{\pm} \right\rangle  & \hat{\zeta}_3^{\pm} & \cdots  & \sqrt{\hat{\zeta}_3^{\pm} \hat{\zeta}_k^{\pm}} \left\langle w_3^{\pm},w_k^{\pm} \right\rangle \\ 
\vdots & \vdots & \ddots & \ddots &  \vdots \\ 
 \sqrt{\hat{\zeta}_k^{\pm} \hat{\zeta}_1^{\pm}} \left\langle w_k^{\pm},w_1^{\pm} \right\rangle & \sqrt{\hat{\zeta}_k^{\pm} \hat{\zeta}_2^{\pm}} \left\langle w_k^{\pm},w_2^{\pm} \right\rangle & \sqrt{\hat{\zeta}_k^{\pm} \hat{\zeta}_3^{\pm}} \left\langle w_k^{\pm},w_3^{\pm} \right\rangle & \cdots  & \hat{\zeta}_k^{\pm}  \\ 
\end{pmatrix} ,\\
H^b &=& \begin{pmatrix}
0 & \sqrt{\hat{\zeta}_1^+ \hat{\zeta}_2^-} \left\langle w_1^+,w_2^- \right\rangle & \sqrt{\hat{\zeta}_1^+ \hat{\zeta}_3^-} \left\langle w_1^+,w_3^- \right\rangle & \cdots & \sqrt{\hat{\zeta}_k^+ \hat{\zeta}_2^-} \left\langle w_1^+,w_k^- \right\rangle \\ 
\sqrt{\hat{\zeta}_2^+ \hat{\zeta}_1^-} \left\langle w_2^+,w_1^- \right\rangle  & 0 & \sqrt{\hat{\zeta}_2^+ \hat{\zeta}_3^-} \left\langle w_2^+,w_3^- \right\rangle & \cdots  & \sqrt{\hat{\zeta}_2^+ \hat{\zeta}_k^-} \left\langle w_2^+,w_k^- \right\rangle \\ 
\sqrt{\hat{\zeta}_3^+ \hat{\zeta}_1^-} \left\langle w_3^+,w_1^- \right\rangle &\sqrt{\hat{\zeta}_3^+ \hat{\zeta}_2^-} \left\langle w_3^+,w_2^- \right\rangle  & 0 & \cdots  & \sqrt{\hat{\zeta}_3^+ \hat{\zeta}_k^-} \left\langle w_3^+,w_k^- \right\rangle \\ 
\vdots & \vdots & \ddots & \ddots &  \vdots \\ 
 \sqrt{\hat{\zeta}_k^+ \hat{\zeta}_1^-} \left\langle w_k^+,w_1^- \right\rangle & \sqrt{\hat{\zeta}_k^+ \hat{\zeta}_2^-} \left\langle w_k^+,w_2^- \right\rangle & \sqrt{\hat{\zeta}_k^+ \hat{\zeta}_3^-} \left\langle w_k^+,w_3^- \right\rangle & \cdots  & 0 \\ 
\end{pmatrix}.
\end{eqnarray*}
\end{minipage}}\vspace{0.3cm}\\
Then, we can use Lemma \ref{ALemneglectvalue} to argue that 
\begin{eqnarray*}
\lambda_{\max} \left( H \right) &=&  \lambda_{\max} \left( H^+ \right) + O_p\left( \frac{1}{m}\right),\\
\lambda_{\min} \left( H \right) &=&  \lambda_{\max} \left( H^- \right) + O_p\left( \frac{1}{m}\right).
\end{eqnarray*}
We will see that the covariance between all the entries of $H^+$ is null. However, the entries of $H^+$ and $H^-$ are correlated. This step is very useful to avoid the need to study this correlation.

\subsubsection{Elements of H} \label{Asec:elementH}
Computation of the distribution of the entries of $H$ requires the distributions of $\hat{\zeta}_i^{\pm}$ and $\left\langle w_i^{\pm},w_j^{\pm} \right\rangle$, for $i,j=1,2,...,k$ with $j \neq i$. By the pseudo Invariance (Section \ref{Asec:pseudoinvresidual}), 
$$\hat{\zeta}_i^{\pm} = \lambda^{\pm} \left( \hat{\hat{\Sigma}}_{\tilde{P}_i,X}^{-1/2} \hat{\hat{\Sigma}}_{\tilde{P}_i,Y} \hat{\hat{\Sigma}}_{\tilde{P}_i,X}^{-1/2} \right)-1 + O_p\left( \frac{1}{m} \right). $$
Therefore,  using the Section \ref{Asec:order1}, we obtain the two first moments of the diagonal elements.\\
The off-diagonal terms are more difficult to estimate and we assume that Assumptions \ref{AAss=theta}(A2) and (A3) hold.\\
First, we will express 
$\left\langle w_i^{\pm},w_j^{\pm} \right\rangle$ as a function of the usual statistics when all the eigenvalues are of order $\theta$. Then, we will compute its two first moments. Finally, an argument similar to Lemma \ref{ALemneglectvalue} leads to a result for all perturbations.

\begin{Rem}\ \\
Using \cite[Theorem 4.2]{mainarticle2}, we invite the reader to show that the off diagonal terms are of order $O_p(1/\sqrt{m})$ when at least one eigenvalue $\theta_i$ or $\theta_j$ is finite (Assumption \ref{AAss=theta}(A4) ).
\end{Rem}

\paragraph*{Formula}

Suppose $k>1$ and \\
\scalebox{0.8}{
\begin{minipage}{1\textwidth}
\begin{eqnarray*}
w_i^\pm &=&u\Bigg(\Sigma_{P_k,X}^{-1/2} \left( (\hat{\hat{\theta}}_{P_k,Y,i}-1) \tilde{u}_{P_k,i}\tilde{u}_{P_k,Y,i}^t \right) \Sigma_{P_k,X}^{-1/2}+ \left(\frac{1}{\hat{\hat{\theta}}_{P_k,X,i}}-1\right) e_i e_i^t \Bigg)\\
\end{eqnarray*}
We want to prove the following formula :
\begin{eqnarray*}
\left\langle w_s^{\pm},w_t^{\pm} \right\rangle  
 &=& \frac{\sqrt{\theta_s \theta_t} }{\left(\zeta_{\infty}^{\pm}-2 M_2+1 \right)^2+2 \left( M_2 -1\right)} \times \\
 &&\hspace{0.5cm}\Bigg( \sum_{p=k+1}^m \hat{u}_{P_k,Y,s,p}\hat{u}_{P_k,Y,t,p}+\sum_{p=k+1}^m\hat{u}_{P_k,X,s,p}\hat{u}_{P_k,X,t,p} \\
&&\hspace{1.5cm}- 
\sum_{p=k+1}^m \hat{u}_{P_k,Y,s,p}\hat{u}_{P_k,X,t,p}-
\sum_{p=k+1}^m  \hat{u}_{P_k,Y,t,p}\hat{u}_{P_k,X,s,p}  \\ 
&&\hspace{1.5cm}- \left( \hat{u}_{P_k,X,t,s}+\hat{u}_{P_k,Y,s,t}\right) \left(\tilde{\alpha}^2_{s}-\tilde{\alpha}^2_{t}-
\left(\zeta_{\infty}^{\pm}-2 \left(M_2 -1\right)\right)
 \left( \frac{1}{\theta_t}  -\frac{1}{\theta_s} \right)  \right) \Bigg)\\
 &&\hspace{0.5cm}+ o_{p;m,\theta}\left( \frac{1}{m^{1/2}} \right), 
\end{eqnarray*}\end{minipage}}\vspace{0.3cm}\\
where $\lim_{m,\theta \rightarrow \infty} \frac{o_{p;m,\theta}\left( \frac{1}{m^{1/2}} \right)}{\frac{1}{m^{1/2}}} =0$ with probability tending to $1$.

\begin{itemize}
\item When discussing pseudo residual eigenvectors (see Section \ref{Asec:pseudoinvresidualvector}) we proved that assuming $\epsilon = O_p\left( \frac{1}{m^{1/2}} \right) + o_{p;\theta}\left( 1 \right)$,\\
\scalebox{0.88}{
\begin{minipage}{1\textwidth}
\begin{eqnarray*}
 \left(w_{s,s}\right)^\pm
&=&
\frac{\left(\zeta_{\infty}^{\pm}-2 \left(M_2 -1\right)\right)}{\sqrt{\left(\zeta_{\infty}^{\pm}-2 \left(M_2 -1\right)\right)^2+2 \left(M_2 -1\right) }}+\epsilon,\\
 w_{s,k+1:m}^{\pm}
&=&
\frac{\sqrt{\hat{\theta}_{P_k,Y,s}-1}  \tilde{u}_{P_k,s,k+1:m}}{\sqrt{\left(\zeta_{\infty}^{\pm}-2 \left(M_2 -1\right)\right)^2+2 \left(M_2 -1\right)}+\epsilon},\\
w_{s,1:k \setminus s}^{\pm}&=&
\frac{\sqrt{\hat{\theta}_{P_k,Y,s}-1} \left(\frac{\tilde{u}_{P_k,s,1}}{\sqrt{\hat{\theta}_{P_k,X,1}}},...,\frac{\tilde{u}_{P_k,s,s-1}}{\sqrt{\hat{\theta}_{P_k,X,s-1}}},\frac{\tilde{u}_{P_k,s,s+1}}{\sqrt{\hat{\theta}_{P_k,X,s+1}}},...,\frac{\tilde{u}_{P_k,s,k}}{\sqrt{\hat{\theta}_{P_k,X,k}}}\right) }{\sqrt{\left(\zeta_{\infty}^{\pm}-2 \left(M_2 -1\right)\right)^2+2 \left(M_2 -1\right)}+ \epsilon} .\\
\end{eqnarray*}
\end{minipage}}\vspace{0.3cm}\\
\item It then follows by \cite[Theorem 4.2]{mainarticle2} that 
\begin{eqnarray*}
 \tilde{u}_{P_k,s,t}= \hat{u}_{P_k,X,t,s}+\hat{u}_{P_k,Y,s,t}+O_p\left( \frac{1}{m} \right)+O_p\left( \frac{1}{ \theta m^{1/2}} \right).
\end{eqnarray*}
\end{itemize}
\noindent First, we set $b=+$ or $b=-$ and separate the scalar product in three parts. 
\begin{eqnarray*}
\left\langle w_s^{b},w_t^{b} \right\rangle&=& \underbrace{\sum_{i=s,t} w_{s,i}^{b}w_{t,i}^{b}}_{3)}+\underbrace{\sum_{i\neq s,t}^{k} w_{s,i}^{b}w_{t,i}^{b}}_{1)}+ \underbrace{\sum_{i=k+1}^m w_{s,i}^{b}w_{t,i}^{b}}_{2)}.
\end{eqnarray*}

\begin{itemize}
\item[1) ]
If $k=2$, the second term does not exist. However, if $k>2$, then asymptotically for $i=1,...,k$, $i\neq s,t$,
\begin{eqnarray*}
w_{s,i}^{b}w_{t,i}^{b}&=& \frac{\sqrt{(\hat{\theta}_{P_k,Y,s}-1)(\hat{\theta}_{P_k,Y,t}-1)}}{\left(\zeta_{\infty}^{b}-2\left(M_2-1 \right)\right)^2+2 \left(M_2 -1\right)+\epsilon} \frac{\tilde{u}_{P_k,s,i}}{\sqrt{\hat{\theta}_{P_k,X,t}}}\frac{\tilde{u}_{P_k,t,i}}{\sqrt{\hat{\theta}_{P_k,X,s}}}\\
&=&O_p\left( \frac{1}{m} \right).
\end{eqnarray*}
\item[2) ] By \cite[Theorem 4.2]{mainarticle2},\\
\scalebox{0.75}{
\begin{minipage}{1\textwidth}
\begin{eqnarray*}
\sum_{i=k+1}^m \tilde{u}_{j,i} \tilde{u}_{t,i}&=& \sum_{i=k+1}^m \hat{u}_{\hat{\Sigma}_{Y},j,i} \hat{u}_{\hat{\Sigma}_{Y},t,i} +\sum_{i=k+1}^m \hat{u}_{\hat{\Sigma}_{X},j,i} \hat{u}_{\hat{\Sigma}_{X},t,i} -\sum_{i=k+1}^m \hat{u}_{\hat{\Sigma}_{X},j,i} \hat{u}_{\hat{\Sigma}_{Y},t,i} \\
&& \hspace{2cm} - \sum_{i=k+1}^m \hat{u}_{\hat{\Sigma}_{Y},j,i} \hat{u}_{\hat{\Sigma}_{X},t,i}- \left( \hat{u}_{\hat{\Sigma}_{X},t,j}+\hat{u}_{\hat{\Sigma}_{Y},j,t} \right) \left( \hat{\alpha}^2_{\hat{\Sigma}_{X},j} - \hat{\alpha}^2_{\hat{\Sigma}_{X},t} \right)\\
&& \hspace{2cm}+ O_p\left( \frac{1}{\theta^{1/2} m} \right)+O_p\left( \frac{1}{\theta  m^{1/2}} \right).
\end{eqnarray*}
\end{minipage}}\vspace{0.3cm}\\
Therefore,\\
\scalebox{0.83}{
\begin{minipage}{1\textwidth}
\begin{eqnarray*}
\sum_{i=k+1}^m w_{s,i}^{b}w_{t,i}^{b} &=& \frac{\sqrt{\hat{\theta}_{P_k,Y,s}-1} \sqrt{\hat{\theta}_{P_k,Y,t}-1} }{\sqrt{\left(\zeta_{\infty}^{b}-2 \left(M_2 -1\right)\right)^2+2 \left(M_2 -1\right)}+\epsilon} \sum_{p=k+1}^m \tilde{u}_{P_k,s,p}\tilde{u}_{P_k,t,p}.
\end{eqnarray*}
\end{minipage}}\vspace{0.3cm}\\
\item[3) ] Asymptotically, \\
\scalebox{0.68}{
\begin{minipage}{1\textwidth}
\begin{eqnarray*}
w_{s,s}^{b}w_{t,s}^{b}+w_{s,t}^{b}w_{t,t}^{b}&=&\frac{\zeta_{\infty}^{b}-2 \left(M_2 -1\right)+\epsilon}{\left(\zeta_{\infty}^{b}-2 \left(M_2 -1\right)\right)^2+2 \left(M_2 -1\right) } \left( w_{t,s} + w_{s,t} \right)\\
&=&\frac{\zeta_{\infty}^{b}-2 \left(M_2 -1\right)}{\left(\zeta_{\infty}^{b}-2 \left(M_2 -1\right)\right)^2+2 \left(M_2 -1\right) } 
 \left( \frac{\sqrt{\theta_t}}{\sqrt{\theta_s}}\tilde{u}_{P_k,t,s} +\frac{\sqrt{\theta_s}}{\sqrt{\theta_t}}\tilde{u}_{P_k,s,t} \right)+O_p\left( \frac{1}{ m^{1/2}} \right) \epsilon\\
&=&\frac{\left(\zeta_{\infty}^{b}-2 \left(M_2 -1\right)\right) \sqrt{\theta_s \theta_t}}{\left(\zeta_{\infty}^{b}-2 \left(M_2 -1\right)\right)^2+2 \left(M_2 -1\right) } \\
&&\hspace{0.5cm} \left( \frac{1}{\theta_s}\left(\hat{u}_{P_k,X,s,t}+\hat{u}_{P_k,Y,t,s}\right) +\frac{1}{\theta_t}\left( \hat{u}_{P_k,X,t,s}+\hat{u}_{P_k,Y,s,t} \right) \right) +O_p\left( \frac{1}{m^{1/2}} \right)\epsilon\\
&=&\frac{\left(\zeta_{\infty}^{b}-2 \left(M_2 -1\right) \right) \sqrt{\theta_s \theta_t}}{\left(\zeta_{\infty}^{b}-2 \left(M_2 -1\right)\right)^2+2 \left(M_2 -1\right) } \\
&&\hspace{0.5cm} \left( \frac{1}{\theta_t}  -\frac{1}{\theta_s} \right) \left( \hat{u}_{P_k,X,t,s}+\hat{u}_{P_k,Y,s,t} \right)+o_{p;\theta,m}\left( \frac{1}{m^{1/2}} \right).
\end{eqnarray*}\end{minipage}}\vspace{0.3cm}\\
\end{itemize}

Therefore,  we obtain \\
\scalebox{0.8}{
\begin{minipage}{1\textwidth}
\begin{eqnarray*}
\left\langle w_s^{b},w_t^{b} \right\rangle  
 &=& \frac{\sqrt{\theta_s \theta_t} }{\left(\zeta_{\infty}^b-2 M_2+1 \right)^2+2 \left( M_2 -1\right)} \times \\
 &&\hspace{0.5cm}\Bigg( \sum_{p=k+1}^m \hat{u}_{P_k,Y,s,p}\hat{u}_{P_k,Y,t,p}+\sum_{p=k+1}^m\hat{u}_{P_k,X,s,p}\hat{u}_{P_k,X,t,p} \\
&&\hspace{1.5cm}- 
\sum_{p=k+1}^m \hat{u}_{P_k,Y,s,p}\hat{u}_{P_k,X,t,p}-
\sum_{p=k+1}^m  \hat{u}_{P_k,Y,t,p}\hat{u}_{P_k,X,s,p}  \\ 
&&\hspace{1.5cm}- \left( \hat{u}_{P_k,X,t,s}+\hat{u}_{P_k,Y,s,t}\right) \left(\tilde{\alpha}^2_{t}-\tilde{\alpha}^2_{s}-
\left(\zeta_{\infty}^b-2 \left(M_2 -1\right)\right)
 \left( \frac{1}{\theta_t}  -\frac{1}{\theta_s} \right)  \right) \Bigg)\\
 &&\hspace{0.5cm}+ o_{p;\theta,m}\left( \frac{1}{m^{1/2}} \right). 
\end{eqnarray*}\end{minipage}}\vspace{0.3cm}\\

\paragraph*{Moment}
We separate the formula into three parts 
\begin{enumerate}
\item \scalebox{0.9}{ $ \sum_{p=k+1}^m \hat{u}_{P_k,Y,s,p}\hat{u}_{P_k,Y,t,p} - \hat{u}_{P_k,Y,s,t} \left(\tilde{\alpha}^2_{t}-\tilde{\alpha}^2_{s}-
\left(\zeta_{\infty}^b-2 \left(M_2 -1\right)\right)
 \left( \frac{1}{\theta_t}  -\frac{1}{\theta_s} \right)  \right)$} 
\item \scalebox{0.9}{$ \sum_{p=k+1}^m \hat{u}_{P_k,X,s,p}\hat{u}_{P_k,X,t,p}  + \hat{u}_{P_k,X,s,t} \left(\tilde{\alpha}^2_{t}-\tilde{\alpha}^2_{s}-
\left(\zeta_{\infty}^b-2 \left(M_2 -1\right)\right)
 \left( \frac{1}{\theta_t}  -\frac{1}{\theta_s} \right)  \right) $}
\item \scalebox{0.9}{$ \sum_{p=k+1}^m \hat{u}_{P_k,Y,s,p}\hat{u}_{P_k,X,t,p} + \sum_{p=k+1}^m \hat{u}_{P_k,X,s,p}\hat{u}_{P_k,Y,t,p} $}
\end{enumerate}
Without loss of generality, we present the proof for $s=1$ and $t=2$.\\
In order to compute the moments of the first and second parts, we use the remark of \cite[Theorem 3.3]{mainarticle2},\\
\scalebox{0.87}{
\begin{minipage}{1\textwidth}
\begin{eqnarray*}
&&\hspace{-0.5cm} \hat{u}_{P_2,1,2}  \left(  \frac{ 1}{\theta_1} - \frac{1}{\theta_2}\right)\delta +\sum_{i=3}^m  \hat{u}_{P_2,1,i}  \hat{u}_{P_2,2,i}\\
&&\sim  \Normal \left(0, \frac{\left(1+M_2+\delta \right)^2(M_2-1)+\left(M_4-(M_2)^2\right)-2 \left(1+M_2+\delta \right)\left(M_3-M_2\right)}{\theta_1 \theta_2 m} \right)\\
&&\hspace{1cm} +O_p\left(\frac{1}{\theta m} \right)+O_p\left(\frac{1}{\theta^2 m^{1/2}} \right).
\end{eqnarray*} \end{minipage}}\vspace{0.3cm}\\
\begin{enumerate}
\item \ \\
\scalebox{0.7}{
\begin{minipage}{1\textwidth}
\begin{eqnarray*}
&&\hspace{-0.5cm}\sum_{p=k+1}^m \hat{u}_{P_k,Y,1,p}\hat{u}_{P_k,Y,2,p} - \hat{u}_{P_k,Y,1,2} \left(\tilde{\alpha}^2_{2}-\tilde{\alpha}^2_{1}-
\left(\zeta_{\infty}^b-2 \left(M_2 -1\right)\right)
 \left( \frac{1}{\theta_2}  -\frac{1}{\theta_1} \right)  \right)\\
 && \hspace{0cm} = \sum_{p=k+1}^m \hat{u}_{P_k,Y,1,p}\hat{u}_{P_k,Y,2,p} + \hat{u}_{P_k,Y,1,2} \left(-M_{2,X}-
\zeta_{\infty}^b+2 M_2-1 \right)
 \left( \frac{1}{\theta_1}  -\frac{1}{\theta_2} \right) +O_p\left(\frac{1}{\theta^2 \sqrt{m}}\right) .
\end{eqnarray*} \end{minipage}}\vspace{0.3cm}\\
Using the remark, we set $\delta=-M_{2,X}+2 M_2-1-
\lambda^{b} $, \\
\scalebox{0.5}{
\begin{minipage}{1\textwidth}
 \begin{eqnarray*}
&&\hspace{-1cm}\sum_{p=k+1}^m \hat{u}_{P_k,Y,1,p}\hat{u}_{P_k,Y,2,p} - \hat{u}_{P_k,Y,1,2} \left(\tilde{\alpha}^2_{2}-\tilde{\alpha}^2_{1}-
\left(\zeta_{\infty}^b-2 \left(M_2 -1\right)\right)
 \left( \frac{1}{\theta_2}  -\frac{1}{\theta_1} \right)  \right) +O_p\left(\frac{1}{\theta m} \right)+O_p\left(\frac{1}{\theta^2 m^{1/2}} \right) \\
&& \hspace{-0.5cm} = \Normal \left(0, \frac{\left(1+M_2+M_{2,Y}-M_{2,X}\mp \sqrt{M_2^2-1} \right)^2(M_2-1)+\left(M_4-(M_2)^2\right)-2 \left(1+M_2+M_{2,Y}-M_{2,X}\mp \sqrt{M_2^2-1}  \right)\left(M_3-M_2\right)}{\theta_1 \theta_2 m} \right)
\end{eqnarray*} \end{minipage}}\vspace{0.3cm}\\
\item A similar computation leads to \\
\scalebox{0.56}{
\begin{minipage}{1\textwidth}
\begin{eqnarray*}
&&\hspace{-1cm}\sum_{p=k+1}^m \hat{u}_{P_k,X,1,p}\hat{u}_{P_k,X,2,p}  + \hat{u}_{P_k,X,1,2} \left(\tilde{\alpha}^2_{2}-\tilde{\alpha}^2_{1}-
\left(\zeta_{\infty}^b -2 \left(M_2 -1\right)\right)
 \left( \frac{1}{\theta_2}  -\frac{1}{\theta_1} \right)  \right) +O_p\left(\frac{1}{\theta m} \right)+O_p\left(\frac{1}{\theta^2 m^{1/2}} \right) \\
&& \hspace{-0.5cm} = \Normal \left(0, \frac{\left(1-M_2+2 M_{2,X} \pm \sqrt{M_2^2-1} \right)^2(M_2-1)+\left(M_4-(M_2)^2\right)-2 \left(1-M_2+2 M_{2,X} \pm \sqrt{M_2^2-1}  \right)\left(M_3-M_2\right)}{\theta_1 \theta_2 m} \right).
\end{eqnarray*}\end{minipage}}\vspace{0.3cm}\\
\item We then can easily show that \\
\scalebox{0.76}{
\begin{minipage}{1\textwidth}\begin{eqnarray*}
\sum_{p=k+1}^m \hat{u}_{P_k,Y,1,p}\hat{u}_{P_k,X,2,p} + \sum_{p=k+1}^m \hat{u}_{P_k,X,1,p}\hat{u}_{P_k,Y,2,p} &=& {\rm RV}\left(0, \frac{2 \left( M_{2,X}-1 \right) \left( M_{2,Y}-1 \right)}{\theta_1 \theta_1 m} \right)\\
&&\hspace{1cm}+O_p\left(\frac{1}{\theta m} \right)+O_p\left(\frac{1}{\theta^2 m^{1/2}} \right)
\end{eqnarray*}\end{minipage}}\vspace{0.3cm}\\

Indeed, the covariance between the first and the second term is negligible. This can be shown using the independence between $X$ and $Y$, and \cite[Theorem 3.3]{mainarticle2}.\\
\scalebox{0.87}{
\begin{minipage}{1\textwidth}
\begin{eqnarray*}
&&\cov\left(\sum_{p=k+1}^m \hat{u}_{P_k,Y,1,p}\hat{u}_{P_k,X,2,p}, \sum_{p=k+1}^m \hat{u}_{P_k,X,1,p}\hat{u}_{P_k,Y,2,p} \right) \\
&&\hspace{1cm} = \sum_{p=k+1}^m \E\left[\hat{u}_{P_k,Y,1,p}\hat{u}_{P_k,Y,2,p} \right] \sum_{p=k+1}^m \E\left[\hat{u}_{P_k,X,1,p}\hat{u}_{P_k,X,2,p} \right]
=o_p\left( \frac{1}{\theta^2 m}\right).
\end{eqnarray*}\end{minipage}}\vspace{0.3cm}\\
\end{enumerate}
Because the null covariance between the three parts is easily proven by the independence between $X$ and $Y$, we conclude:\\
\scalebox{0.78}{
\begin{minipage}{1\textwidth}
\begin{eqnarray*}
&&\left\langle w_{i}^b,w_{j}^b \right\rangle \sim {\rm RV}\left(0,\frac{1}{m}
\frac{2 (M_{2,X}-1) (M_{2,Y}-1)+B_{X}^b +B_{Y}^b}{\left((\zeta_{\infty}^\pm-2 M_2+1)^2+2 (M_2-1)\right)^2}
  \right)+O_p\left(\frac{1}{\theta m} \right)+O_p\left(\frac{1}{\theta^2 m^{1/2}} \right),\\
&&B_{X}^+=\left(1-M_2+2 M_{2,X}+\sqrt{M_2^2-1}\right)^2 (M_{2,X}-1)\\
&&\hspace{2.5cm} +2\left(-1+M_2-2M_{2,x}-\sqrt{M_2^2-1}\right)(M_{3,X}-M_{2,X})+(M_{4,X}-M_{2,X}^2),\\
&&B_{Y}^+=\left(1+M_2+M_{2,Y}-M_{2,X}-\sqrt{M_2^2-1}\right)^2 (M_{2,Y}-1)\\
&&\hspace{2.5cm} +2\left(-1-M_2-M_{2,Y}-M_{2,X}-\sqrt{M_2^2-1}\right)(M_{3,Y}-M_{2,Y})+(M_{4,Y}-M_{2,Y}^2),
\end{eqnarray*}\end{minipage}}\vspace{0.3cm}\\
\scalebox{0.78}{
\begin{minipage}{1\textwidth}
\begin{eqnarray*}
&&B_{X}^-=\left(1-M_2+2 M_{2,X}-\sqrt{M_2^2-1}\right)^2 (M_{2,X}-1)\\
&&\hspace{2.5cm} +2\left(-1+M_2-2M_{2,x}+\sqrt{M_2^2-1}\right)(M_{3,X}-M_{2,X})+(M_{4,X}-M_{2,X}^2),\\
&&B_{Y}^-=\left(1+M_2+M_{2,Y}-M_{2,X}+\sqrt{M_2^2-1}\right)^2 (M_{2,Y}-1)\\
&&\hspace{2.5cm} +2\left(-1-M_2-M_{2,Y}+M_{2,X}-\sqrt{M_2^2-1}\right)(M_{3,Y}-M_{2,Y})+(M_{4,Y}-M_{2,Y}^2),
\end{eqnarray*}\end{minipage}}\vspace{0.3cm}\\
where $b=+$ or $b=-$.

\subsubsection{Normality discussion}\label{Asec:HNormal}
Assuming $n_X>>n_Y$, the normality is straightforward to prove using \cite[Theorem 3.1]{mainarticle} and \cite[Theorem 4.2]{mainarticle2}. Nevertheless, when $n_X\sim n_Y$, new marginally normal statistics enter in the formula. These statistics are 
$$\sum_{p=k+1}^m \hat{u}_{P_k,Y,s,p}\hat{u}_{P_k,X,t,p} + \sum_{p=k+1}^m \hat{u}_{P_k,X,s,p}\hat{u}_{P_k,Y,t,p}.$$
Despite this difficulty, the reader can check that assuming large $\theta$, asymptotic joint normality of the entries of $H$ is equivalent to asymptotic joint normality of 
\begin{eqnarray*}
W_{X,s,t},W_{Y,s,t}, \left({W_X}^2\right)_{s,t},\left({W_Y}^2\right)_{s,t} \text{ and } \frac{1}{\sqrt{m}}\sum_{i=k+1}^m W_{X,s,i}W_{Y,t,i}
\end{eqnarray*}
for $s,t=1,2,...,k$. Note that joint normality holds for the first four elements by \cite[Theorem 3.1]{mainarticle}.\\
The part left to the reader is nearly done. \cite[Theorem 3.3 and Lemma 4.1]{mainarticle2} already showed that nearly all the statistics that composed $H$ are functions of the first four elements. A similar proof of \cite[Corollary 3.1]{mainarticle2} shows that for $s,t=1,2,...,k$,
$$\sum_{p=k+1}^m \hat{u}_{P_k,Y,s,p}\hat{u}_{P_k,X,t,p}$$
can be expressed as a function of the statistics.

\pagebreak
\bibliographystyle{imsart-nameyear}
\bibliography{biblio4} 
\addcontentsline{toc}{section}{Bibliography}

\end{document}